\def\UrlSpecials{\do\~{\kern -.15em\lower .7ex\hbox{~}\kern .04em}} \catcode`~=13
\newcommand{\nn}{\nonumber}
\newcommand{\calA}{\mathcal{A}}
\newcommand{\calC}{\mathcal{C}}
\newcommand{\calI}{\mathcal{I}}
\newcommand{\calN}{\mathcal{N}}
\newcommand{\calO}{\mathcal{O}}
\newcommand{\calP}{\mathcal{P}}
\newcommand{\calQ}{\mathcal{Q}}
\newcommand{\calS}{\mathcal{S}}
\newcommand{\ba}{\mathbf{a}}
\newcommand{\bA}{\mathbf{A}}
\newcommand{\bd}{\mathbf{d}}
\newcommand{\bD}{\mathbf{D}}
\newcommand{\bh}{\mathbf{h}}
\newcommand{\bL}{\mathbf{L}}
\newcommand{\bo}{\mathbf{o}}
\newcommand{\bO}{\mathbf{O}}
\newcommand{\bq}{\mathbf{q}}
\newcommand{\bQ}{\mathbf{Q}}
\newcommand{\bS}{\mathbf{S}}
\newcommand{\bU}{\mathbf{U}}
\newcommand{\bv}{\mathbf{v}}
\newcommand{\bx}{\mathbf{x}}
\newcommand{\bX}{\mathbf{X}}
\newcommand{\bY}{\mathbf{Y}}
\newcommand{\bbE}{\mathbb{E}}
\newcommand{\bbP}{\mathbb{P}}
\DeclareMathAlphabet{\mathbsf}{OT1}{cmss}{bx}{n}
\DeclareMathAlphabet{\mathssf}{OT1}{cmss}{m}{sl}
\DeclareSymbolFont{bsfletters}{OT1}{cmss}{bx}{n}
\DeclareSymbolFont{ssfletters}{OT1}{cmss}{m}{n}
\DeclareMathSymbol{\bsfGamma}{0}{bsfletters}{'000}
\DeclareMathSymbol{\ssfGamma}{0}{ssfletters}{'000}
\DeclareMathSymbol{\bsfDelta}{0}{bsfletters}{'001}
\DeclareMathSymbol{\ssfDelta}{0}{ssfletters}{'001}
\DeclareMathSymbol{\bsfTheta}{0}{bsfletters}{'002}
\DeclareMathSymbol{\ssfTheta}{0}{ssfletters}{'002}
\DeclareMathSymbol{\bsfLambda}{0}{bsfletters}{'003}
\DeclareMathSymbol{\ssfLambda}{0}{ssfletters}{'003}
\DeclareMathSymbol{\bsfXi}{0}{bsfletters}{'004}
\DeclareMathSymbol{\ssfXi}{0}{ssfletters}{'004}
\DeclareMathSymbol{\bsfPi}{0}{bsfletters}{'005}
\DeclareMathSymbol{\ssfPi}{0}{ssfletters}{'005}
\DeclareMathSymbol{\bsfSigma}{0}{bsfletters}{'006}
\DeclareMathSymbol{\ssfSigma}{0}{ssfletters}{'006}
\DeclareMathSymbol{\bsfUpsilon}{0}{bsfletters}{'007}
\DeclareMathSymbol{\ssfUpsilon}{0}{ssfletters}{'007}
\DeclareMathSymbol{\bsfPhi}{0}{bsfletters}{'010}
\DeclareMathSymbol{\ssfPhi}{0}{ssfletters}{'010}
\DeclareMathSymbol{\bsfPsi}{0}{bsfletters}{'011}
\DeclareMathSymbol{\ssfPsi}{0}{ssfletters}{'011}
\DeclareMathSymbol{\bsfOmega}{0}{bsfletters}{'012}
\DeclareMathSymbol{\ssfOmega}{0}{ssfletters}{'012}
\newcommand{\Bin}{\mathrm{Bin}}
\newcommand{\bzero}{\mathbf{0}}
\newtheorem{theorem}{Theorem}
\newtheorem{lemma}[theorem]{Lemma}
\newtheorem{proposition}[theorem]{Proposition}
\newtheorem{remark}{Remark}
\newtheorem{data model}{Data Model}
\newcommand{\qednew}{\nobreak \ifvmode \relax \else
      \ifdim\lastskip<1.5em \hskip-\lastskip
      \hskip1.5em plus0em minus0.5em \fi \nobreak
      \vrule height0.75em width0.5em depth0.25em\fi}
\newcommand{\crossprod}{\times}
\newcommand{\prob} {\bbP}
\newcommand{\cond} {\:\vert\:}
\newcommand{\abs}[1] {\left| {#1} \right|}
\newcommand{\norm}[1] {\left\Vert{#1}\right\Vert}
\newcommand{\bracket}[1] {\left[ {#1} \right]}
\newcommand{\curly}[1] {\left\{ {#1} \right\}}
\newcommand{\bigO}{\mathcal{O}}
\newcommand{\softO}{\tilde{\calO}}
\newcommand{\bigOmega}{\Omega}
\newcommand{\littleOmega}{\omega}
\newcommand{\softOmega}{\tilde{\Omega}}
\newcommand{\bigTheta}{\Theta}
\newcommand{\softTheta}{\tilde{\Theta}}
\def\nminp{n'_{min}}
\def\nmin{n_{min}}
\def\nminbound{b}
\def\mumin{\mu_{min}}
\newcommand{\Atyp}[1] {\calA_{#1}}
\newcommand{\InAtyp}[1] {\bA \in \Atyp{#1}}
\newcommand{\NotInAtyp}[1] {\bA \notin \Atyp{#1}}
\newcommand{\colsampprob} {s}
\newcommand{\clustsampprob} {t}
\newcommand{\colinner} {U}
\newcommand{\colinnernorm} {u}
\newcommand{\samprand} {V}
\newcommand{\samp} {v}
\newcommand{\clusteridx} {\calC}
\newcommand{\AOneFull} {\overline{\calA}_1}
\newcommand{\AOneObs} {\calO_1}
\newcommand{\fullobsmat} {\overline{\bA}}
\newcommand{\fullobsvec} {\overline{\ba}}
\newcommand{\fullobsel} {\overline{a}}
\newcommand{\obsmat} {\bO}
\newcommand{\obsvec} {\bo}
\newcommand{\obsel} {o}
\newcommand{\InAOneFull} {\AOneFull}
\newcommand{\NotInAOneFull} {\AOneFull^C}
\newcommand{\obsset} {\calO}
\newcommand{\balRS}{f}
\newcommand{\balSbS}{g}
\newcommand{\densdiff}{\gamma}
\newcommand{\obsprob}{\rho}
\newcommand{\densavg}{\tau}
\newcommand{\suffepsA}{\alpha}
\newcommand{\suffepsB}{\beta}
\newcommand{\pqprimeeps}{\epsilon_1}
\newcommand{\rhoprimeeps}{\epsilon_2}
\newcommand{\blockidx}[1] {\calI_{#1}}
\def\smallq/{small-q}
\def\largeq/{large-q}
\newcommand{\SbsNminpConst}{\zeta'}
\begin{document}
\title{Scalable and Robust Community Detection With Randomized Sketching}

\author{
Mostafa~Rahmani,
Andre Beckus,~\IEEEmembership{Student Member,~IEEE},
Adel Karimian, and
George~K.~Atia,~\IEEEmembership{Senior Member,~IEEE}
\thanks{This work was supported by NSF CAREER Award CCF-1552497.
The University of Central Florida Advanced Research Computing Center provided computational resources that contributed to results reported herein.

M. Rahmani, A. Beckus, and G. K. Atia are with the Department of Electrical and Computer Engineering, University of Central Florida, Orlando, FL 32816 USA.

A. Karimian, deceased, was with the Department of Electrical and Computer Engineering, University of Central Florida, Orlando, FL 32816 USA.

M.~Rahmani and A. Beckus contributed equally to this work.

A conference version of this work was presented at the 52nd Annual Asilomar Conference on Signals, Systems, and Computers, 2018.
}
}

\markboth{}
{Shell \MakeLowercase{\textit{et al.}}: Bare Demo of IEEEtran.cls for Journals}

\maketitle

\begin{abstract}
This article explores and analyzes the unsupervised clustering of large partially observed graphs. We propose a scalable and provable randomized framework for clustering graphs generated from the stochastic block model. The clustering is first applied to a sub-matrix of the graph's adjacency matrix associated with a reduced graph sketch constructed using random sampling. 
Then, the clusters of the full graph are inferred based on the clusters extracted from the sketch using a correlation-based retrieval step.
Uniform random node sampling is shown to improve the computational complexity over clustering of the full graph when the cluster sizes are balanced.
A new random degree-based node sampling algorithm is presented which significantly improves upon the performance of the clustering algorithm even when clusters are unbalanced.
This framework improves the phase transitions for matrix-decomposition-based clustering with regard to computational complexity and minimum cluster size, which are shown to be nearly dimension-free in the low inter-cluster connectivity regime.
A third sampling technique is shown to improve balance by randomly sampling nodes based on spatial distribution.
We provide analysis and numerical results using a convex clustering algorithm based on matrix completion.
\end{abstract}

\begin{IEEEkeywords}
Clustering, Community Detection, Matrix Completion, Randomized Methods
\end{IEEEkeywords}

\IEEEpeerreviewmaketitle

\section{Introduction}
\label{sec:intro}

The identification of clusters within graphs constitutes a critical component of network analysis and data mining, and can be found in a wide array of practical applications ranging from social networking \cite{Mishra:2007:CSN:1777879.1777884} to biology \cite{2016arXiv160904316N}.
Community detection algorithms identify communities or clusters of nodes within which connections are more dense (see \cite{FORTUNATO20161} and references therein).
Many algorithms have been proposed, including classic approaches such as spectral clustering \cite{von2007tutorial}, as well as newer approaches incorporating random walks \cite{pons2005computing} or solution of semidefinite programming problems \cite{hajek2016semidefinite}.

One issue found in many community detection algorithms is the high computational cost, e.g., by requiring the exploration of the entire graph, and storage of the full adjacency matrix in memory.
Various methods have been developed to avoid these issues, for example by utilizing local optimization heuristics with efficient data structures \cite{PhysRevE.70.066111}, or by developing distributed algorithms which reduce the computational and storage burden placed on individual compute nodes \cite{pmlr-v37-yange15}.
We propose a randomized framework which enables efficient processing of large graphs by a different means.
Our framework leverages the intrinsic low-dimensionality of the problem so that the costly clustering operation is only applied to a small subgraph obtained via sampling.
By doing so, existing algorithms can be invoked with improved scalability, thus enabling their use in processing larger graphs than they could otherwise handle.

Another recurring issue is that many algorithms fail to identify small clusters. Here, we propose an algorithm which can handle graphs that are highly unbalanced, i.e., when the cluster sizes are highly disproportionate.

To compare with previous work, we perform analysis and experiments with the Stochastic Block Model (SBM) \cite{HOLLAND1983109,CondonKarp:01}, along with a modification to incorporate partial observations. 
Graphs created from this probabilistic generative model contain a planted set of disjoint clusters (where each node must belong to a cluster). Edges within each cluster are created with probability $p$, and edges between clusters exist with probability $q$.
Any given edge is observed with probability $\obsprob$.
In this work, we establish conditions for exact recovery, i.e., where the probability that the algorithm \emph{exactly} returns the correct planted partition approaches one as the number of nodes $N$ increases.

\subsection{Contributions}
We propose an approach in which a graph clustering algorithm is applied to
a small random subgraph obtained from the original graph through random sampling of the nodes.
We provide an analysis establishing conditions for exact recovery using the proposed approach.
The approach allows flexibility to choose both the graph clustering algorithm and the randomized sampling technique.
Here, we perform the analysis using the low rank plus sparse matrix decomposition clustering technique described in \cite{Chen:2014:CPO:2627435.2670322,NIPS2014_5309}.
Three randomized techniques are proposed, each of which varies in the information required from the full graph. 

Uniform Random Sampling (URS) forms the sample set without any knowledge of the full graph, other than its size.
This simple approach can significantly improve the performance of the clustering, both in terms of computational complexity and memory requirements.
Let $N$ be the number of nodes in the full graph, $r$ be the number of clusters, and $N'$ be the sketch size used by the proposed algorithm, i.e. the number of nodes sampled from the full graph.
Suppose the data is balanced, i.e., the clusters are of size $\bigTheta(N)$\footnote{We say $f = \bigTheta(g)$ if $f = \bigO(g)$ and $f = \bigOmega(g)$.}.
If the edge and observation probabilities in the generative model are constant, then successful clustering can occur with high probability (whp) using a sketch of only $N' = \softO(r^2)$ nodes\footnote{The soft-O notation $\softO(\cdot)$, soft-$\Omega$ notation $\softOmega(\cdot)$, and soft-$\Theta$ notation $\softTheta(\cdot)$ ignore log factors.
Specifically, $f = \softO(g)$, $f = \softOmega(g)$, or $f = \softTheta(g)$ if there exists an $a \in \mathbb{R}$ such that $f(N) = \calO(g(N) \log^a(N))$, $f(N) = \Omega(g(N) \log^a(N))$, or $f(N) = \bigTheta(g(N) \log^a(N))$, respectively.
}.
This reduces the per-iteration computational complexity of the  costly clustering step from $\bigO \left( r N^2 \right)$ to only $\softO \left( r^3 \right)$.
This is a substantial reduction when the number of clusters scales sub-linearly with graph size, with the complexity becoming almost dimension-free when $r$ is order-wise constant.
However, when the clusters are unbalanced in size -- a challenging case in community detection -- a considerable number of samples will still be required. In particular, if the smallest cluster size is $\softTheta(\sqrt{N})$ then $\bigOmega(N)$ samples are required.

To address this issue, we study a sampling method in which the nodes are sampled with probability inversely proportional to their node degrees. This is equivalent to sampling based on the sparsity levels of the columns of the adjacency matrix, hence the appellation `Sparsity-based Sampling' (SbS).
By capturing smaller clusters and producing more balanced sketches, the clustering algorithm can be made even more likely to succeed with the sketch than with the full matrix.
Again, if the clusters are of size $\softOmega(\sqrt{N})$ and the inter-cluster edge probability $q$ diminishes sufficiently fast as $N$ increases, then clustering using SbS is highly likely to succeed, while requiring as few as roughly $\softTheta \left( r^2 \right)$ samples (see Section~\ref{sec:Sbs} for details).
This means that the number of samples is almost independent of the graph size.
In fact, this result holds under the same conditions even if the minimum cluster size is reduced to $\nmin = \softTheta(r)$ with a mild restriction on the number of clusters $r=\softO( \sqrt{N} )$.
This approach comes close to state-of-the-art performance.
As a point of comparison, under the same conditions, the best results  we are aware of require $\nmin = \softOmega(1)$ \cite{Chen:2016:STP:2946645.2946672,cai2015}.
Therefore, our approach comes close to state-of-the-art performance in this regime, while significantly reducing sample complexity to avoid the heavy cost of clustering a large graph.

Finally, we leverage a randomized structure-preserving sketching technique termed Spatial Random Sampling (SRS) \cite{7968311}. With this approach, the individual edges of each node are considered, and sampling is performed based on the spatial distribution of the column vectors in the adjacency matrix.
Numerical results show that this approach exceeds the performance of URS and SbS in many cases.

A table summarizing the results of this paper is shown in Table~\ref{tab:AsymptoticResults}. Additional symbols are defined in the caption with more explanation given in later sections. In all cases, the computational complexity is $\bigO(r N'^2)$ per iteration of the convex optimization algorithm.
\begin{table}
\centering
\caption{
\bf Exact recovery sufficient conditions for techniques described in this paper. Symbols are number of clusters $r$, intra-cluster edge probability $p$, inter-cluster edge probability $q$, observation probability $\rho$, density difference $\densdiff=1-2\max\{1-p,q\}$, and balance $\balRS=N / \nmin$.
The proposed algorithm builds a sketch with $N'$ sampled nodes.
}
\begin{tabular}{| c | c | c |}
\hline
Approach & Minimum cluster size $\nmin$ & Required samples $N'$ \\
 \hline
  Full Graph \cite{Chen:2014:CPO:2627435.2670322}
  & $\bigOmega \left( \frac{ \sqrt{N} \log N }{\sqrt{\obsprob} \densdiff} \right)$
  & $N$
 \\\hline
  URS
  & $\bigOmega \left( \frac{\sqrt{N} \log N }{\obsprob \densdiff^2} \right)$
  & $\bigOmega\left( \frac{ f^2 \log^2 N }{\obsprob \densdiff^2} \right)$
 \\\hline
  SbS (\largeq/)
  & \multirow{2}{*}{$\bigOmega \left( \frac{ r \sqrt{q N} \log^2 N}{\obsprob \densdiff^2} \right)$}
  & \multirow{2}{*}{$\bigOmega \left( \frac{r^2 q^2 f^2 \log^4 N}{\obsprob \densdiff^2} \right)$}
 \\
  $q = \littleOmega \left( \balRS^{-1} \right)$ & &
 \\\hline
  SbS (\smallq/)
  & \multirow{2}{*}{$\bigOmega \left( \frac{r \log^3 N}{\obsprob \densdiff^2} \right)$}
  & \multirow{2}{*}{$\bigOmega \left( \frac{r^2 \log^4 N}{\obsprob \densdiff^2} \right)$}
 \\
  $q = \bigO \left( \balRS^{-1} \right)$ & & 
 \\\hline
\end{tabular}
\label{tab:AsymptoticResults}
\end{table}

\noindent\textbf{Paper organization:}
In Section~\ref{sec:BackgroundRelated}, we provide background on the problem, and discuss related work.
In Section~\ref{sec:ProposedURS}, we present the structure of the proposed approach, which consists of node sampling, subgraph clustering, and full data clustering, along with its theoretical analysis. For ease of exposition, node sampling is first performed using URS.
In Section~\ref{sec:ProposedSbSSRS}, the new SbS sampling method for capturing small clusters is presented and analyzed.
Then, a second node sampling technique is presented.
We discuss the computational complexity of these methods in Section~\ref{sec:ComputComplex}.
Numerical and Experimental Results are shown in Section~\ref{sec:numericalresults}.
We conclude in Section~\ref{sec:conclusion}.

\section{Background and Related Work} \label{sec:BackgroundRelated}

\subsection{Sampling}

Randomized sketching techniques have been instrumental in devising scalable solutions to many high-dimensional unsupervised learning problems, such as robust Principal Component Analysis (PCA)  \cite{rahmani2017high,mackey2011divide,Candes:2011:RPC:1970392.1970395}, outlier detection \cite{rahmani2015randomized,lamport8},  low rank matrix approximation \cite{lamport14,gu1996efficient,nguyen2009fast}, data summarization \cite{rahmani2017spatial,lamport14,ailon2006approximate}, and data clustering \cite{ben2007framework,czumaj2004sublinear}.

The use of sampling naturally extends to graphs as well.
Sampling may be performed in the edges or the nodes, depending on which imposes the dominant cost.
We focus on node sampling (although our model accommodates partial observations, which can be considered as a form of uniform edge sampling).
In \cite{Maiya:2010:SCS:1772690.1772762}, a complexity-reducing scheme is explored, in which nodes are incrementally added to the sample such that at each step the number of nodes adjacent to the sample are maximized. However, this is largely experimental work, and conditions for exact recovery are not analyzed.
The approach in \cite{voevodski2010efficient} also uses random node sampling, but considers entirely different scenarios in which the data feature matrices are known.  Our setup only assumes availability of similarity and dissimilarity information given the partially observed topology, otherwise no data features are available. 
Other works have used node sampling to produce representative sketches of graphs, although not in the specific context of community detection, e.g. \cite{4781123,Leskovec:2006:SLG:1150402.1150479}.
Edge-based approaches include \cite{7511156}, where reduced execution time is demonstrated experimentally by performing uniform edge sampling, and \cite{yun2014communityhu} which demonstrates a reduced edge sampling budget through the use of either URS or active sampling.
These results are altogether different from our node-based approach.

The object of most sampling algorithms is to capture certain features of the full graph in the sketch, for example cut size \cite{7511156} or degree distribution \cite{4781123}.
Similarly, we aim to maintain the SBM features of the full graph in the sketch.
However, one feature which the SbS and SRS sampling algorithms \emph{intentionally seek to modify} is the proportion of the clusters in the underlying planted partition. In particular, the clusters in the sketch should be of equal size, or as close as possible to this ideal. By doing so, the sufficient conditions for successful recovery can be significantly relaxed.
The goal of obtaining a balanced sketch is also considered in \cite{doi:10.1063/1.4712602}.  However, they do  not consider this in the context of clustering, and no analysis or guarantees are provided.
In \cite{JMLR:v16:ailon15a}, the clustering algorithm iteratively finds and removes large clusters to improve the balance of the remaining small clusters. However, their technique does not improve the computational complexity as the full graph still needs to be clustered initially. Additionally, although they incorporate sampling, it is in the form of edge sampling via partial observations, rather than node sampling.

\subsection{Community Detection Algorithms}

Although there exists a vast array of algorithms for community detection, here we will focus on algorithms that have been analyzed with the SBM.
The notable work in \cite{959929} proposed a spectral algorithm and provided rigorous recovery conditions using perturbation analysis.
More recently, \cite{PhysRevE.84.066106} used physics-based arguments to conjecture a simple necessary and sufficient condition for weak recovery (i.e., where there is a positive correlation between the estimated and actual communities whp) for the regime where $p,q = \bigO(N^{-1})$. In the two-cluster case, the sufficient condition of the conjecture was proven by \cite{Mossel2017,Massoulie:2014:CDT:2591796.2591857} and the necessary condition by \cite{Mossel2015}.

Many clustering algorithms optimize a global objective function over the entire graph.
One approach, correlation clustering, explicitly minimizes the total number of disagreements between the actual graph and a partitioning of the graph into disjoint cliques \cite{Bansal2004}.
An advantage of this formulation is that it can be solved without prior knowledge of the number of clusters.
We work with a recent convex algorithm which solves this problem by decomposing the graph's adjacency matrix as a sum of low rank and sparse components \cite{Chen:2014:CPO:2627435.2670322,NIPS2014_5309}.
The cluster structure is captured by the low rank component in the form of non-overlapping full cliques, and the sparse component indicates missing edges within clusters and extra edges across clusters.
The validity of the low rank plus sparse structure emerges from the fact that $p \ge \frac{1}{2} \ge q$.
The decomposition provides a good foundation for analyzing the sampling technique. The SBM-based analysis in \cite{Chen:2014:CPO:2627435.2670322} provides a direct tradeoff between the SBM parameters: minimum block size, $p$, $q$, and $\obsprob$. The algorithm therein, if successful, is guaranteed to produce not only the disagreement minimizer, but also the correct clustering whp.

For comparison, we employ another class of convex optimization approaches which rely on relaxed Maximum Likelihood Estimators (MLEs).
One algorithm, \cite{cai2015} which we refer to as (Cai,2015) in the tables and figures, uses the alternating direction method of multipliers (ADMM) algorithm. This algorithm can handle more general graphs than the SBM, in that it accommodates outlier nodes. However, one disadvantage of this algorithm is that it requires that the number of clusters be known or estimated.
Another approach, \cite{NIPS2016_6574} referred to as (Jalali,2016), proposes a different convex optimization problem from \cite{cai2015}.
Critically, it requires strong apriori knowledge of the structure of the graph.
Specifically, it requires that the quantity $\sum_{k=1}^r n_k^2$ be known, where $n_k$ is the size of the $k\textsuperscript{th}$ cluster.
No algorithm is provided, but the problem can be solved using standard techniques.
The work of \cite{Chen:2016:STP:2946645.2946672}, which we call (Chen,2016), uses an optimization problem which is a special case of that in \cite{NIPS2016_6574}, in that it assumes all clusters are of the same size.

The algorithms of \cite{cai2015,NIPS2016_6574,Chen:2016:STP:2946645.2946672} do not explicitly address partial observations.
For these algorithms, we can simply treat the unobserved entries as having no edge.
The optimization problem of \cite{Chen:2014:CPO:2627435.2670322}, on the other hand, directly exploits knowledge of which graph entries are observed.
We note, however, that the algorithm does not exploit any additional side information beyond the observation status of each edge.
When side information is available, it can be used to significantly improve results.
For example, in \cite{2018arXiv180908353I}, tensor data is used as side information to improve community detection in partially observed coupled graph-tensor factorization (CGTF) models.

\subsection{Performance and Limitations of Existing Algorithms} \label{sec:LimitsAlg}

One factor in determining the success of an algorithm is the density difference $\densdiff=1-2\max\left\{1-p,q\right\}$ of the SBM. As this gap decreases, the intra-cluster and inter-cluster edges become harder to distinguish.  Another factor is the minimum cluster size $\nmin$; as the smallest cluster(s) get proportionally smaller, they become easier to ``lose'' in the noise. Here, the value of $q$, the density of noisy edges found between clusters, also plays an important role.

Assuming that other parameters do not scale with $N$, it is typically considered sufficient to have $\nmin=\softOmega(\sqrt{N})$.
This limit appears in the analytic sufficient conditions of many algorithms (see \cite{6873307} for a review of algorithms having this limit).
However, when considering some special cases, a more nuanced picture emerges.
It is shown in \cite{Chen:2016:STP:2946645.2946672} that for the case of equal-sized clusters, if $q = \bigO\left( \frac{\log^2 N}{N} \right)$ then a polynomial-time algorithm can achieve exact recovery whp for $\nmin=\Omega(\log N)$.
A similar result can be found in \cite{NIPS2016_6574} for the more general case of arbitrarily-sized clusters.
The authors of \cite{NIPS2016_6574} propose and analyze another polynomial-time convex algorithm which performs well under a wide variety of cluster scalings. In particular, they provide an example where clusters are as small as $\bigTheta ( \sqrt{\log N} )$, although this requires additional conditions on the size and number of clusters. Recall that this algorithm also requires additional side information in the form of the sum of squares of the cluster sizes.
A table comparing the cluster size lower bounds of this paper to other works is shown in Table~\ref{tab:AsymptoticComparison}.
For an extensive list of algorithms and sufficient conditions, see Table 1 of \cite{Chen:2014:CPO:2627435.2670322}.
Though the algorithms listed in the table are polynomial-time algorithms, at each iteration they all require a Singular Value Decomposition (SVD), which scales quadratically with the graph size.
A distinguishing feature of our work is that it can reduce the time of the SVD while coming close to state-of-the-art minimum cluster size requirements.

The authors of \cite{JMLR:v16:ailon15a} demonstrate that even if capturing the small clusters is difficult, finding the large clusters is typically easy. By finding and removing the large clusters, the proportions of the small clusters become more favorable for successful recovery. However, their iterative algorithm still needs to be run initially on the full graph, affording no computational advantage, and they do not provide sufficient conditions for exact recovery of an entire graph (although a weaker result is provided). This idea of improving the proportions of the graph can be found in our approach as well, but in our case the costly clustering step is only performed once, and performed on the sketch rather than the full graph.

\begin{table}
\centering
\caption{\bf Minimum cluster size requirements in literature.  For purposes of comparison, we assume density gap $p-q=\densdiff$. For the algorithms of \cite{Chen:2016:STP:2946645.2946672,cai2015,NIPS2016_6574}, unobserved entries are treated as a zero, so we make the substitutions $p \rightarrow \rho p$ and $q \rightarrow \rho q$.
}
\begin{tabular}{| c| c | c | }
\hline
Algorithm & $\nmin$ (\largeq/) & $\nmin$ (\smallq/) \\
 \hline
  (Chen,2014) \cite{Chen:2014:CPO:2627435.2670322}
  & $\bigOmega \left( \frac{ \sqrt{N} \log N }{\sqrt{\obsprob} \densdiff} \right)$
  & (same as \largeq/)
 \\\hline
  (Cai,2015) \cite{cai2015}
  & \multirow{2}{*}{$\bigOmega \left( \frac{\log N}{\obsprob \densdiff^2} \vee \frac{\sqrt{q N}}{\sqrt{\obsprob}\densdiff} \right)$}
  & \multirow{2}{*}{$\bigOmega \left( \frac{\log N}{\obsprob \densdiff^2} \right)$}
 \\
  (Chen,2016) \cite{Chen:2016:STP:2946645.2946672} & &
 \\\hline
  (Jalali,2016) \cite{NIPS2016_6574}
  & \!\!$\bigOmega \left( \frac{\log N}{\obsprob \densdiff^2} \vee \frac{ \sqrt{n_{max}} \vee \sqrt{q N} }{\sqrt{\obsprob} \densdiff} \right)$\!\!
  & \!\!$\bigOmega \left( \frac{\log N}{\obsprob \densdiff^2} \vee \frac{ \sqrt{n_{max}} }{\sqrt{\obsprob} \densdiff} \right)$\!\!
 \\\hline
  This paper (URS)
  & $\bigOmega \left( \frac{\sqrt{N} \log N }{\obsprob \densdiff^2} \right)$ & (same as \largeq/)
 \\\hline
  This paper (SbS)
  & $\bigOmega \left( \frac{r \sqrt{q N} \log^2 N}{\obsprob \densdiff^2} \right)$
  & $\bigOmega \left( \frac{r \log^3 N}{\obsprob \densdiff^2} \right)$
 \\\hline
\end{tabular}
\label{tab:AsymptoticComparison}
\end{table}

\subsection{Data Model}

The adjacency matrix is assumed to follow a variant of the Planted Partition/Stochastic Block Model \cite{HOLLAND1983109,CondonKarp:01}, which allows for partial observations.
This data model is defined as follows \cite{Chen:2014:CPO:2627435.2670322}.

\begin{data model}
The graph consists of $N$ nodes partitioned into $r$ clusters.  Any two nodes within a cluster are connected with probability $p$, and two nodes belonging to different clusters are connected with probability $q$.  Any given edge is observed with probability $\obsprob$. Given an adjacency matrix $\bA$, the decomposition into the symmetric low rank $\bL$ and sparse $\bS$ matrices takes the form $\bA \!=\! \bL + \bS$.
Matrix $\bL$ captures the connectivity of nodes within a cluster and has no inter-cluster edges, while $\bS$ indicates missing intra-cluster and extra inter-cluster edges.
\label{datamodel}
\end{data model}

We note that for an adjacency matrix $\bA$ with ideal cluster structure, i.e., where $p=1, q=0$, and $\obsprob = 1$, we have $\bL=\bA$ and $\bS=\bzero$.
The diagonal elements of $\bA$ are assumed to be fully observed and set to all ones for convenience.

\subsection{Notation}
Vectors and matrices are denoted using bold-face lower-case and upper-case letters, respectively.
Given a vector $\ba$, its $\ell_p$-norm is denoted by $\norm{\ba}_p$ and $a(i)$ its $i\textsuperscript{th}$ element.
Given matrix $\bA$, its $\ell_1$-norm (the sum of the absolute values of its elements) is denoted by $\norm{\bA}_1$, its nuclear norm (the sum of its singular values) is denoted by $\norm{\bA}_*$, and its $i\textsuperscript{th}$ column is denoted by $\ba_i$.
The operator $\Omega_{\text{obs}}(\cdot)$ returns the observed values of its vector or matrix argument.

\section{Structure of the Proposed Approach} \label{sec:ProposedURS}

Algorithm~\ref{alg:mainalgorithm} shows the proposed approach.
It consists of three main steps: node sampling, subgraph clustering, and full data clustering.
We will separately describe the three main steps of the algorithm.
In this section, we present the simplest sampling method, URS, to clearly illustrate the structure of the framework.
Analysis will demonstrate the capability of this method to improve computational complexity of the clustering algorithm, but will also reveal the weakness of this sampling method when presented with graphs containing small clusters.
Two improved sampling methods, SbS and SRS, will be presented in Section~\ref{sec:ProposedSbSSRS}.
Proofs are deferred to Appendix~\ref{sec:ProofSRS}.

\subsection{Node Sampling}
A key advantage to the proposed approach is that it does not apply the clustering algorithm to the full data, but rather to a sketch, i.e., a subgraph generated from a set of $N'$ randomly sampled nodes.
We denote the adjacency matrix of the sketch by $\bA' \in \mathbb{R}^{N' \times N'}$.
The parameter $\balRS=N/\nmin$ indicates how balanced the graph $\bA$ is. When $f=r$, the data is perfectly balanced, and larger values of $f$ indicate a lack of balance.

With URS, we sample each node with equal probability, and so the probability of sampling from the smallest cluster is exactly equal to $\balRS^{-1}$.
Therefore, we can expect that any imbalance in the full matrix will carry over to the sketch, thus requiring more samples to ensure enough columns are sampled from the smallest cluster.
The following lemma shows that the sufficient number of random samples to obtain at least $\nminbound$ sampled nodes from each cluster whp is in fact linear with $\balRS$.
\begin{lemma} [Sampling Size for URS] \label{lem:Suff_Optimality_Partial}
Suppose that $\bA'$ is produced by Step 1 (sampling) of Algorithm~\ref{alg:mainalgorithm} using Random Sampling of $N'$ columns.
Given an arbitrary constant $\nminbound \in (0,\nmin]$, if
\begin{align} \label{eqn:Suff_Optimality_Partial_RS_uniform}
N
\ge N'
\ge 2 \balRS \left[\nminbound + \log \left( 2 r N \right)\right],
\end{align}
then $\nminp > \nminbound$ with probability at least $1-N^{-1}$, where $\nminp$ is the size of the smallest cluster in the sketch matrix.

\end{lemma}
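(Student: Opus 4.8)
The plan is to analyze, for each cluster $k$, the number $n_k'$ of sampled columns drawn from cluster $k$, and to show that the smallest of these exceeds $\nminbound$ with the claimed probability by a union bound over clusters. Since URS draws $N'$ columns uniformly (I will treat this as sampling with replacement, or equivalently argue the without-replacement case is at least as concentrated), each sampled column lands in cluster $k$ independently with probability $n_k/N$, so $n_k' \sim \Bin(N', n_k/N)$. The cluster of smallest size $\nmin$ has sampling probability $\nmin/N = \balRS^{-1}$, and this is the worst case; any other cluster has a larger success probability and hence its count concentrates at least as well above $\nminbound$. So it suffices to bound $\bbP(n_{\min}' \le \nminbound)$ where $n_{\min}' \sim \Bin(N', \balRS^{-1})$, and then multiply by $r$ (or by the number of clusters at minimum size).

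The core estimate is a lower-tail Chernoff bound for the binomial. With $\mu = \bbE[n_{\min}'] = N'/\balRS \ge 2(\nminbound + \log(2rN))$ by the hypothesis \eqref{eqn:Suff_Optimality_Partial_RS_uniform}, I want $\bbP(n_{\min}' \le \nminbound) \le N^{-1}/r$. The cleanest route is the bound $\bbP(n_{\min}' \le \mu/2) \le e^{-\mu/8}$ (the standard multiplicative Chernoff bound with $\delta = 1/2$). Since $\mu \ge 2\nminbound + 2\log(2rN) \ge 2\nminbound$, we have $\mu/2 \ge \nminbound$, so the event $\{n_{\min}' \le \nminbound\}$ is contained in $\{n_{\min}' \le \mu/2\}$; and since $\mu \ge 2\log(2rN)$ gives... actually I need $\mu/8 \ge \log(2rN)$, i.e. $\mu \ge 8\log(2rN)$, which is a factor of $4$ short of what the hypothesis provides. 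So I would instead use a sharper choice, e.g. the bound $\bbP(n_{\min}' \le (1-\delta)\mu) \le e^{-\delta^2 \mu/2}$ and pick $\delta$ so that $(1-\delta)\mu = \nminbound$; then $\delta\mu = \mu - \nminbound \ge \mu/2 \ge \log(2rN)$ (using $\mu \ge 2\nminbound$), and also $\delta \ge 1/2$, so $\delta^2\mu/2 \ge \delta\mu/4 \ge \frac{1}{4}\log(2rN)$ — still a constant factor short. The honest fix is to track constants carefully: use $\bbP(n_{\min}' \le \nminbound) \le \exp\!\big(-\tfrac{(\mu-\nminbound)^2}{2\mu}\big)$ and verify that $\mu \ge 2(\nminbound + \log(2rN))$ makes $\tfrac{(\mu-\nminbound)^2}{2\mu} \ge \log(2rN) + \log r$, wait, I just need $\ge \log(rN)$ so that $r$ copies sum to $\le N^{-1}$. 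Since $\mu - \nminbound \ge \tfrac{\mu}{2} + \log(2rN)$ and $\mu - \nminbound \ge \mu/2$, we get $\tfrac{(\mu-\nminbound)^2}{2\mu} \ge \tfrac{(\mu-\nminbound)}{2}\cdot\tfrac{\mu/2}{\mu}\cdot 2 = \ldots$ — this elementary manipulation is exactly where the factor $2$ in the hypothesis is designed to work out, so I would carry it through cleanly there.

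The main obstacle, then, is not conceptual but bookkeeping: pinning down the right Chernoff variant and verifying the constant $2$ in \eqref{eqn:Suff_Optimality_Partial_RS_uniform} suffices to push the per-cluster failure probability below $1/(rN)$, and separately confirming that sampling without replacement only helps (this follows from the negative-association / Hoeffding reduction for sampling without replacement, or can be sidestepped by coupling). One should also note that the union bound is over at most $r$ clusters but the hypothesis carries $\log(2rN)$, giving the extra room; and the final probability $1 - N^{-1}$ (rather than $1 - r/(rN) = 1 - N^{-1}$) comes out exactly. I would organize the write-up as: (i) reduce to the smallest cluster and to $\Bin(N', \balRS^{-1})$; (ii) state the Chernoff lower-tail bound; (iii) substitute the hypothesis on $N'$ and verify the exponent exceeds $\log(rN)$; (iv) union-bound over clusters.
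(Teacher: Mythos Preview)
Your approach is correct and essentially matches the paper's: lower-tail Chernoff on each $n_i'$ followed by a union bound over clusters. The one technical difference is the reduction away from sampling without replacement. The paper passes to a \emph{Bernoulli} model (each node included independently with probability $N'/N$, so $n_i' \sim \Bin(n_i, N'/N)$) and pays a factor $2$ in the failure probability via a standard comparison; you instead pass to sampling \emph{with replacement} ($n_i' \sim \Bin(N', n_i/N)$) and invoke the Hoeffding/negative-association fact that without replacement is at least as concentrated. Both are valid, and yours in fact avoids the extra factor of $2$. The Chernoff form you settle on last, $\exp\!\big(-(\mu-\nminbound)^2/(2\mu)\big)$, is exactly what the paper uses (written there as $\exp(-\nminbound(\xi-1)^2/(2\xi))$ with $\mu=\xi \nminbound$), and the constants do close: with $L=\log(2rN)$ and $\mu \ge 2\nminbound+2L$ one gets
\[
\frac{(\mu-\nminbound)^2}{2\mu}\;\ge\;\frac{(\nminbound+2L)^2}{4(\nminbound+L)}\;\ge\;L,
\]
since $(\nminbound+2L)^2-4L(\nminbound+L)=\nminbound^2\ge 0$. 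So the bookkeeping you flagged as the remaining obstacle works out cleanly.
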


\subsection{Robust Subgraph Clustering}
Next, we solve \eqref{eq:convex_sketch} by applying the low rank plus sparse decomposition algorithm of \cite{Chen:2014:CPO:2627435.2670322} to cluster the sketch $\bA'$.
This approach is valid since the sketch can be decomposed as $\bA' \!=\! \bL' \!+\! \bS'$, where $\bL'$ and $\bS'$ are the sketches of $\bL$ and $\bS$, respectively (constructed using the same index set $\calI$).
Algorithm 1 of \cite{Chen:2014:CPO:2627435.2670322} can be used to search for improved solutions. This algorithm starts with initial value $\lambda = \frac{1}{32 \sqrt{N' \overline{\obsprob}}}$ and iteratively solves \eqref{eq:convex_sketch} while varying $\lambda$ until a valid solution is obtained.
Note that convergence is not guaranteed if this algorithm is invoked, and theoretical results assume it is not used.
Lemma \ref{lm:completion} provides a sufficient condition for this step to exactly recover $\bL'$.
\begin{lemma} [Sketch decomposition] 
\label{lm:completion}
Suppose the adjacency matrix $\bA$ follows Data Model~\ref{datamodel}.
Let $\zeta = \frac{C \log^2 N}{ \: \obsprob \densdiff^2}$, where $C$ is a constant real number.  If 
\begin{align}
N' &\le \min \left\{ \nmin^2 / \zeta \:,\: N \right\}, \label{eqn:21973}
\\
N' &\ge 4 \balRS \left[ \balRS \zeta + \log \left( 2 r N \right) \right], \label{eq:lemm2_suffRS}
\end{align}
then the optimal point of \eqref{eq:convex_sketch} yields the exact low rank component of $\bA'$ with probability at least $1-c {N}^{-10}- N^{-1}$, where $c$ is a constant real number.
\end{lemma}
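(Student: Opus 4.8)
The plan is to derive this lemma by composing Lemma~\ref{lem:Suff_Optimality_Partial} with the exact-recovery theorem for the convex decomposition program of \cite{Chen:2014:CPO:2627435.2670322}, applied to the sketch rather than to $\bA$. The enabling observation is that Random Sampling picks the index set $\calI$ knowing only $N$, not the edges of $\bA$; hence, conditionally on $\calI$, the sketch $\bA' = \bA_{\calI,\calI}$ is itself distributed as an $N'$-node instance of Data Model~\ref{datamodel} with the same $p$, $q$, $\obsprob$, and with $\bL' = \bL_{\calI,\calI}$ its ideal (block-clique) low-rank component and $\bS' = \bS_{\calI,\calI}$. The theorem of \cite{Chen:2014:CPO:2627435.2670322} therefore applies to \eqref{eq:convex_sketch} on $\bA'$: its optimal point equals $(\bL',\bS')$ -- in particular it recovers $\bL'$ -- with failure probability at most $c N^{-10}$, provided the sketch's smallest cluster obeys $(n_{min}')^2 \ge \zeta_0 N'$ for a constant $\zeta_0 = C_0 \log^2 N/(\obsprob\densdiff^2)$. (This is the condition $n_{min}' \gtrsim \sqrt{N'}\,\log N'/(\sqrt{\obsprob}\,\densdiff)$ of \cite{Chen:2014:CPO:2627435.2670322}, strengthened by replacing $\log N'$ with $\log N$: this over-provisioning of the sketch's clusters relative to its own size is precisely what makes the underlying concentration estimate hold with a tail polynomial in $N$, and is the reason $\zeta$ carries the factor $\log^2 N$.)

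The second step is constant bookkeeping. Put $\zeta = 4\zeta_0$ -- this is the $\zeta$ of the statement, with $C = 4C_0$ -- and invoke Lemma~\ref{lem:Suff_Optimality_Partial} with cushion $\nminbound = \tfrac12\sqrt{\zeta N'}$, so that on its success event $n_{min}' > \nminbound$ we obtain $(n_{min}')^2 > \tfrac14 \zeta N' = \zeta_0 N'$, the hypothesis above. It remains to check that \eqref{eqn:21973} and \eqref{eq:lemm2_suffRS} supply the hypothesis $N \ge N' \ge 2\balRS[\nminbound + \log(2rN)]$ of Lemma~\ref{lem:Suff_Optimality_Partial}. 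The bound $N' \le N$ is part of \eqref{eqn:21973}. For the other inequality, \eqref{eq:lemm2_suffRS} gives both $N' \ge 4\balRS^2\zeta$ and $N' \ge 4\balRS\log(2rN)$ separately (each summand being nonnegative); the first implies $\sqrt{N'} \ge 2\balRS\sqrt\zeta$, hence $\balRS\sqrt{\zeta N'} \le N'/2$, while the second implies $2\balRS\log(2rN) \le N'/2$, and adding these gives $2\balRS[\nminbound + \log(2rN)] = \balRS\sqrt{\zeta N'} + 2\balRS\log(2rN) \le N'$, as required. The remaining part of \eqref{eqn:21973}, $N' \le \nmin^2/\zeta$, says exactly $\nminbound \le \nmin/2$, which keeps $\{n_{min}' > \nminbound\}$ non-degenerate and in particular forces all $r$ clusters to be present in $\bA'$. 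A union bound over the failure event of Lemma~\ref{lem:Suff_Optimality_Partial} (probability $\le N^{-1}$) and that of the cited theorem on $\bA'$ (probability $\le cN^{-10}$) then yields the claimed $1 - cN^{-10} - N^{-1}$.

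\textbf{Where the difficulty lies.} The lemma is essentially a composition, so the effort is in the details. First, one must argue cleanly that an induced subgraph of the SBM on an edge-oblivious node set is again an SBM on $N'$ nodes with unchanged $p$, $q$, $\obsprob$ -- painless for Random Sampling, but it genuinely relies on the sampling rule ignoring $\bA$ (exactly what will \emph{not} hold for the degree-based SbS rule, treated separately). Second, and most delicate, is reconciling the constants -- the free $C$ in $\zeta$, the $2$ and $4$ in Lemma~\ref{lem:Suff_Optimality_Partial} and in \eqref{eq:lemm2_suffRS}, and the implied constant in the cluster-size condition of \cite{Chen:2014:CPO:2627435.2670322} -- so that the single choice $\nminbound = \tfrac12\sqrt{\zeta N'}$ turns \eqref{eq:lemm2_suffRS} into precisely the hypothesis of Lemma~\ref{lem:Suff_Optimality_Partial}; the split $4\balRS[\balRS\zeta + \log(2rN)] = 4\balRS^2\zeta + 4\balRS\log(2rN)$ used above is what makes this go through. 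Third, one must ensure the $N^{-10}$ tail really is polynomial in $N$ and not merely in $N'$, which is where the $\log^2 N$ (rather than $\log^2 N'$) built into $\zeta$ earns its keep, possibly at the cost of enlarging $C$ to absorb a $\poly(N')$ prefactor.
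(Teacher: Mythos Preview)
Your proposal is correct and follows essentially the same route as the paper: invoke Lemma~\ref{lem:Suff_Optimality_Partial} with a cushion $\nminbound$ proportional to $\sqrt{\zeta N'}$, then apply Theorem~4 of \cite{Chen:2014:CPO:2627435.2670322} to the sketch, and union-bound the two failure events. Your constant bookkeeping (setting $\zeta = 4\zeta_0$ and $\nminbound = \tfrac12\sqrt{\zeta N'}$, then splitting \eqref{eq:lemm2_suffRS} into its two summands) is in fact cleaner than the paper's terse verification of \eqref{eqn:completion2c}, and your explicit observation that edge-oblivious sampling makes $\bA'$ again an SBM instance---together with your care about the $N^{-10}$ versus $N'^{-10}$ tail---are points the paper leaves implicit.
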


\begin{remark} \label{rem:RSDecomp}

The sufficient number of samples is $\softOmega\left( \frac{\balRS^2}{\obsprob \densdiff^2} \right)$.
However, if the graph is perfectly balanced, i.e., $\nmin = N/r$, then $N'=\softOmega\left( \frac{r^2}{\obsprob \densdiff^2} \right)$.
Thus, for balanced graphs, the sufficient number of randomly sampled nodes is virtually independent of the size of the graph.
The density difference $\densdiff$ and observability $\obsprob$ are also critical factors determining the number of samples.  A decrease in either parameter will drive the sufficient number of samples higher.
The sufficient condition \eqref{eqn:21973} also includes two upper bounds. First, we cannot sample more columns than there are in the full matrix.  Second, we need to avoid the following issue: by sampling a large number of columns, the supply of columns from the smallest cluster may be exhausted. Such an event will lead to \emph{more} imbalance in the sketch as sampling continues, making decomposition of the sketch even less likely to succeed.

From conditions \eqref{eqn:21973} and \eqref{eq:lemm2_suffRS}, in order for there to be a gap between the upper and lower bounds as $N \rightarrow \infty$, we need $\nmin = \softOmega\left( \frac{\sqrt{N} \log N }{\sqrt{\obsprob} \densdiff^2} \right)$.
Therefore, although URS may improve the sample and computational complexity (see Section~\ref{sec:ComputComplex} for details), the sufficient condition on cluster size remains about the same as for full-scale decomposition (in fact, because of the retrieval step, URS gives slightly poorer performance).

\end{remark}

\subsection{Full Data Clustering}
The third step infers the partition of the full graph. This is accomplished by checking the edges of each node in the full graph and finding the cluster that the edge patterns have the strongest correlation with.
\begin{algorithm}
\caption{Efficient cluster retrieval for full graph}
{\footnotesize
\textbf{Input}: Given adjacency matrix $\bA \in \mathbb{R}^{N \times N} $\\
\textbf{1. Random Node Sampling:} \label{alg:SampleStep}

\textbf{1.1} Form the set $\calI$ containing the indices of $N'$ randomly sampled nodes without replacement.
Sampling is accomplished using either URS, SbS, or SRS.
For sampling with SRS, use Algorithm~\ref{alg:SRSnodesampling} as presented in Section~\ref{sec:SRS}.

\textbf{1.2} Construct $\bA' \in \mathbb{R}^{N' \times N'}$ as the sub-matrix of $\bA$ corresponding to the sampled nodes.  

\smallbreak

\textbf{2. Subgraph Clustering:}

\textbf{2.1} Define $\bL'_*$ and $\bS'_*$
as the optimal point of
\begin{eqnarray}
\label{eq:convex_sketch}
\begin{aligned}
& \underset{\dot{\bL}',\dot{\bS}'}{\min} & & \lambda \| \dot{\bS}' \|_1  + \| \dot{\bL}' \|_* \quad\\
& \text{subject to} & &     \Omega_{\text{obs}} \left( \dot{\bS}' + \dot{\bL}' \right) = \Omega_{\text{obs}}(\bA'). \\
\end{aligned}
\end{eqnarray}
Solve optimization problem \eqref{eq:convex_sketch} with $\lambda = \frac{1}{32 \sqrt{N' \overline{\obsprob}}}$, where $\overline{\obsprob}$ is the empirical observation probability of the graph instance.

Optionally, Algorithm 1 of \cite{Chen:2014:CPO:2627435.2670322} may be used to solve optimization problem \eqref{eq:convex_sketch}, while iteratively searching for an improved value of $\lambda$ that yields better results.

\textbf{2.2} Cluster the subgraph corresponding to $\bA'$ using $\bL'_{*}$ (we use Spectral Clustering in our experiments). 

\textbf{2.3} 
If $\hat{r}$ is the number of detected clusters, define $\{ \bv_i \in \mathbb{R}^{N'} \}_{i=1}^{\hat{r}}$ as the collection of characteristic vectors of the actual clusters in the sketch matrix, i.e. the set of vectors which span the column space of $\bL'_{*}$.

\smallbreak
\textbf{3. Full Data Clustering:}\\ 
Define ${\ba_k}_{\calI} \in \mathbb{R}^{N'}$ as the vector of elements of $\ba_k$ ($k\textsuperscript{th}$ column of $\bA$) indexed by set $\calI$. Let $\hat{n}_i'$ be the number of elements in the $i\textsuperscript{th}$ cluster of the sketch (as identified in Step 2 of the algorithm). \\
\textbf{For} $k$ from 1 to $N$\\
\indent \quad $u = \arg \max_{i} \frac{ \left({\ba_k}_{\calI}\right)^T \bv_i}{\hat{n}_i'} $ \\
\indent \quad Assign the $k\textsuperscript{th}$ node to the $u\textsuperscript{th}$ cluster. \\
\textbf{End For}

}
\label{alg:mainalgorithm}
\end{algorithm}
The following lemma gives a sufficient condition for the full data clustering to succeed whp. This condition is given in terms of $\nminp$, the size of the smallest cluster in the \emph{sketch} matrix.
\begin{lemma}[Retrieval] \label{lem:Retrieve}

If
\begin{align} \label{eqn:retrieval_suff}
n_{min}' \ge \frac{ 8 p }{\obsprob \densdiff^2} \log\left(rN^2\right),
\end{align}
then step 3 (retrieval) of Algorithm~\ref{alg:mainalgorithm} will exactly reconstruct matrix $\bL$ with probability at least $1-N^{-1}$.
\end{lemma}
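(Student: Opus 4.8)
The plan is to view the retrieval step as extending a correctly clustered sketch to the whole graph. First I would condition on Step~2 having succeeded, so that the characteristic vectors $\{\bv_i\}$ returned in Step~2.3 are exactly the $0/1$ indicator vectors of the true clusters restricted to the sampled index set $\calI$, and $\hat{n}_i'$ equals the true number of nodes of cluster~$i$ in the sketch. Equivalently, I fix $\calI$ and the true partition; then the $\bv_i$ and $\hat{n}_i'$ are deterministic and the only randomness that remains is the edge set $\{A_{kl}:k\in\{1,\dots,N\},\,l\in\calI\}$. This reduction is legitimate because the exactly recovered low-rank component $\bL'_*$ is, up to relabeling, block-diagonal with all-ones blocks (one per cluster appearing in the sketch), and the hypothesis in particular gives $n_{min}'\ge1$, so that all $r$ clusters appear and the clustering in Step~2.2 recovers the sketch partition exactly. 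Since reconstructing $\bL$ is equivalent to recovering the partition of the $N$ nodes, it suffices to prove that every node is assigned to its true cluster with probability at least $1-N^{-1}$.

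Next I would fix a node $k$ with true cluster $j=c(k)$, write $S_i:=\calI\cap(\text{cluster }i)$, and study the test statistics
\[
Z_{ki} \;:=\; \frac{({\ba_k}_{\calI})^T\bv_i}{\hat{n}_i'} \;=\; \frac{1}{n_i'}\sum_{l\in S_i}A_{kl}.
\]
For fixed $k$ these summands are independent, each a Bernoulli variable whose mean is $p$ when $l$ is in cluster~$j$ and $q$ otherwise (up to the observation factor $\obsprob$, which I suppress), so $\expect[Z_{kj}]=p$ while $\expect[Z_{ki}]=q<p$ for $i\neq j$; the only exception is the diagonal term $A_{kk}=1$ that enters $Z_{kj}$ when $k\in\calI$, which merely biases $Z_{kj}$ upward and contributes an $O(1/n_{min}')$ correction absorbed into the constants. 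Retrieval assigns $k$ correctly as soon as $Z_{kj}>\tfrac{p+q}{2}$ and $Z_{ki}<\tfrac{p+q}{2}$ for all $i\neq j$.

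The heart of the argument is a Chernoff estimate for these (at most) $r$ events. The deviation scale is $\delta:=\tfrac{p-q}{2}$, and a one-line case check --- splitting on whether $1-p\ge q$ and using $p\ge\tfrac{1}{2}\ge q$ --- gives $p-q\ge\densdiff$, hence $\delta\ge\densdiff/2$. Since $n_i'\ge n_{min}'$ for every cluster and the variance proxy of each $A_{kl}$ is at most $\max\{p,q\}=p$, a one-sided Chernoff bound for the lower tail and a Bernstein bound for the upper tail give
\begin{align*}
\prob\!\left[Z_{kj}\le\tfrac{p+q}{2}\right] &\le \exp\!\left(-\frac{n_{min}'\,\delta^2}{2p}\right),\\
\prob\!\left[Z_{ki}\ge\tfrac{p+q}{2}\right] &\le \exp\!\left(-\frac{n_{min}'\,\delta^2}{2p}\right)\qquad(i\neq j).
\end{align*}
Each right-hand side is at most $\exp\!\big(-\tfrac{n_{min}'\densdiff^2}{8p}\big)\le(rN^2)^{-1}$ by the hypothesis on $n_{min}'$. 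A union bound over the $r$ events attached to node $k$ gives a per-node failure probability at most $N^{-2}$, and a union bound over the $N$ nodes yields the claimed $1-N^{-1}$.

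I expect the real effort to be bookkeeping rather than conceptual. One must pin down the Bernoulli tail estimates with a variance proxy clean enough that the constant emerges as exactly $\tfrac{8p}{\densdiff^2}$; in particular, for the upper tail the Bernstein denominator $2q+\tfrac{2}{3}\delta=\tfrac{p+5q}{3}$ must be bounded by $2p$, which again uses $q\le p$, and one must carefully absorb the diagonal term and the observation factor. A last point worth a remark is uniformity over all $N$ nodes, including those in $\calI$ whose columns also appear in the sub-matrix clustered in Step~2: this causes no trouble here because, once $\calI$ and the true partition are fixed, the only randomness in the lemma is the edge set $\{A_{kl}\}$, which is independent across $l$ for each fixed $k$; the mild statistical coupling with the event that Step~2 succeeds is not unwound inside this lemma, being absorbed by a union bound when the lemmas are combined into the main theorem.
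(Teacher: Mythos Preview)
Your proposal is correct and mirrors the paper's own proof almost exactly: both threshold the normalized inner products at $\tau=(p+q)/2$, apply a lower-tail Chernoff bound to the correct-cluster statistic and an upper-tail Bernstein-type bound (with denominator $\tfrac{4(p+5q)}{3}$ dominated by $8p$ via $q<p$) to the incorrect-cluster statistics, then union-bound over the $r$ clusters and the $N$ nodes. Your explicit discussion of conditioning on Step~2's success, the diagonal entry, and the suppressed observation factor~$\obsprob$ is more careful than the paper's write-up, but the underlying argument is the same.
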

We can readily state Theorem \ref{thm:maintheoremRS} which establishes conditions for Algorithm~\ref{alg:mainalgorithm} to achieve exact recovery whp.
\begin{theorem}
\label{thm:maintheoremRS}
Suppose the adjacency matrix $\bA$ follows Data Model~\ref{datamodel}.
If
\begin{align}
\nmin &\ge \frac{ 8 p }{\obsprob \densdiff^2} \log\left(rN^2\right), \label{eqn:8987239}
\\
N' &\le \min \left\{ \nmin^2 / \zeta \:,\: N \right\}, \label{eqn:2348798}
\\
N' &\ge 4 f \max \left\{ f \zeta \:,\: \frac{4p}{\obsprob \densdiff^2} \log\left(rN^2\right) \right\}
\nonumber \\
&\qquad + 4f \log \left( 2 r N \right), \label{eqn:8923478789}
\end{align}
and node sampling is performed using URS without replacement, then Algorithm 1 exactly clusters the graph with probability at least $1 - c N^{-10} - 3N^{-1}$, where $\zeta$ is defined in Lemma \ref{lm:completion}.
\end{theorem}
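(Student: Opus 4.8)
The plan is to combine the three preceding lemmas by a union-bound argument, tracking how the hypotheses on $\nmin$ and $N'$ propagate through each step. The three steps of Algorithm~\ref{alg:mainalgorithm} are: (1) sampling, which must produce a sketch whose smallest cluster has at least $\nminbound$ columns; (2) sketch decomposition via \eqref{eq:convex_sketch}, which must exactly recover $\bL'$; and (3) retrieval, which must correctly assign every node of the full graph. Lemmas~\ref{lem:Suff_Optimality_Partial}, \ref{lm:completion}, and \ref{lem:Retrieve} handle these respectively, each succeeding whp. The subtlety is that Lemma~\ref{lm:completion} and Lemma~\ref{lem:Retrieve} both require lower bounds on $\nminp$ (the sketch's smallest cluster size), not directly on $N'$, so I must first convert the $N'$ hypothesis \eqref{eqn:8923478789} into the two distinct $\nminp$ guarantees these lemmas need.

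First I would set $\nminbound$ to be the larger of the two thresholds demanded downstream: namely $\nminbound \ge \frac{8p}{\densdiff^2}\log(rN^2)$ (needed by Lemma~\ref{lem:Retrieve} applied to the sketch, i.e.\ with $\nminp$ in place of $\nmin$) and, simultaneously, $\nminbound$ large enough that Lemma~\ref{lm:completion}'s condition \eqref{eq:lemm2_suffRS} is met. Observe that \eqref{eqn:8923478789} is precisely of the form $N' \ge 4f\max\{f\zeta,\, \frac{4p}{\densdiff^2}\log(rN^2)\} + 4f\log(2rN)$, which matches Lemma~\ref{lem:Suff_Optimality_Partial}'s requirement $N' \ge 2f[\nminbound + \log(2rN)]$ once we take $\nminbound = 2\max\{f\zeta,\, \frac{4p}{\densdiff^2}\log(rN^2)\}$. (The factor-of-two slack absorbs the difference between the "$2f$" in Lemma~\ref{lem:Suff_Optimality_Partial} and the "$4f$" in \eqref{eqn:8923478789}.) With this choice, Lemma~\ref{lem:Suff_Optimality_Partial} gives $\nminp > \nminbound$ with probability at least $1 - N^{-1}$; and by construction $\nminbound$ exceeds both the retrieval threshold $\frac{8p}{\densdiff^2}\log(rN^2)$ and the quantity needed to make \eqref{eq:lemm2_suffRS} hold (since $\nminbound \ge 2f\zeta \ge f\zeta$ already dominates the $f\zeta$ term, and the $\log(2rN)$ term is subsumed in the slack). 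Condition \eqref{eqn:2348798} is just \eqref{eqn:21973} verbatim, providing the upper bound on $N'$ that Lemma~\ref{lm:completion} also requires; note that \eqref{eqn:2348798} together with \eqref{eqn:8987239} forces $\nminbound < \nmin$, so the window for $N'$ in \eqref{eqn:2348798}--\eqref{eqn:8923478789} is nonempty, consistent with Remark~\ref{rem:RSDecomp}.

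Next I would chain the conclusions. On the event that Lemma~\ref{lem:Suff_Optimality_Partial} succeeds (probability $\ge 1 - N^{-1}$), we have $\nminp > \nminbound$, hence \eqref{eq:lemm2_suffRS} holds and, since \eqref{eqn:2348798} supplies \eqref{eqn:21973}, Lemma~\ref{lm:completion} applies: the optimum of \eqref{eq:convex_sketch} equals the true $\bL'$ with probability at least $1 - cN^{-10} - N^{-1}$ (conditionally). On this further event, Step~2.3 correctly extracts the characteristic vectors $\{\bv_i\}$ of the sketch clusters. Then, again using $\nminp > \nminbound \ge \frac{8p}{\densdiff^2}\log(rN^2)$, Lemma~\ref{lem:Retrieve} gives that Step~3 exactly reconstructs $\bL$ with probability at least $1 - N^{-1}$. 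A union bound over the three failure events yields overall success probability at least $1 - cN^{-10} - 3N^{-1}$, matching the claimed bound.

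The main obstacle is purely bookkeeping: verifying that the single hypothesis \eqref{eqn:8923478789} on $N'$ really does simultaneously imply (via Lemma~\ref{lem:Suff_Optimality_Partial}) a value of $\nminp$ large enough for \emph{both} Lemma~\ref{lm:completion} and Lemma~\ref{lem:Retrieve}, i.e.\ that the choice of $\nminbound$ above is admissible and that the constants line up. This requires carefully inspecting that the $\max\{\cdot,\cdot\}$ inside \eqref{eqn:8923478789} is exactly the $\nminbound$ needed, that the $\log(2rN)$ additive term in \eqref{eqn:8923478789} matches the one in Lemma~\ref{lem:Suff_Optimality_Partial}, and that the probabilities are genuinely conditionable (the randomness in sampling, in the SBM edge realizations used by Lemma~\ref{lm:completion}, and in the edge realizations used by Lemma~\ref{lem:Retrieve} must be handled so that the conditional statements compose). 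None of this is deep, but it is where an error would hide, so I would write out the inequalities $\nminbound \ge f\zeta$, $\nminbound \ge \frac{4p}{\densdiff^2}\log(rN^2)$, and $\nminbound \ge \frac{8p}{\densdiff^2}\log(rN^2)$ — wait, the last needs $\nminbound \ge 2\cdot\frac{4p}{\densdiff^2}\log(rN^2)$, which is exactly why the "$2$" in my choice of $\nminbound$ is there — explicitly and confirm each follows from \eqref{eqn:8923478789}.
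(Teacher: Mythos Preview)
Your overall strategy---union-bound the three lemmas---is exactly what the paper does, and the probability accounting $1 - cN^{-10} - 3N^{-1}$ is correct. However, there is a logical slip in how you invoke Lemma~\ref{lm:completion}. You write ``we have $\nminp > \nminbound$, hence \eqref{eq:lemm2_suffRS} holds,'' but \eqref{eq:lemm2_suffRS} is a hypothesis on $N'$, not a consequence of $\nminp$; you cannot derive it from a lower bound on the sketch cluster size. Moreover, if you intend to bypass Lemma~\ref{lm:completion} and feed $\nminp$ directly into the decomposition guarantee of \cite{Chen:2014:CPO:2627435.2670322}, the required threshold is $\nminp \ge \sqrt{CN'}\log N'/(\sqrt{\obsprob}\densdiff) = \sqrt{N'\zeta}$, which \emph{grows with $\sqrt{N'}$}, whereas your choice $\nminbound = 2f\zeta$ is fixed. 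Since \eqref{eqn:8923478789} forces $N' \ge 4f^2\zeta$, you would actually need $\nminbound \ge \sqrt{N'\zeta} \ge 2f\zeta$, and the inequality goes the wrong way for larger $N'$.

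The paper's route is simpler and avoids this entirely: observe that \eqref{eqn:8923478789} already implies \eqref{eq:lemm2_suffRS} directly (take the $f\zeta$ branch of the max), and \eqref{eqn:2348798} is \eqref{eqn:21973} verbatim, so Lemma~\ref{lm:completion} applies as a black box with failure probability $cN^{-10} + N^{-1}$. Then, separately, invoke Lemma~\ref{lem:Suff_Optimality_Partial} once more with $\nminbound = \frac{8p}{\densdiff^2}\log(rN^2)$ (the $\frac{4p}{\densdiff^2}$ branch of the max in \eqref{eqn:8923478789} supplies exactly the $N' \ge 2f[\nminbound + \log(2rN)]$ needed), giving $\nminp \ge \nminbound$ with probability $1-N^{-1}$; this feeds Lemma~\ref{lem:Retrieve}. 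So you do not need a single $\nminbound$ serving both purposes---the paper uses Lemma~\ref{lm:completion} (which internally calls Lemma~\ref{lem:Suff_Optimality_Partial} with its own $\nminbound$) for decomposition, and a second, independent application of Lemma~\ref{lem:Suff_Optimality_Partial} for retrieval. That is why the failure probability has three $N^{-1}$ terms.
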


\begin{remark} \label{rem:RSMainTheorem}

The sufficient condition is essentially that of Lemma~\ref{lm:completion}, with two additional constraints to ensure that the retrieval step is successful: condition \eqref{eqn:8987239} imposes a constraint on the minimum cluster size in the full graph, and the lower bound on $N'$ in \eqref{eqn:8923478789} is modified to ensure the retrieval step is successful. Nonetheless, with these additional constraints we still have $N'=\softOmega\left( \frac{\balRS^2}{\obsprob \densdiff^2} \right)$ as described in Remark~\ref{rem:RSDecomp}.

\end{remark}

\section{Improved Sampling Methods for Capturing Small Clusters} \label{sec:ProposedSbSSRS}

The URS method achieves significant computational complexity improvement for balanced graphs.
However, imbalance in the graphs will tend to carry over to the sketch, thus requiring a large number of samples to ensure that the sketch adequately captures the small clusters.
We now present two sampling methods which can yield a more balanced sketch of unbalanced data by sampling more from the small clusters.

\subsection{Sparsity-based Sampling} \label{sec:Sbs}

This first method, designated `Sparsity-based Sampling', samples the sparser columns of the adjacency matrix, which represent nodes with fewer connections, with a higher probability.
Specifically, the probability of the $i\textsuperscript{th}$ node being selected is set inversely proportional to the degree of the node, i.e., the $\ell_0$-norm of the corresponding column $ \| \ba_i \|_0$, a factor which is a measure of sparsity.
When calculating the norm, the entries corresponding to unobserved edges are set to zero.  Note that, because the diagonal element is always populated with one, the $\ell_0$ norm will always be greater than zero.
The sampling probabilities are properly normalized so that they sum up to one.
Here, SbS is analyzed and shown to also reduce the computational complexity, while at the same time improving the probability of success when working with unbalanced data.
The proof of the results for the SbS approach can be found in Appendix~\ref{sec:ProofsSbS}.

To clearly illustrate the advantage of this method, we will first consider a sketch produced by sampling with replacement, and later consider the more difficult scenario where sampling is performed without replacement.
The next proposition provides a result when the graph consists of disjoint cliques $\bA = \bL$, i.e., is uncorrupted.
\begin{proposition}
\label{lm:sbs_1}
Suppose $\bA\!=\!\bL$, where $\bL$ is as defined in Data Model \ref{datamodel}, and all edges are observed.
Then, using SbS with replacement,
the sampling probabilities from all clusters are equal to $1/r$.
\end{proposition}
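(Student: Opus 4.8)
The plan is to compute the normalized SbS sampling distribution directly under the hypotheses of the proposition ($p=1$, $q=0$, full observation) and then to sum the resulting per-node probabilities over each cluster. Everything reduces to one structural observation plus an elementary cancellation.

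First I would record the key fact about column sparsity. When $\bA=\bL$ consists of $r$ disjoint cliques and every edge is observed, a node $i$ lying in cluster $j$ (of size $n_j$) is adjacent to exactly the $n_j-1$ other members of its clique; together with the diagonal entry, which Data Model~\ref{datamodel} populates with a $1$, the column $\ba_i$ therefore has precisely $n_j$ nonzero entries, so $\|\ba_i\|_0 = n_j$ for \emph{every} node in cluster $j$. This is the one place where the convention of filling the diagonal matters: it makes the sparsity level constant across a cluster and exactly equal to that cluster's size, which is what drives the clean final answer.

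Next I would assemble the SbS probabilities. By definition the unnormalized weight of node $i$ is $1/\|\ba_i\|_0$, which by the previous step equals $1/n_j$ for each of the $n_j$ nodes in cluster $j$. The normalizing constant is the sum of these weights over all $N$ nodes; grouping the sum by cluster gives $\sum_{j=1}^r n_j \cdot (1/n_j) = r$. Hence in a single draw the probability of selecting a specified node of cluster $j$ is $\frac{1/n_j}{r} = \frac{1}{r n_j}$, and summing over the $n_j$ nodes of that cluster, the probability that the draw lands anywhere in cluster $j$ is $n_j \cdot \frac{1}{r n_j} = \frac{1}{r}$, independent of $j$ and of the cluster sizes. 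Since sampling is with replacement, every one of the $N'$ draws is identically distributed, so this $1/r$ per-cluster probability holds throughout, which is the claim.

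I do not expect a genuine obstacle in this proposition; it is a short computation. The only thing to be careful about is the bookkeeping of the diagonal entry — without it the column norms would be $n_j-1$, the normalizing constant would be $\sum_j n_j/(n_j-1)$ rather than $r$, and the cancellation yielding a size-independent $1/r$ would fail. (The substance of the work lies in the subsequent without-replacement analysis, where the per-draw probabilities change as columns are removed and one must control the drift; that is where the real effort goes, not here.)
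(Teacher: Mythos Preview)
Your argument is correct and is essentially identical to the paper's own proof: both observe that a node in cluster $j$ has degree (equivalently, column $\ell_0$-norm) exactly $n_j$, and then compute the cluster-$j$ sampling probability as $\dfrac{n_j \cdot (1/n_j)}{\sum_{k=1}^r n_k \cdot (1/n_k)} = \dfrac{1}{r}$. Your added remark on the role of the diagonal entry is accurate and a nice clarification, but the core reasoning matches the paper exactly.
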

Hence, there is an equal probability of sampling from each cluster regardless of the cluster sizes.
Even if the clusters are highly unbalanced, i.e., the largest cluster is much larger than the smallest cluster, SbS will tend to produce a sketch which is near-perfectly balanced.
In fact, we can still improve the balance of the sketch even if the graph is corrupted.
The next result will depend on the mean degree of a node in the smallest cluster, which is $\mu_{min} = \obsprob [(p-q) \, \nmin + q N]$.
We will always assume that
\begin{align} \label{eqn:14080123}
\mumin = \littleOmega( \log rN ).
\end{align}
This constraint on $\mumin$ is satisfied for the clustering problems considered herein (due to the required growth of $\nmin$ for successful clustering).
The following lemma provides a lower bound on the probability $\clustsampprob_{min}$ of sampling from the smallest cluster using SbS.
\begin{lemma}
\label{lem:SbS_nsamples}
Suppose the adjacency matrix $\bA$ follows Data Model~\ref{datamodel} and all conditions in the statement of Lemma~\ref{lm:completion} hold.
Define
\begin{align}
\eta &= \left[ 1 + \frac{q}{p} \left(\balRS-1\right) \right],
\label{eqn:23439878}
\\
\suffepsA
&= \sqrt{ \frac{ 6 \log( 2 N ) }{\mumin} }. \label{eqn:95422}
\end{align}
If $\suffepsA < 1$, and sampling is performed using SbS with replacement, then
\begin{align} \label{eqn:9024099}
 \clustsampprob_{min}
 \ge  \frac{1-\suffepsA}{1+\suffepsA} \frac{1}{ r \eta }
\end{align}
with probability at least $1-N^{-1}$.
\end{lemma}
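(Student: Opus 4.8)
The plan is to reduce this probabilistic statement to a deterministic inequality about cluster mean‑degrees, via a degree‑concentration step. Since SbS is performed with replacement, the probability of drawing node $i$ is exactly $w_i = (1/d_i)\big/\sum_{j=1}^{N}(1/d_j)$, where $d_i = \|\ba_i\|_0$ is the observed degree of node $i$ (including its unit diagonal entry); writing $\calC_{min}$ for a smallest cluster, this gives the exact identity $\clustsampprob_{min} = \big(\sum_{i\in\calC_{min}} 1/d_i\big)\big/\big(\sum_{j=1}^{N} 1/d_j\big)$. Everything then hinges on controlling this ratio of random sums.

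First I would establish degree concentration. For a node $i$ in a cluster of size $n_\ell$, $d_i$ equals $1$ plus a sum of independent Bernoulli variables (one per potential incident edge), with mean $\bar\mu_\ell := \bbE[d_i] = 1 + \obsprob p(n_\ell-1) + \obsprob q(N-n_\ell)$; a quick rearrangement shows $\bar\mu_\ell \ge \mumin$ for every $\ell$, so the smallest cluster governs the worst‑case concentration. A multiplicative Chernoff bound then shows that the prescribed $\suffepsA = \sqrt{6\log(2N)/\mumin}$ makes $\bbP\big(|d_i - \bar\mu_\ell| > \suffepsA\,\bar\mu_\ell\big) \le N^{-2}$ — this is exactly where the standing assumption $\mumin = \littleOmega(\log rN)$ enters, to absorb the $O(1)$ additive correction coming from the always‑present diagonal entry. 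A union bound over the $N$ nodes yields an event of probability at least $1-N^{-1}$ on which $(1-\suffepsA)\bar\mu_\ell \le d_i \le (1+\suffepsA)\bar\mu_\ell$ simultaneously for all nodes.

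On that event, lower‑bounding the numerator by $\nmin/\big((1+\suffepsA)\bar\mu_{min}\big)$ and upper‑bounding the denominator by splitting it cluster‑by‑cluster as $\sum_{\ell=1}^{r} n_\ell/\big((1-\suffepsA)\bar\mu_\ell\big)$, I get $\clustsampprob_{min} \ge \frac{1-\suffepsA}{1+\suffepsA}\big(\sum_{\ell=1}^{r}(n_\ell/\nmin)(\bar\mu_{min}/\bar\mu_\ell)\big)^{-1}$, so it remains to prove the deterministic bound $\sum_{\ell=1}^{r}(n_\ell/\nmin)(\bar\mu_{min}/\bar\mu_\ell) \le r\eta$. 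For this I would view the summand as a function of the single variable $n_\ell$: since $\bar\mu(\cdot)$ is affine and increasing, $n/\bar\mu(n)$ is increasing, hence $(n_\ell/\nmin)(\bar\mu_{min}/\bar\mu(n_\ell))$ is increasing on $[\nmin, N]$ and is maximized at $n_\ell = N$; there a direct computation using $(p-q)+q = p$ (so that $\bar\mu(N)$ is proportional to $\obsprob p N$, after which the algebra collapses) yields exactly $\eta$, so each of the $r$ terms is at most $\eta$ and the sum is at most $r\eta$.

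The main obstacle, and the step deserving care, is this last deterministic inequality: the factor must come out as $r\eta$ rather than something looser, and this works only because the per‑cluster quantity $(n_\ell/\nmin)(\bar\mu_{min}/\bar\mu_\ell)$ attains its maximum at the degenerate ``one giant cluster'' extreme and equals precisely $\eta$ there. A secondary technical point is keeping the Chernoff step uniform over all cluster sizes with the single parameter $\suffepsA$ (handled by the observation $\bar\mu_\ell \ge \mumin$) and cleanly absorbing the additive $O(1)$ from the diagonal entry into the $\littleOmega(\log rN)$ slack on $\mumin$. With all of this in place, Proposition~\ref{lm:sbs_1} is recovered as the noiseless $q\to 0$ (hence $\eta\to 1$) special case.
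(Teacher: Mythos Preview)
Your proposal is correct and follows essentially the same route as the paper: establish a ``typical graph'' event via a multiplicative Chernoff bound on all node degrees (the paper packages this as a separate lemma with the same choice of $\alpha$), then on that event bound the SbS cluster probability by replacing each degree by the appropriate extreme of its concentration interval, and finally use the deterministic fact that $n_k/\mu_k \le 1/(\rho p)$ (equivalently, your monotonicity of $n/\bar\mu(n)$ evaluated at $n=N$) to collapse the sum to $r\eta$. One small imprecision: your ``exact identity'' $\clustsampprob_{min} = \big(\sum_{i\in\calC_{min}} 1/d_i\big)\big/\big(\sum_j 1/d_j\big)$ presumes the minimum cluster-sampling probability is attained at a smallest-size cluster, which is not guaranteed a priori; the fix is simply to run your bound for an arbitrary cluster $i$ and observe that the resulting lower bound $\frac{1-\alpha}{1+\alpha}\,(r\eta_i)^{-1}$ is smallest at $n_i=\nmin$, which is exactly how the paper proceeds (via a per-column bound summed over each cluster).
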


\begin{remark} \label{rem:SbsSamp1}
The variable $\suffepsA$ is primarily dependent on the mean degree $\mumin$ of the smallest cluster.  This mean value may be small in some challenging cases: when few entries are observed, when $(p-q)$ is small, or when the clusters are extremely small. However, as this mean increases, the bounds on the probability improve. Due to the assumption \eqref{eqn:14080123}, we have $\frac{1-\suffepsA}{1+\suffepsA} \rightarrow 1$.

In \eqref{eqn:23439878}, the variable $\eta$ reflects an important tradeoff.
Let us first consider how $\eta$ behaves for fixed $N$ and $\nmin$.
At one extreme, as $q/p \rightarrow 1$, we have $\eta \rightarrow f$, which means that the probability will strongly depend on the proportion of the minimum cluster size to the full graph, as in URS.  At the other extreme, as $q \rightarrow 0$, we have $\eta \rightarrow 1$.  In this case, assuming $\suffepsA$ is small, the sampling probability for each cluster approaches $r^{-1}$, leading to a roughly equal chance of sampling from each cluster.
In terms of asymptotic behavior, we have $\eta = \bigO \left( q \balRS \right)$ if $q > 0$, or $\eta=1$ exactly if $q=0$.

\end{remark}

\begin{remark} \label{rem:SbsSamp2}
The $r$ in the denominator of \eqref{eqn:9024099} is due to some conservatism in the bounding techniques. However, in the regime where $\densdiff$ is small, a better approximation dispenses with this factor in the sufficient condition, such that SbS sampling probabilities become roughly equal to those of random sampling. See Remark~\ref{rem:SbsSamp3} in Appendix~\ref{sec:ProofsSbS} for more details.
\end{remark}

Now, we consider the SbS sampling process without replacement, as is used in Step 1 (sampling) of Algorithm~\ref{alg:mainalgorithm}.
If a column is sampled a second time, the duplicate sample is discarded and not counted towards the sampling budget $N'$.
The next lemma shows that as a consequence of the balanced sampling probability, fewer samples are required to obtain a balanced sketch.
\begin{lemma}
[Sampling Size for SbS] \label{lem:Suff_Optimality_Partial_SbS}

Suppose that $\bA'$ is produced by SbS with $N'$ sampled columns (without replacement).
Given an integer $\nminbound$ with $\nmin > \nminbound > 0$, let
\begin{align}
g &= \eta \frac{1+\suffepsB}{1-\suffepsB} \left( 1-\frac{\nminbound}{\nmin} \right)^{-1}, \label{eqn:45523235}
\\
\suffepsB &= \sqrt{ \frac{ 6 \log( 2 r N ) }{\mumin} }, \label{eqn:43453}
\end{align}
where $\eta$ is as defined in \eqref{eqn:23439878}.
Then, $\nminp \ge \nminbound$ with probability at least $1-N^{-1}$ provided that $\suffepsB < 1$ and
\begin{align} \label{eqn:Sbs_samples_suff}
N
\ge N'
\ge r g \left[ \nminbound \log(N) + \log\left(2rN\right) \right].
\end{align}

\end{lemma}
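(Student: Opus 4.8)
The plan is to fix the graph $\bA$, so that all randomness lies in the $N'$ successive draws of Step~1 of Algorithm~\ref{alg:mainalgorithm}, and to proceed in three stages: (i) a one‑shot Chernoff estimate that controls every node degree on a high‑probability event $\calG$ depending only on $\bA$; (ii) on $\calG$, a \emph{uniform} lower bound of $1/(rg)$ on the conditional probability that a draw lands in the smallest cluster, valid at every step before the quota $\nminbound$ has been reached; and (iii) a blocking argument that turns this per‑step bound into the desired tail bound on $\nminp$. For stage (i): a node $i$ in a cluster of size $n_c$ has degree $\|\ba_i\|_0$ equal to a sum of independent Bernoulli variables with mean $\mu_c=\obsprob[(p-q)n_c+qN]\ge\mumin$, so a two‑sided Chernoff bound with the deviation $\suffepsB$ of \eqref{eqn:43453} (note $\suffepsB<1$ by hypothesis, and $\suffepsB\to 0$ because $\mumin=\littleOmega(\log rN)$ via \eqref{eqn:14080123}) gives failure probability $\bigO((rN)^{-2})$ for that node; a union bound over the $N$ nodes then makes $\calG:=\{(1-\suffepsB)\mu_{c(i)}\le\|\ba_i\|_0\le(1+\suffepsB)\mu_{c(i)}\ \text{for all }i\}$ hold with probability at least $1-\tfrac12 N^{-1}$. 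This is the same degree concentration used in the proof of Lemma~\ref{lem:SbS_nsamples}, now carried out at the finer scale $\suffepsB$ so that it can be reused throughout the sequential process.

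For stage (ii), let $W$ denote the total unnormalized SbS weight $\sum_i\|\ba_i\|_0^{-1}$ of the columns still available at a given step, and $W_{\mathrm{min}}$ its restriction to the columns of the smallest cluster. On $\calG$, using $\mu_c\ge\obsprob p\,n_c$ (valid since $N\ge n_c$), cluster $c$ contributes at most $n_c/[(1-\suffepsB)\mu_c]\le 1/[(1-\suffepsB)\obsprob p]$ to $W$, hence $W\le r/[(1-\suffepsB)\obsprob p]$ at all times. Conversely, as long as fewer than $\nminbound$ columns of the smallest cluster have been drawn, at least $\nmin-\nminbound+1$ of them are still available, each of weight at least $1/[(1+\suffepsB)\mumin]$, so $W_{\mathrm{min}}\ge(\nmin-\nminbound)/[(1+\suffepsB)\mumin]$. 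Dividing these two estimates and using the rewriting $\nmin/\mumin=1/(\obsprob p\,\eta)$ — which is just the definition \eqref{eqn:23439878} of $\eta$ — the conditional probability that the next draw lands in the smallest cluster is at least $\bigl(1-\tfrac{\nminbound}{\nmin}\bigr)\tfrac{1-\suffepsB}{1+\suffepsB}\tfrac{1}{r\eta}=\tfrac{1}{rg}$, with $g$ as in \eqref{eqn:45523235}. Thus $g$ is exactly the without‑replacement correction of Lemma~\ref{lem:SbS_nsamples}, the extra factor $(1-\nminbound/\nmin)$ absorbing the columns of the small cluster that may already have been consumed.

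For stage (iii), split the $N'$ draws into $\nminbound$ consecutive blocks of length $\lfloor N'/\nminbound\rfloor$, which by \eqref{eqn:Sbs_samples_suff} is $\gtrsim rg\log N$. If $\nminp<\nminbound$, then by the pigeonhole principle some block contains no draw from the smallest cluster, and at the start of that block the quota is still unmet, so stage (ii) applies to every draw in it: for each fixed block, the probability of its being empty while the quota is still unmet before it is at most $(1-\tfrac{1}{rg})^{\lfloor N'/\nminbound\rfloor}\le\exp(-\lfloor N'/\nminbound\rfloor/(rg))$. A union bound over the $\nminbound$ blocks, together with the stated size of $N'$, gives $\bbP(\nminp<\nminbound\mid\calG)\le\tfrac12 N^{-1}$; adding $\bbP(\calG^c)\le\tfrac12 N^{-1}$ yields the claimed $1-N^{-1}$. (Equivalently, one can couple the draw indicators with a $\Bin(N',\tfrac{1}{rg})$ variable and apply a multiplicative Chernoff bound, which in fact needs only $N'=\softOmega\bigl(rg(\nminbound+\log N)\bigr)$, so the stated condition has room to spare.)

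The step I expect to be the main obstacle is stage (ii): the drift of the per‑draw probabilities caused by sampling \emph{without} replacement is exactly why Lemma~\ref{lem:SbS_nsamples} cannot be invoked directly. The resolution is the elementary counting observation that before the quota $\nminbound$ is met almost all of the (light) columns of the smallest cluster are still in play, which pins the conditional success probability uniformly above $1/(rg)$; the mild growth assumption $\mumin=\littleOmega(\log rN)$ is precisely what keeps the degree fluctuation $\suffepsB$ from eroding this bound.
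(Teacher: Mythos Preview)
Your approach is sound but differs from the paper's. Rather than tracking the sequential without-replacement process, the paper couples it to sampling \emph{with} replacement (the first $N'$ distinct values of an i.i.d.\ SbS stream are exactly a without-replacement sample, so the distinct columns obtained dominate those from $N'$ with-replacement draws), and then, for each cluster $i$, union-bounds over the $\binom{n_i}{b}\le n_i^b$ possible ``missing'' sets $\calS_i\subset\clusteridx_i$ of size $n_i-b$, using $\bbP(\text{no draw from }\calS_i)=(1-\sum_{k\in\calS_i}s(k))^{N'}\le\exp\{-(n_i-b)N's_i^-\}$ with the per-column bound $s_i^-$ of Lemma~\ref{lem:SbS_nsamples_col}. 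That combinatorial route hits the constant in \eqref{eqn:Sbs_samples_suff} exactly. Your stages (i)--(ii) are correct and in fact reproduce the paper's $s_i^-$ bound; two small fixes are needed in stage (iii). First, $\nminp\ge b$ requires \emph{every} cluster to receive $b$ samples, not just the one of size $\nmin$; you must run (ii)--(iii) for each cluster and union-bound over $i$ (harmless, since $n_i/\mu_i\ge 1/(\rho p\,\eta)$ and $1-b/n_i\ge 1-b/\nmin$ hold for all $i$, so the same $1/(rg)$ bound applies). Second, your blocking argument is slightly too loose for the stated constant: block length $\lfloor N'/b\rfloor\ge rg\log N$ yields only $b\,e^{-\log N}=b/N$ after the union over blocks, not $\tfrac12 N^{-1}$. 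Your parenthetical $\Bin(N',1/(rg))$ coupling is the right repair and, as you note, actually needs only $N'=\softOmega(rg(b+\log N))$, which is tighter than \eqref{eqn:Sbs_samples_suff}; use that instead of the blocking.
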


\begin{remark} \label{rem:Suff_Optimality_Partial_SbS}

Note that the sufficient condition of Lemma~\ref{lem:Suff_Optimality_Partial_SbS} has a similar structure to that of Lemma~\ref{lem:Suff_Optimality_Partial}.
In the sufficient condition for URS \eqref{eqn:Suff_Optimality_Partial_RS_uniform}, the main factor is the variable $f$, whereas for SbS in \eqref{eqn:Sbs_samples_suff}, this factor changes to $r g$. 

As columns are sampled from a particular cluster, the probability of sampling from this cluster may decrease. This possibility is reflected by the $\left( 1-\frac{\nminbound}{\nmin} \right)^{-1}$ penalty term which appears in \eqref{eqn:45523235}. This effect will be small for large clusters, but may become significant for small clusters as the required number of samples approaches the size of the smallest cluster.
\end{remark}

\begin{remark}

Due to assumption \eqref{eqn:14080123}, we have $\frac{1-\suffepsB}{1+\suffepsB} \rightarrow 1$.
Furthermore, suppose that $\nminbound$ does not approach $\nmin$ (for example, assume that $\nminbound \le c \nmin$ for some constant $c$).
Given these conditions, the number of samples to guarantee sufficient representation of the smallest cluster in the sketch whp is $N' = \softOmega \left( q \nminbound r \balRS \right)$ for $q>0$.
However, $q$ may scale with $N$. In particular, if $q = \bigO \left( \balRS^{-1} \right)$, then we get a much better result: $N' = \softOmega \left( \nminbound r \right)$.
We will refer to this as the \smallq/ regime. The regime where $q = \littleOmega \left( \balRS^{-1} \right)$ will be referred to as the \largeq/ regime.

\end{remark}

We now proceed to state the main result, which will consist of three theorems. First, we provide a sufficient condition on the number of samples to obtain sketch clusters of size $\bigOmega(\sqrt{N'} \log N')$, thus setting the stage for successful clustering of the sketch.
\begin{theorem}[Cluster size for SbS] \label{thm:sketchnmin}
Suppose $\bA'$ is produced by SbS with density difference $\densdiff'$ and observation probability $\obsprob'$, and let
\begin{align}
\balSbS' &= 2 \eta \frac{1+\suffepsB}{1-\suffepsB},
\\
\SbsNminpConst &= \frac{C \log^2 N}{\obsprob' {\densdiff'}^2}.
\end{align}
If
\begin{align}
N' &\le \min \left\{ \frac{\nmin^2}{4 \SbsNminpConst} \:,\: N \right\}, \label{eqn:786867123}
\\
N' &\ge  r \balSbS' \left[ r \balSbS' \SbsNminpConst \log^2 N + 2 \log(2rN) \right], \label{eq:lemm2_suffSbSA}
\end{align}
then
\begin{align} \label{eqn:9893784}
\nminp \ge \frac{\sqrt{C N'} \log N'}{\sqrt{\obsprob'} \densdiff'}
\end{align}
 with probability at least $1-N^{-1}$.

\end{theorem}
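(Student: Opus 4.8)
The plan is to obtain Theorem~\ref{thm:sketchnmin} as a corollary of Lemma~\ref{lem:Suff_Optimality_Partial_SbS}, applied with its threshold $\nminbound$ set equal to the target sketch cluster size. First I would fix $b := \left\lceil \sqrt{C N'}\,\log N'/(\sqrt{\obsprob'}\,\densdiff')\right\rceil$, so that proving $\nminp \ge b$ establishes \eqref{eqn:9893784}, and then verify, one by one, that the hypotheses of Lemma~\ref{lem:Suff_Optimality_Partial_SbS} hold for this $b$ under conditions \eqref{eqn:786867123}--\eqref{eq:lemm2_suffSbSA}.

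The first hypothesis to check is $\nmin > b > 0$. Since $N' \le N$ we have $\log N' \le \log N$, and the definition of $\SbsNminpConst$ gives $\sqrt{C}\,\log N/(\sqrt{\obsprob'}\densdiff') = \sqrt{\SbsNminpConst}$, hence $b \le \sqrt{\SbsNminpConst\, N'} + 1$; the upper bound $N' \le \nmin^2/(4\SbsNminpConst)$ from \eqref{eqn:786867123} then yields $b \le \nmin/2 + 1 < \nmin$. The same computation shows $\left(1 - \nminbound/\nmin\right)^{-1} \le 2(1+o(1))$, so the quantity $\balSbS = \eta\frac{1+\suffepsB}{1-\suffepsB}\left(1-\nminbound/\nmin\right)^{-1}$ appearing in Lemma~\ref{lem:Suff_Optimality_Partial_SbS} (with $\eta$ as in \eqref{eqn:23439878}) is bounded by $\balSbS'$ up to a factor that can be absorbed into the constant $C$. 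The second hypothesis, $\suffepsB < 1$ with $\suffepsB = \sqrt{6\log(2rN)/\mumin}$, follows from the standing assumption \eqref{eqn:14080123} that $\mumin = \littleOmega(\log rN)$, for $N$ large enough.

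The remaining hypothesis is the sampling-size bound $N \ge N' \ge r\balSbS\left[\nminbound\log N + \log(2rN)\right]$ of \eqref{eqn:Sbs_samples_suff}; its upper part is contained in \eqref{eqn:786867123}, and for the lower part it suffices, using $\balSbS \le \balSbS'$, to have $N' \ge r\balSbS'\left[b\log N + \log(2rN)\right]$. Substituting the value of $b$ and again using $\log N' \le \log N$ together with $\sqrt{C}\log N/(\sqrt{\obsprob'}\densdiff') = \sqrt{\SbsNminpConst}$, the right-hand side is at most $r\balSbS'\left[\sqrt{\SbsNminpConst\, N'}\,\log N + 2\log(2rN)\right]$, so it is enough to ensure $N' - r\balSbS'\sqrt{\SbsNminpConst}\log N\cdot\sqrt{N'} - 2r\balSbS'\log(2rN) \ge 0$, a quadratic inequality in $\sqrt{N'}$. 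Solving it, and noting that $\SbsNminpConst\log^2 N \gg \log(2rN)$ in all regimes of interest (since $\SbsNminpConst \ge C$), this reduces to a bound of the form $N' \ge r^2{\balSbS'}^{2}\SbsNminpConst\log^2 N + O\!\left(r\balSbS'\log(2rN)\right)$, which is exactly what \eqref{eq:lemm2_suffSbSA} provides. Lemma~\ref{lem:Suff_Optimality_Partial_SbS} then gives $\nminp \ge b \ge \sqrt{C N'}\log N'/(\sqrt{\obsprob'}\densdiff')$ with probability at least $1 - N^{-1}$.

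The main obstacle is the self-referential character of the choice of $\nminbound$: because $b$ depends on $N'$, the sampling-size check cannot be a plain substitution but must be handled as the quadratic inequality in $\sqrt{N'}$ above, and one must confirm that the slack introduced by the ceiling, by replacing $\log N'$ with $\log N$, and by the $\left(1-\nminbound/\nmin\right)^{-1}\le 2$ estimate is absorbed into the constant $C$ (equivalently, into the numerical factor $4$ and the $\log^2 N$ appearing in \eqref{eqn:786867123}--\eqref{eq:lemm2_suffSbSA}). A secondary, easier point is that the statement is written in terms of the sketch's effective parameters $\obsprob'$ and $\densdiff'$, which appear on both sides, so no comparison with the full-graph parameters $\obsprob,\densdiff$ is required within this proof.
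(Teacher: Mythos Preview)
Your proposal is correct and follows essentially the same route as the paper: invoke Lemma~\ref{lem:Suff_Optimality_Partial_SbS} with $b$ equal to the target $\sqrt{CN'}\log N'/(\sqrt{\obsprob'}\densdiff')$, use \eqref{eqn:786867123} to get $b\le \nmin/2$ and hence $\balSbS\le \balSbS'$, and then check that \eqref{eq:lemm2_suffSbSA} implies the sampling-size condition \eqref{eqn:Sbs_samples_suff}. The paper's proof is terser and simply asserts that \eqref{eqn:Sbs_samples_suff} follows from \eqref{eqn:786867123}--\eqref{eq:lemm2_suffSbSA}, whereas you spell out the underlying quadratic in $\sqrt{N'}$; both treatments are loose about the exact constants, which is in the spirit of the surrounding analysis.
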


Next, we look at the stochastic properties of the sketch induced by the randomness from the probabilistic sampling procedure in addition to that of the generative SBM.
Let $p_i'$ be the intra-cluster edge density for cluster $i$ in the sketch matrix.
Likewise, let $q'$ be the inter-cluster edge density, and $\obsprob'$ the observation probability for the sketch matrix.
If we perform URS, then we have $p_i'=p$ (for all clusters), $q'=q$, and $\obsprob'=\obsprob$.
However, when the graph is sketched using SbS, there could be a bias of $p'$, $q'$, and $\obsprob'$ toward slightly smaller values than in the original graph.
The next theorem demonstrates not only that the bias will be extremely small, but also that the probabilities of the sketch asymptotically approach those of the full graph as $N \rightarrow \infty$.
\begin{theorem}[Sketch probabilities $p, q, \obsprob$ for SbS] \label{thm:sketchpqrho}

If $\bA$ is constructed using Data Model~\ref{datamodel}, and subsequently sampled using SbS, then the sketch matrix densities will be bounded by
\begin{align}
&p \ge p_i' \ge p^-,
\\
&q \ge q' \ge q^-, \label{eqn:3242334}
\\
&\obsprob \ge \obsprob' \ge \obsprob^-, \label{eqn:423423}
\end{align}
where
\begin{align}
p^- &= p ( 1 - \pqprimeeps )^2, \label{eqn:pminus}
\\
q^- &= q ( 1 - \pqprimeeps )^2, \label{eqn:qminus}
\\
\obsprob^- &= \obsprob ( 1 - \rhoprimeeps )^2, \label{eqn:obsprobminus}
\\
\pqprimeeps &= N (1-\obsprob p)^{\nmin} (1-\obsprob q)^{\nmin},
\\
\rhoprimeeps &= N (1-\obsprob)^N.
\end{align}

\end{theorem}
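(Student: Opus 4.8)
The plan is to reduce each two-sided bound to a statement about one representative pair of nodes and then to track how the Sparsity-based Sampling (SbS) weights respond to the edge (and observation) variable of that pair. Fix distinct nodes $u,v$ in a common cluster $i$ (for the $q'$ bound take them in different clusters, and for the $\obsprob'$ bound take an arbitrary distinct pair); let $X_{uv}$ and $O_{uv}$ be the indicators that the edge $uv$ is present, respectively observed, and set $p_i' = \Pr[X_{uv}=1\mid u,v\in\calI]$ with $q'$, $\obsprob'$ defined analogously. The structural point is that the edge $uv$ influences the sampling only through the two column sparsities $\|\ba_u\|_0$ and $\|\ba_v\|_0$, each of which increases by exactly one when $O_{uv}X_{uv}=1$ and is unchanged otherwise; write $\|\ba_u\|_0 = d_u^- + O_{uv}X_{uv}$ where $d_u^-$ (and $d_v^-$) is independent of $(X_{uv},O_{uv})$. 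To handle weighted sampling without replacement I would use its exponential-key representation: draw $E_j\sim\mathrm{Exp}(1)$ i.i.d., put $Z_j = E_j\|\ba_j\|_0$, and let $\calI$ be the indices of the $N'$ smallest $Z_j$ (the with-replacement versions appearing in Proposition~\ref{lm:sbs_1} and Lemma~\ref{lem:SbS_nsamples} are handled the same way).

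For the upper bounds $p_i'\le p$, $q'\le q$, $\obsprob'\le\obsprob$ I would argue by monotonicity. Conditioning on all the edge/observation variables except $X_{uv},O_{uv}$ and on all the $E_j$'s, flipping $O_{uv}X_{uv}$ from $0$ to $1$ only raises $Z_u$ and $Z_v$; and for fixed values of the other keys the event $\{u,v\in\calI\}$ is coordinatewise nonincreasing in $(Z_u,Z_v)$ — raising one key can eject its node from the smallest $N'$ but never pulls the other node in. Hence $\Pr[u,v\in\calI\mid X_{uv}=1,O_{uv}=1]\le\Pr[u,v\in\calI\mid X_{uv}=0]$ (the observed-absent and unobserved cases being identical for the degrees), and a one-line Bayes step converts this to $p_i'\le p$; the same argument with $O_{uv}$ in place of $X_{uv}$ gives $\obsprob'\le\obsprob$, and with a between-cluster pair it gives $q'\le q$.

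The lower bounds carry the content. Here I would compare the true sampling with the ``$uv$-blind'' sampling that uses keys $E_u d_u^-$ and $E_v d_v^-$ in place of $Z_u,Z_v$ (all other keys untouched): under the blind rule the event $\{u,v\in\calI\}$ is independent of $(X_{uv},O_{uv})$, so its conditional edge probability equals $p$ exactly. Writing $a_1 = \Pr[u,v\in\calI\mid X_{uv}=1]$ and $a_0 = \Pr[u,v\in\calI\mid X_{uv}=0]$, Bayes gives $p_i'\ge p\,(a_1/a_0)$, so it suffices to lower bound $a_1/a_0$; and since the only discrepancy between the two conditionings is that in the $X_{uv}=1,O_{uv}=1$ sub-case the keys $Z_u,Z_v$ are inflated by the bounded factors $1+1/d_u^-$ and $1+1/d_v^-$ while every other key — hence every relevant order statistic — is shared, a coupling of the two samplings agrees except on a ``bad'' event attached to a degenerate endpoint. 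A union bound then caps that probability: a node lying in a cluster of size at least $\nmin$ is isolated (has $\|\ba_j\|_0=1$, i.e.\ every one of its at least $\nmin$ potential intra-cluster and at least $\nmin$ potential inter-cluster edges is absent or unobserved, using $r\ge 2$) with probability at most $(1-\obsprob p)^{\nmin}(1-\obsprob q)^{\nmin}$, so some endpoint is degenerate with probability at most $\pqprimeeps$, and absorbing one such factor per endpoint gives $a_1/a_0\ge(1-\pqprimeeps)^2$, hence $p_i'\ge p(1-\pqprimeeps)^2$ and likewise $q'\ge q(1-\pqprimeeps)^2$. For $\obsprob'$ the relevant degenerate event is instead that a column carries no observed off-diagonal entry, which has probability at most $(1-\obsprob)^N$ per node and at most $\rhoprimeeps$ after a union bound, yielding $\obsprob'\ge\obsprob(1-\rhoprimeeps)^2$.

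The step I expect to be the real obstacle is the middle of the lower bound: quantifying, uniformly and with explicit constants, how the probability that a prescribed pair survives weighted sampling \emph{without replacement} changes when a single weight is perturbed by a bounded multiplicative factor, including the indirect effect through the normalization of all the other weights. The exponential-key representation is what makes this a finite, coordinatewise-monotone computation in independent variables, so the heart of the matter is the careful accounting that turns the per-column distortion into the clean $(1-\pqprimeeps)$ and $(1-\rhoprimeeps)$ factors once the low-probability degenerate degree profiles are excised. A minor additional point: $p_i'$ might a priori depend on the chosen pair, but since every step is uniform over pairs the bounds hold pairwise and hence for any aggregate notion of sketch density.
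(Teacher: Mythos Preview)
Your upper-bound argument via the exponential-key representation is clean and does what the paper does in one line (``the presence of an edge will slightly reduce the probability of sampling the respective columns'') but with an actual mechanism behind it; that part is fine and in the same spirit as the paper.

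The lower bound, however, has a real gap at exactly the place you flagged. You assert that under the natural coupling ``the two samplings agree except on a `bad' event attached to a degenerate endpoint,'' and then bound that bad event by the isolated-node probability $\pqprimeeps$. But non-degeneracy of $u$ and $v$ does \emph{not} force the couplings to agree: even when $d_u^-,d_v^-\ge 1$, multiplying $Z_u$ by $1+1/d_u^-$ can push $Z_u$ past the $N'$-th order statistic of the other keys and eject $u$ from $\calI$. So the event ``the couplings disagree'' is not contained in ``some endpoint is isolated,'' and bounding the latter by $\pqprimeeps$ says nothing about the former. What you actually need is a bound of the form $a_1\ge(1-\pqprimeeps)^2\,a_0$, and the isolated-node union bound alone does not deliver it.

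For comparison, the paper does not couple at all. It separates the fully observed adjacency matrix $\overline{\bA}$ from the observation mask, conditions on the global event $\overline{\calA}_1=\{\text{every column of }\overline{\bA}\text{ has a nonzero off-diagonal}\}$ with $\bbP(\overline{\calA}_1^c)\le\pqprimeeps$, and then runs two applications of total probability: once to pass from $p'$ to $\bbP(\overline{a}'_{ij}=1\mid\overline{\calA}_1,(V_i,V_j)\in\calP)$ at the cost of a factor $(1-\pqprimeeps)$, and once more inside that conditional expression to compare $\bbP(\overline{a}_{v_iv_j}=1\mid\overline{\calA}_1)$ with $p$, picking up the second $(1-\pqprimeeps)$. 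The point is that the $(1-\pqprimeeps)^2$ comes from two probability-splitting steps on a single global conditioning event, not from a per-pair coupling bound; if you want to push your exponential-key approach through, you will need either to mimic that conditioning or to genuinely control $\bbP(\text{coupling disagrees})$ by something other than the isolated-endpoint probability.
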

We can readily state the following theorem which provides guarantees for clustering using SbS based on the statistical properties from Theorem~\ref{thm:sketchpqrho}.

\begin{theorem}
\label{thm:maintheoremSbS}

Define $p'$ as the intra-cluster edge probability of the sketch (assuming the probability is the same for all clusters), and $\densdiff'=1-2\max\{1-p',q'\}$.
Suppose that
\begin{enumerate}

\item
A sampling algorithm produces a sketch following the SBM with parameters
\begin{align}
p' &= p^-,
\\
q' &= q^-,
\\
\obsprob' &= \obsprob^-,
\\
\nminp &\ge \frac{\sqrt{C N'} \log N'}{\sqrt{\obsprob'} \densdiff'}, \label{eqn:34879800}
\end{align}
where $p^-$,$q^-$, and $\obsprob^-$ are as in \eqref{eqn:pminus}-\eqref{eqn:obsprobminus}.

\item
The following two conditions hold:
\begin{align}
&\nmin \ge \frac{ 16 p }{\obsprob \densdiff^2} \log\left(rN^2\right), \label{eqn:421159}
\\
&N \ge N' \ge r \balSbS' \log(2 r N) \left[ \frac{ 16 p }{\obsprob \densdiff^2} \log(N)
+ 1 \right]. \label{eq:lemm2_suffSbSB}
\end{align}

\end{enumerate}
Then, Algorithm~\ref{alg:mainalgorithm} exactly clusters the graph with probability at least $1 - c N^{-10} - 2N^{-1}$, where $c$ is a constant real number.

\end{theorem}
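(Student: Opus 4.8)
The plan is to run the three stages of Algorithm~\ref{alg:mainalgorithm} in turn -- sampling, sketch clustering, retrieval -- show each succeeds on its own high-probability event, and union bound; the argument parallels the proof of Theorem~\ref{thm:maintheoremRS}, with the SbS sampling lemma replacing its URS counterpart and Theorem~\ref{thm:sketchpqrho} supplying the effective SBM parameters of the sketch. \emph{Stage 1 (sampling).} I would invoke Lemma~\ref{lem:Suff_Optimality_Partial_SbS} with the single threshold $\nminbound = \tfrac{8p}{\densdiff^2}\log(rN^2)$ (rounded up if needed), i.e., precisely the quantity the retrieval step demands in \eqref{eqn:retrieval_suff}. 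Assumption~\eqref{eqn:14080123} on $\mumin$ gives $\suffepsB<1$, indeed $\tfrac{1+\suffepsB}{1-\suffepsB}\to 1$. The key computation is that condition~\eqref{eqn:421159}, $\nmin \ge \tfrac{16p}{\densdiff^2}\log(rN^2) = 2\nminbound$, forces the penalty factor $(1-\nminbound/\nmin)^{-1}\le 2$ in \eqref{eqn:45523235}, hence $g \le \balSbS'$ and $rg \le r\balSbS'$; together with $\log(rN^2)\le 2\log(2rN)$ this shows the requirement $N' \ge rg\,[\nminbound\log N + \log(2rN)]$ of \eqref{eqn:Sbs_samples_suff} is implied by the stated bound~\eqref{eq:lemm2_suffSbSB}. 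Lemma~\ref{lem:Suff_Optimality_Partial_SbS} then yields $\nminp \ge \nminbound \ge \tfrac{8p}{\densdiff^2}\log(rN^2)$ with probability at least $1-N^{-1}$.

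\emph{Stage 2 (sketch clustering).} Conditioned on the sampled set $\calI$, the sketch $\bA'$ is itself a partially observed SBM instance on $N'$ nodes whose densities, by Theorem~\ref{thm:sketchpqrho}, dominate the conservative values $(p^-,q^-,\obsprob^-)$ of Hypothesis~1; since a denser intra-cluster / sparser inter-cluster model is only easier to cluster -- as long as $p^->\tfrac12>q^-$, which holds once $N$ is large because $\pqprimeeps\to0$ under the growth of $\nmin$ -- it suffices to prove recovery for those worst-case parameters. Hypothesis~1 also provides $\nminp \ge \sqrt{CN'}\log N'/(\sqrt{\obsprob'}\densdiff')$, and this, together with the SBM parameters, is exactly what Theorems~\ref{thm:sketchnmin} and~\ref{thm:sketchpqrho} certify SbS achieves; it is also precisely the sufficient condition under which Algorithm~1 of \cite{Chen:2014:CPO:2627435.2670322} -- the solver invoked in Step~2.1 -- recovers the low-rank component of an $N'$-node SBM graph (their ``$K_{\min}$'' being $\nminp$, their ``$n$'' being $N'$). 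Thus the optimal point of \eqref{eq:convex_sketch} has low-rank part equal to $\bL'$ with probability at least $1-cN^{-10}$, and Steps~2.2--2.3 return the characteristic vectors $\{\bv_i\}$ of the true sketch clusters.

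\emph{Stage 3 (retrieval) and conclusion.} On the event of Stage~1 we have $\nminp \ge \tfrac{8p}{\densdiff^2}\log(rN^2)$, so Lemma~\ref{lem:Retrieve} applies unchanged and Step~3 reconstructs $\bL$ -- i.e., assigns every node of the full graph to its correct cluster -- with probability at least $1-N^{-1}$. Intersecting the three events and summing failure probabilities yields the claimed $1-cN^{-10}-2N^{-1}$. A further bookkeeping check is that the admissible window for $N'$ is nonempty: the lower bound~\eqref{eq:lemm2_suffSbSB} (together with the upper bound~\eqref{eqn:786867123} coming from Theorem~\ref{thm:sketchnmin}) must not exceed $N$, which reduces to the minimum-cluster-size scaling and is where the \smallq/ versus \largeq/ dichotomy enters.

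The main obstacle is the interlocking constant/scaling bookkeeping in Stage~1: the threshold $\nminbound$ must be chosen so as to simultaneously (i) satisfy the retrieval condition~\eqref{eqn:retrieval_suff}, (ii) be small enough relative to $\nmin$ that the $(1-\nminbound/\nmin)^{-1}$ penalty in $g$ stays bounded -- which is exactly why~\eqref{eqn:421159} carries a $16$ where~\eqref{eqn:retrieval_suff} carries an $8$ -- and (iii) make the induced sampling budget collapse to the clean form~\eqref{eq:lemm2_suffSbSB} while keeping the $\eta$-dependence (the imbalance-versus-$q$ trade-off of Remark~\ref{rem:SbsSamp1}) explicit. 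A secondary subtlety is arguing that substituting $(p^-,q^-,\obsprob^-)$ for $(p,q,\obsprob)$ in Stage~2 is asymptotically lossless, which rests on the super-polynomial decay of $\pqprimeeps$ and $\rhoprimeeps$ forced by the growth of $\nmin$.
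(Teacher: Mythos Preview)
Your proposal is correct and follows essentially the same route as the paper's proof: invoke Lemma~\ref{lem:Suff_Optimality_Partial_SbS} with $\nminbound=\tfrac{8p}{\densdiff^2}\log(rN^2)$, use \eqref{eqn:421159} to get $\nminbound\le\nmin/2$ and hence $g\le\balSbS'$, verify that \eqref{eq:lemm2_suffSbSB} implies \eqref{eqn:Sbs_samples_suff}, apply the Chen et al.\ decomposition guarantee to the sketch via hypothesis~\eqref{eqn:34879800}, then Lemma~\ref{lem:Retrieve}, and union-bound. Your write-up is in fact more explicit than the paper's on the algebra linking \eqref{eq:lemm2_suffSbSB} to \eqref{eqn:Sbs_samples_suff}; the only unnecessary detour is the monotonicity discussion in Stage~2, since the theorem already \emph{hypothesizes} that the sketch follows the SBM with parameters exactly $(p^-,q^-,\obsprob^-)$, so no comparison argument is needed there.
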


\begin{remark} \label{rem:SbsMainTheorem}
Combining Theorems~\ref{thm:sketchnmin}-\ref{thm:maintheoremSbS}, the sampling complexity will be roughly $\softOmega\left( \frac{r^2 q^2 \balRS^2}{\obsprob \densdiff^2} \right)$.
The main savings in this sampling complexity come when $q$ is sufficiently small.  In the \smallq/ setting, the sampling complexity becomes $\softOmega\left( \frac{r^2}{\obsprob \densdiff^2} \right)$, thus making the sufficient condition almost independent of the graph size.
In the \largeq/ regime, we need $\nmin = \softOmega \left( \frac{ r \sqrt{q N} }{\sqrt{\obsprob} \densdiff^2} \right)$,
while in the \smallq/ regime, we only need $\nmin = \softOmega \left( \frac{ r }{\obsprob \densdiff^2} \right)$. In the \smallq/ regime, the number of clusters must be $\bigO \left( \frac{\densdiff \sqrt{\obsprob N}}{\log^2 N} \right)$.

As shown in Table~\ref{tab:AsymptoticComparison}, the best achievability results for a convex algorithm are $\nmin=\bigOmega \left( \frac{\log N}{\obsprob \densdiff^2} \vee \frac{\sqrt{q N}}{\sqrt{\obsprob}\densdiff} \right)$ for the \largeq/ regime and $\nmin=\bigOmega \left( \frac{\log N}{\obsprob \densdiff^2} \right)$ for the \smallq/ regime \cite{Chen:2016:STP:2946645.2946672,cai2015}. In addition to the aforementioned computational gain of our sketch-based approach, our result compares favorably with these works considering that the clustering algorithm we use only guarantees success if $\nmin = \bigOmega \left( \frac{ \sqrt{N} \log N }{\sqrt{\obsprob} \densdiff} \right)$ when clustering the full graph.
Additionally, the analysis of \cite{Chen:2016:STP:2946645.2946672} requires that clusters be of equal size, and both \cite{Chen:2016:STP:2946645.2946672} and \cite{NIPS2016_6574} require strong side information, i.e., the sum of squares of cluster sizes.
\end{remark}

Note that SbS only looks at the degree, which is a count of edges. One would expect that taking into account the individual edges making up this count would yield further improvement. Next, we present a method which does just this.

\subsection{Spatial Random Sampling} \label{sec:SRS}

In Section~\ref{sec:Sbs}, we presented 
SbS which  notably raises the chance of sampling from small clusters as compared to URS.
In this section, we present our second node sampling method with which we can obtain a balanced sketch from unbalanced data. In \cite{7968311}, two of the authors proposed Spatially Random Sampling (SRS) as a new randomized data sampling method. The main idea underlying SRS is to perform the random sampling in the spatial domain. To this end, the data points (here, the columns of the adjacency matrix) are projected on the unit sphere, then points are sampled successively based on their proximity to randomly chosen directions in the ambient space. Thus, with SRS the probability of sampling from a specific data cluster depends on the amount of space the cluster occupies on the unit sphere (since SRS is applied to the normalized unit $\ell_2$-norm data points). Accordingly, the probability of sampling using SRS is nearly independent of the population sizes of the clusters \cite{7968311}.

Suppose $\bA = \bL$. In this case, the columns of $\bA$ lie in the union of $r$ 1-dimensional subspaces. In fact, these subspaces are orthogonal since the clusters are non-overlapping. The following lemma shows that if we use SRS for node sampling in this case, the probabilities of sampling from the clusters will be equal regardless of how unbalanced the data is.
For further details, we refer the reader to \cite{7968311}.
\begin{lemma}
Suppose $\bA = \bL$ and SRS is used to sample one column of $\bA$. If the corresponding node is sampled, then the probability of sampling from each cluster is equal to $1/r$. 
\end{lemma}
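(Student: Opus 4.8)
The plan is to reduce the whole statement to a rotational-symmetry argument, after noting that in the uncorrupted case every cluster collapses to a single point on the sphere.

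\emph{Step 1 (geometry of the columns).} When $\bA=\bL$ — equivalently $p=1$, $q=0$, $\obsprob=1$, with the diagonal populated by ones — the column of any node in cluster $\calC_i$ equals the indicator vector $\bone_{\calC_i}$ (ones on the coordinates in $\calC_i$, zeros elsewhere), where I write $n_i:=\lvert\calC_i\rvert$. Hence, after $\ell_2$-normalization, \emph{all} columns of cluster $i$ map to the single unit vector $\bw_i:=\bone_{\calC_i}/\sqrt{n_i}$, and since the clusters are disjoint the vectors $\bw_1,\dots,\bw_r$ are orthonormal. Therefore one draw of SRS — pick a direction $\bphi$ uniformly on $\bbS^{N-1}$ and return the column whose normalization is closest to $\bphi$, i.e. maximizes $\lvert\langle\bw_k,\bphi\rangle\rvert$ — reduces to choosing the cluster index $I(\bphi):=\argmax_{i\in\{1,\dots,r\}}\lvert\langle\bw_i,\bphi\rangle\rvert$, and ``the sampled node lies in cluster $j$'' is exactly the event $\{I(\bphi)=j\}$.

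\emph{Step 2 (well-definedness and the conditioning).} For $i\ne j$ the set $\{\bphi:\lvert\langle\bw_i,\bphi\rangle\rvert=\lvert\langle\bw_j,\bphi\rangle\rvert\}$ is contained in the union of the great subspheres cut out by $\langle\bw_i-\bw_j,\bphi\rangle=0$ and $\langle\bw_i+\bw_j,\bphi\rangle=0$, each of surface measure zero because $\bw_i\ne\pm\bw_j$; taking the union over the finitely many pairs shows that $I(\bphi)$ is almost surely uniquely determined. This almost-sure event is precisely ``the corresponding node is sampled'', so the clause in the statement is a probability-one conditioning and it suffices to compute the unconditional law of $I(\bphi)$.

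\emph{Step 3 (rotational symmetry — the crux).} Fix any permutation $\sigma$ of $\{1,\dots,r\}$, extend $\{\bw_1,\dots,\bw_r\}$ to an orthonormal basis of $\bbR^N$, and let $\bO$ be the orthogonal matrix sending $\bw_i\mapsto\bw_{\sigma(i)}$ and fixing the remaining basis vectors. Since the uniform law on $\bbS^{N-1}$ is invariant under orthogonal maps, $\bO^{T}\bphi\dequal\bphi$; and from $\lvert\langle\bw_i,\bO^{T}\bphi\rangle\rvert=\lvert\langle\bO\bw_i,\bphi\rangle\rvert=\lvert\langle\bw_{\sigma(i)},\bphi\rangle\rvert$ one reads off the equivariance $I(\bO^{T}\bphi)=\sigma^{-1}\bigl(I(\bphi)\bigr)$. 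Hence $\sigma^{-1}(I(\bphi))\dequal I(\bphi)$ for every $\sigma$, so the law of $I(\bphi)$ on the $r$-element index set is invariant under the whole symmetric group; the only such distribution is the uniform one, giving $\Pr[I(\bphi)=j]=1/r$ for every $j$, which is the claim.

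\emph{Main obstacle.} There is no substantive obstacle: the lemma is essentially this single symmetry observation. The only points requiring a little care are (i) ruling out ties so that the SRS selection rule is unambiguous — this is exactly where the hypothesis ``if the corresponding node is sampled'' is absorbed — and (ii) checking robustness to the precise SRS scoring convention: whether columns are scored by $\langle\bw_i,\bphi\rangle$ or by $\lvert\langle\bw_i,\bphi\rangle\rvert$, the score of $\bw_i$ at $\bO^{T}\bphi$ equals the score of $\bw_{\sigma(i)}$ at $\bphi$, so $I(\cdot)$ is $\sigma$-equivariant either way and the conclusion is unchanged.
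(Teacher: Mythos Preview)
Your proof is correct. The reduction in Step~1 to $r$ orthonormal points on the sphere is exactly the right observation, and the rotational-symmetry argument in Step~3 cleanly finishes it; the tie-handling in Step~2 and the robustness remark about the scoring convention are nice touches.

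As for comparison with the paper: the paper does not actually prove this lemma. It simply states the result and appends ``See \cite{7968311} for more details,'' deferring the argument entirely to the reference that introduced SRS. So your write-up is in fact more than what the paper provides --- a self-contained proof rather than a citation. The underlying idea (that in the clean case each cluster occupies a single direction on the sphere, and orthogonal directions are treated symmetrically by a rotationally invariant sampling rule) is presumably the same as what one would find in the cited work, but your symmetry-group formulation via an explicit orthogonal map permuting the $\bw_i$ is a particularly clean way to make it rigorous.
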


Algorithm \ref{alg:SRSnodesampling} contains the SRS node sampling algorithm.
We now describe the main steps of this algorithm.
\begin{algorithm}
\caption{Efficient Node Sampling using SRS}
{\footnotesize
\textbf{Input}: Given adjacency matrix $\bA \in \mathbb{R}^{N \times N} $
\smallbreak

\textbf{1. Pre-completion}: Apply Algorithm \ref{alg:precomplete} to $\bA$ to obtain $\bA^c$.

\textbf{2. Random embedding:} Calculate $\bA^c_{\phi} = \mathbf{\Phi} \bA^c$, where $ \mathbf{\Phi} \in \mathbb{R}^{m \times N}$ is a random binary matrix. Normalize the $\ell_2$-norm of the columns of $\bA^c_{\phi}$, i.e., set ${\ba^c_{\phi}}_i = \frac{{\ba^c_{\phi}}_i}{ \| {\ba^c_{\phi}}_i \|_2}$, where ${\ba^c_{\phi}}_i$ is the $i^{th}$ column of $\bA^c_{\phi}$.

\textbf{3. Node Sampling:} Invoke Algorithm~\ref{alg:SRSColumnSampling} to sample $N'$ nodes/columns (without replacement) from matrix $\bA^c_{\phi}$. 

\smallbreak
\textbf{Output:} Set $\calI$ as the set of the indexes of sampled nodes/columns in Step 3. 
}
\label{alg:SRSnodesampling}
\end{algorithm}

\subsubsection{Preparing data for SRS}
Because SRS is not designed to support missing values, an important issue with adopting SRS as our node sampling algorithm is the missing values of the adjacency matrix $\bA$. One possible and easy solution is to replace the missing values with zeros. While this works well when a small fraction of the elements are missing, it degrades the performance of SRS if a notable part of the adjacency matrix is missing. The main reason stems from the previous observation, that is, if the data is clean and complete, then the columns corresponding to a cluster lie in a 1-dimensional subspace. If we replace the missing values with zeros, the columns corresponding to the large clusters significantly diffuse in the space. Thus, SRS ends up sampling many columns from the large clusters. We address this problem by pre-completing the adjacency matrix. The larger clusters are easily captured using random sampling. Thus, we apply a modified version of Algorithm~\ref{alg:mainalgorithm} with URS to complete $\bA$.
Algorithm \ref{alg:precomplete} presents the pre-completion step.

\begin{algorithm}
\caption{Data Pre-Completion}
\label{alg:precomplete}
{\footnotesize
\textbf{Input}: Given adjacency matrix $\bA \in \mathbb{R}^{N \times N} $
\smallbreak

\textbf{1. Random Node Sampling:}

\textbf{1.1} Form the set $\calI$ consisting of indices of $N'$ randomly sampled nodes and construct $\bA' \in \mathbb{R}^{N' \times N'}$ as the sub-matrix of $\bA$ corresponding to the sampled nodes.  

\smallbreak

\textbf{2. Subgraph Clustering (same as step 2 of Algorithm~\ref{alg:mainalgorithm}):}

\textbf{2.1}
Solve optimization problem \eqref{eq:convex_sketch} with $\lambda = \frac{1}{32 \sqrt{N' \overline{\obsprob}}}$ to get optimal points $\bL'_{*}$ and $\bS'_{*}$.
Optionally, Algorithm 1 of \cite{Chen:2014:CPO:2627435.2670322} may be used.

\textbf{2.2} Cluster the subgraph corresponding to $\bA'$ using $\bL'_{*}$ (we use Spectral Clustering in our experiments). 

\textbf{2.3} 
If $\hat{r}$ is the number of detected clusters, define $\{ \bv_i \in \mathbb{R}^{N'} \}_{i=1}^{\hat{r}}$ as the collection of characteristic vectors of the actual clusters in the sketch matrix.

\smallbreak
\textbf{3. Adjacency Matrix Generation:}\\ 
\textbf{3.1} Initialize $\bU \in \mathbb{R}^{N \times \hat{r}}$ as a zero matrix.\\
\textbf{3.2 For} $k$ from 1 to $N$\\
\indent \quad $j = \arg \min_{i} \| {\ba_k}_{\calI} - \bv_i \|_2  $ \\
\indent \quad If $j > 0$, then $u_{kj} = 1$, where $u_{kj}$ is the element of matrix $\bU$ at the $k\textsuperscript{th}$ row and $j\textsuperscript{th}$ column.
\\
\indent \quad Note: The index of the columns/rows starts from 1 (similar to MATLAB).

\textbf{3.2 End For}

\smallbreak
\textbf{4. Output:} Compute the completed matrix $\bA^c = \bU \bU^T + \bA$ and clamp the elements of $\bA^c$ which are greater than 1 to 1.

}
\end{algorithm}

\subsubsection{Random Embedding}
In contrast to the low rank approximation-based column sampling methods, SRS is not sensitive to the existence of linear dependence between the clusters \cite{7968311}. Thus, in order to reduce the computational complexity, first we embed the columns of the adjacency matrix using a computationally efficient embedding method to reduce the dimensionality of $\bA$. For instance, we can use a random binary embedding matrix (whose elements are independent random variables with values $\pm 1$ with equal probability) to embed the 
columns of $\bA$ into a lower dimensional space and apply the SRS algorithm to the embedded data. Data embedding using a random binary matrix is significantly faster than using a conventional random Gaussian matrix since it does not involve numerical multiplication, and comes with no significant loss in performance.

\subsubsection{Column Sampling}
Finally, we invoke the SRS column sampling algorithm of \cite{7968311}, which is listed here in Algorithm~\ref{alg:SRSColumnSampling}.

\begin{algorithm}
\caption{Sample $N'$ columns of $\bD  \in \mathbb{R}^{m \times N}$ using Spatial Random Sampling (SRS) \cite{7968311}}
{\footnotesize
\textbf{Input}: Data matrix $\bD$ and $N'$ as the number of sampled columns.

\smallbreak
\textbf{Initialization}: Construct matrix $\mathbf{\Phi} \in \mathbb{R}^{N' \times m}$ by sampling independently from $\calN (0,1)$. Set $\bY$ equal to an empty matrix.

\smallbreak
\textbf{1. Data Normalization}:
Define $\bX \in \mathbb{R}^{m \times N}$ such that $\bx_i = \bd_i / \| \bd_i \|_2$.

\smallbreak
\textbf{2. Column Sampling}:\\
\textbf{2.1} Set $\bQ =    \mathbf{\Phi} \bX  $ and set $\calI = \emptyset$.

\textbf{2.2 For} $i = 1$ to $N'$\\
\textbf{2.2.1} Define $\bh = \left| \bq^i \right|$ and set $\bh_{\calI} = 0$ where $\bq^{i}$ is the $i^{\text{th}}$ row of $\bQ$ and $\bh_{\calI}$ are the elements of $\bh$ with indices in $\calI$. \\
\textbf{2.2.2} Append $k$ to $\calI$ where $k$ is the index of the maximum element of $\bh$. \\
\textbf{2.2 End For}

\smallbreak
\textbf{Output:}  $\calI$ as the set of the indices of the sampled columns. 
}
\label{alg:SRSColumnSampling}
\end{algorithm}

\section{Computational Complexity} \label{sec:ComputComplex}

We first consider the complexity of a single iteration of the convex optimization in Step~2 of Algorithm~\ref{alg:mainalgorithm}, which comprises the principal cost in most cases.
The complexity of each iteration is dominated by the SVD computation, requiring $\bigOmega(r N^2)$ computations per iteration for the full graph, and $\bigOmega(r N'^2)$ for the sketch.

If URS is used, based on the required number of samples, we have a cost of order $\softO \left( \frac{ r f^4 }{\obsprob^2 \densdiff^4} \right)$ per iteration.  If the graph is perfectly balanced, then the cost reduces to only $\softO \left( \frac{ r^5 }{\obsprob^2 \densdiff^4} \right)$, a considerable saving when the number of clusters scales sublinearly with $N$.  Likewise, the sketch subgraph can be as small as $\softO(r^2)$, which can significantly reduce the memory requirements, or even allow processing of large graphs that are virtually impossible to cluster otherwise.

For SbS, the computational complexity is roughly $\softO\left( \frac{r^3 q^4 f^4}{\obsprob^2 \densdiff^4} \right)$ for the \largeq/ regime.
However, in the \smallq/ regime, this can become as small as $\softO\left( \frac{r^5}{\obsprob^2 \densdiff^4} \right)$, making the decomposition almost independent of the graph size.

The computational complexity of Step~\ref{alg:SampleStep} of Algorithm~\ref{alg:mainalgorithm} will depend on the sampling algorithm used. For URS, this is linear in $N'$ and independent of the graph size. For SbS, a linear number of $\ell_0$-norms must be calculated. For SRS, the sampling step is of order $\bigO(N' N^2)$, i.e., for each sample, we need an inner product with each column, albeit the binary embedding makes these inner products cheaper, of order $\bigO(m N' N)$, where $m$ is the dimension of the embedding space. 

The complexity of the retrieval step is $\bigO(r N' N)$ since it requires an inner product with each column for each cluster.

\section{Numerical and Experimental Results}
\label{sec:numericalresults}
In this section, we evaluate the performance of the proposed randomized framework and compare it to algorithms which cluster the full-scale graph. First, we demonstrate the substantial speedups afforded by the proposed approach in balanced graphs, and then perform experiments with unbalanced data to showcase the effectiveness of the SbS and SRS schemes. Cases are presented in which the proposed method with SbS and SRS can even outperform full-scale decomposition. An experiment is considered successful if the algorithm exactly reconstructs the low rank matrix $\bL$ with no errors.
Unless otherwise noted, each data point in the plots is obtained by averaging over 20 independent runs.

After solving the optimization problem of \cite{Chen:2014:CPO:2627435.2670322,cai2015}, we estimate the clusters using Spectral Clustering \cite{6482137} applied to the obtained low rank component.
As with \cite{NIPS2014_5309}, we use $\lambda=1 / \sqrt{N'}$ (or $\lambda=1 / \sqrt{N}$ when clustering the full graph), a value which we found to work well across a wide range of regimes.
In all line plots, we compare against the full graph clustering algorithms \cite{cai2015,NIPS2016_6574,Chen:2014:CPO:2627435.2670322}.

\subsection{Clustering Balanced Graphs}
We first consider the case of balanced graphs, where all three of the sampling methods yield speedups without sacrificing accuracy.
The parameters of the Data Model~\ref{datamodel} are set to $r = 2$, $p=0.8$, $q = 0.1$,  $\obsprob = 0.7$, and $n_1 \!=\! n_2 \!=\! N/2$.
For the randomized method we use URS with $N' = 200$ samples.
The running time in seconds is shown as a function of $N$ in Fig. \ref{fig:Balanced}(a).
The full-scale and sketching-based approaches cluster the data accurately in every case. However, in all cases the randomized method is substantially faster, running in less than 2.5 seconds. The full-scale decomposition \cite{Chen:2014:CPO:2627435.2670322}, on the other hand, ranges from 4.1 seconds for $N=500$ to 661.5 seconds for $N=10000$, with the other full-scale algorithms running slower than this.
The main factor in the fast execution speed of the randomized approach is that the low rank plus sparse decomposition algorithm is applied to the much smaller sketch matrix. While the full-scale clustering algorithms exhibit a polynomial increase in runtime with respect to $N$, because the sketch matrix is fixed in size for this simulation, the URS runtimes are roughly constant as $N$ increases (SbS and SRS are not shown due to their similar timings in this regime).
The complexity of the retrieval step of Algorithm~\ref{alg:mainalgorithm} is linear with $N$, making its run time impact insignificant.

Fig.~\ref{fig:Balanced}(b) shows the phase transition of the randomized approach with URS in terms of $N$ and $N'$, using the same scenario as in Fig.~\ref{fig:Balanced}(a). Even with only 75 samples, the proposed approach can achieve 100\% success for $500 \le N \le 5000$.

\begin{figure}
\centering
\includegraphics[scale=1]{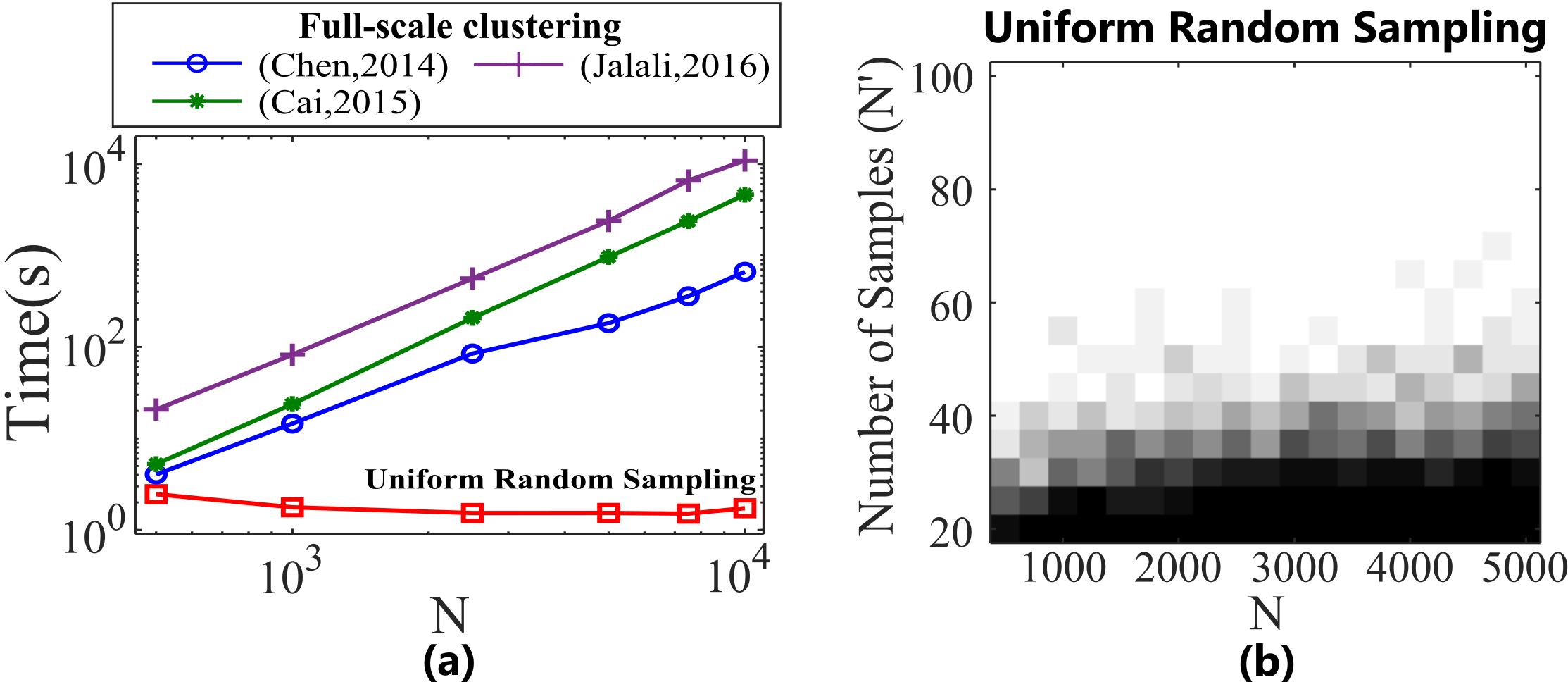}
\vspace{-.65cm}
\caption{(a) Timing comparison between Algorithm~\ref{alg:mainalgorithm} with URS, full-scale decomposition with (Chen,2014) \cite{Chen:2014:CPO:2627435.2670322}, (Cai,2015) \cite{cai2015}, and (Jalali,2016) \cite{NIPS2016_6574}.
Times are averaged over five runs.
(b) Phase transition plot for URS.  White regions indicate success and black regions failure.
}
\label{fig:Balanced}
\end{figure}

\subsection{Clustering Unbalanced Graphs}

\subsubsection{Study of proposed sampling methods}

In this experiment, we demonstrate the performance of the proposed randomized approach in terms of sampling complexity and minimum cluster size.
The graph follows Data Model~\ref{datamodel} with parameters $p=0.8$, $q=0.1$,  $\obsprob=0.7$, and $N=5000$. The graph consists of two small clusters with sizes $n_1 \!=\! n_2 \!=\! \nmin$, and one large cluster with size $n_3 \!=\! 5000 \!-\! 2 \nmin$.
Phase transition plots are shown in Fig.~\ref{fig:RandPhasePlots}, comparing the URS, SbS, and SRS sampling techniques with respect to sample complexity and minimum cluster size.
For SRS, we set $m=500$ in Algorithms~\ref{alg:SRSnodesampling} and \ref{alg:SRSColumnSampling}. Additionally, we found that sampling part of the sketch using URS improved the success rate: here we acquire $N'/2$ samples via URS, and $N'/2$ via SRS.

The phase transitions are shown over the domain $50 \!\le\! N' \!\le\! 700$ and $50 \!\le\! \nmin \!\le\! 700$. The algorithm which uses SbS can yield exact clustering even when the URS-based algorithm fails due to highly unbalanced data. For example, when $\nmin \!=\! 200$, the SbS algorithm extracts accurate clusters using only $N' \!=\! 200$ samples (only 4\% of the total nodes).
\begin{figure}
\centering
\includegraphics[scale=1]{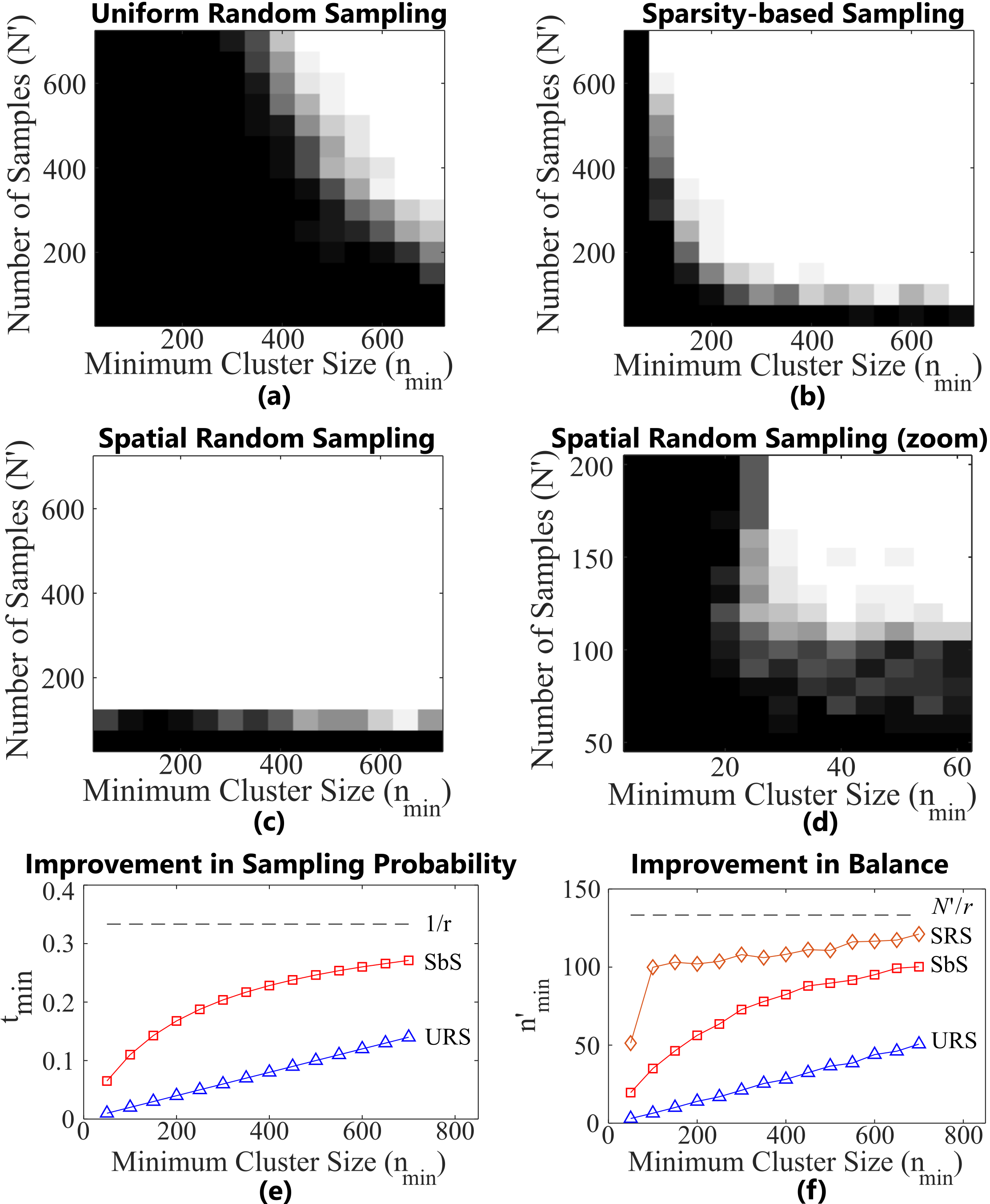}
\vspace{-.65cm}
\caption{Phase transition plots for (a) URS, (b) Sparsity-based Sampling, and (c) Spatial Random Sampling.
(d) provides a zoom into the lower left-hand region of (c).
For fixed $N'=400$. (e) shows the probability of sampling from the smallest cluster, and (f) shows the minimum size of the sketch clusters.
}
\label{fig:RandPhasePlots}
\end{figure}
SRS performs exceedingly well for very small cluster sizes, so much so that that the phase transition cannot be seen in (c).
Fig. ~\ref{fig:RandPhasePlots}(d) shows a zoom of the lower-left portion of panel (c) so that the phase transition is visible.

The improved performance of SbS is due to the more balanced sketches, stemming from the larger sampling probability that SbS places on the columns in the small clusters.
Fig.~\ref{fig:RandPhasePlots}(e) shows the probability with which the smallest cluster is sampled for URS and SbS, corresponding to the $N'=400$ row in Fig.~\ref{fig:RandPhasePlots}(a) and (b). The dashed line shows the ideal probability in which each cluster has equal sampling probability.
The probability for URS is approximately $\nmin / N$, while SbS provides at least two-fold improvement over URS.

Fig.~\ref{fig:RandPhasePlots}(f) shows the resulting minimum cluster sizes in the sketch, again for $N'=400$.
The black dashed line shows the ideal minimum cluster size to attain the most balanced sketch, i.e. where the clusters are equal-sized.
For URS and SbS, we see the same trends as found in Fig.~\ref{fig:RandPhasePlots}(e).
We also show the results for SRS, where the proportions come close to the ideal.

\subsubsection{Comparison of proposed algorithm with full scale clustering}

In addition to being significantly faster than the full-scale clustering algorithms, the proposed algorithm can even \textit{outperform} the full-scale algorithm in terms of success rate.
As described in Remark~\ref{rem:SbsMainTheorem}, this occurs when the inter-cluster probability $q$ is sufficiently small.
We remark that this does not violate the data processing inequality which indicates that post-processing cannot increase information \cite{Cover:2006:EIT:1146355}.
Rather, many full-scale clustering algorithms are not robust to data unbalancedness in the sense that they often fail to yield accurate clustering with unbalanced data. 

As an example, consider the scenario where $p=0.6$, $q=0.01$, $\obsprob=0.4$, $N=5000$, with a graph composed of three clusters with two small clusters of size $n_1 \!=\! n_2 \!=\! \nmin$ and one dominant cluster of size $n_3 \!=\! 5000 \!-\! 2 n_{min}$. Fig.~\ref{fig:FullVsSamp}(a) compares the probability of success of the proposed randomized with SbS and the three full-scale clustering algorithms, as a function of $\nmin$.  Using SbS and SRS with 800 sampled nodes (only 16\% of the total number nodes), the proposed approach yields exact clustering even when $\nmin = 120$. On the other hand, when $\nmin \le 220$, the full-scale decomposition algorithm \cite{Chen:2014:CPO:2627435.2670322} fails to yield accurate clustering.
\begin{figure}
\centering
\includegraphics[scale=1]{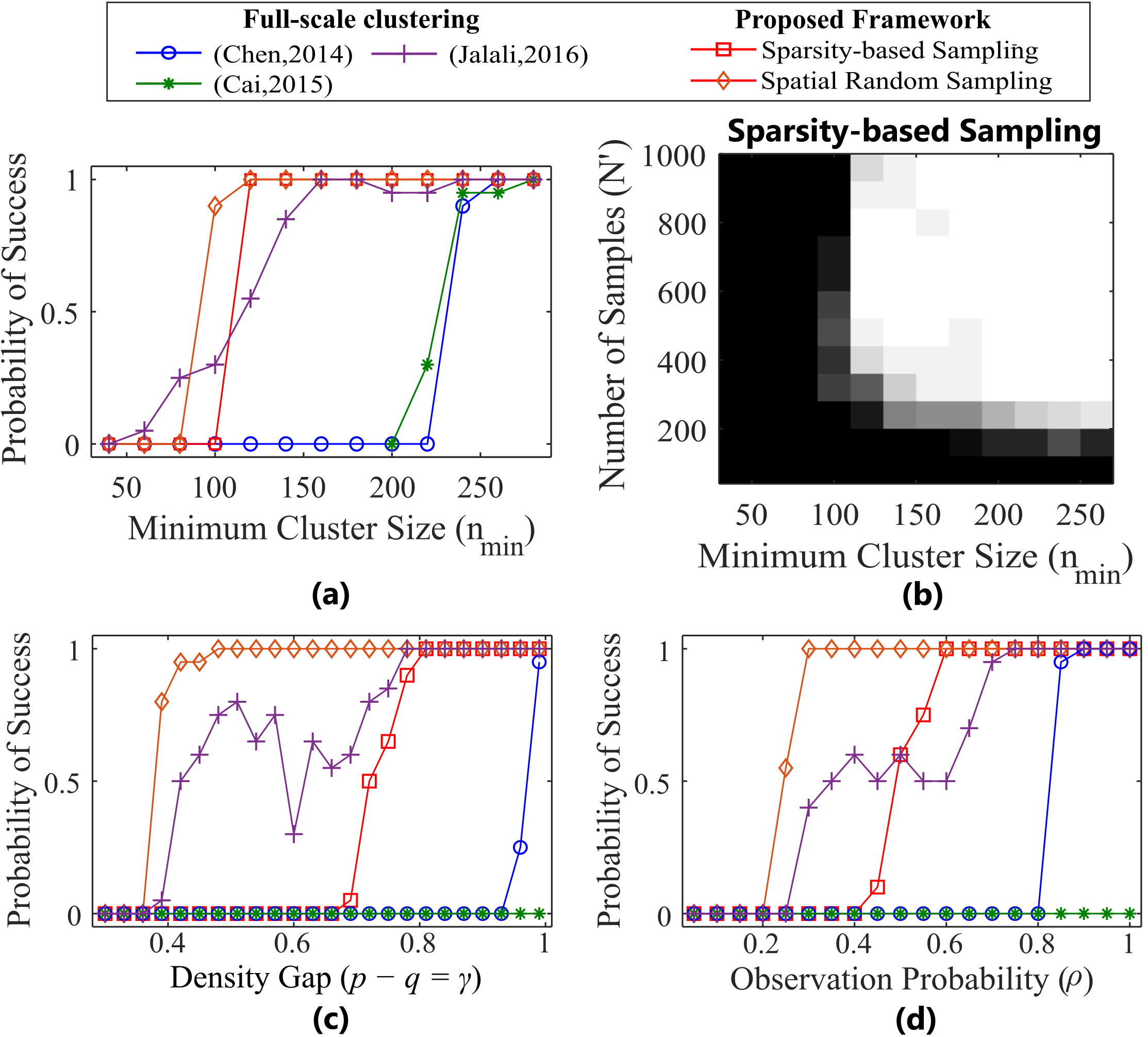}
\vspace{-.65cm}
\caption{(a) Success probability for full-scale clustering and SbS approaches as a function of $\nmin$.  (b) Phase transition with SbS.}
\label{fig:FullVsSamp}
\end{figure}
While \cite{cai2015} can have stronger asymptotic guarantees than \cite{Chen:2014:CPO:2627435.2670322} if $q \rightarrow 0$, in this case it has similar performance.
In fact, the \largeq/ lower bound on $\nmin$ for \cite{cai2015} (see Table~\ref{tab:AsymptoticComparison}) consists of a maximum over two terms. For these parameters, the two terms are almost equal, suggesting that $q$ is large enough to degrade the performance of \cite{cai2015}.
The algorithm of \cite{NIPS2016_6574} experiences successes for small $\nmin$, but does not have a sharp phase transition, and underperforms SbS and SRS in terms of exact recovery.
We note that URS fails in all regimes of Fig.~\ref{fig:FullVsSamp} and Fig.~\ref{fig:Varyingr}, and we therefore omit it from the plots.

Fig.~\ref{fig:FullVsSamp}(b) shows the phase transition in terms of $\nmin$ and $N'$ of the randomized approach with SbS under the same setup as in Fig.~\ref{fig:FullVsSamp}(a). This plot shows that SbS has the potential to perform well even if we reduce the number of samples.

In Fig.~\ref{fig:FullVsSamp}(c), a similar plot is shown as in Fig.~\ref{fig:FullVsSamp}(a), but with varying density difference $\densdiff$. We set $p\!=\!(1\!+\!\densdiff)/2$ and $q\!=\!(1\!-\!\densdiff)/2$, which is equivalent to setting the density difference and density gap equal, i.e. $\densdiff \!=\! p\!-\!q$. For this plot, the remaining parameters are $\nmin \!=\! n_1 \!=\! n_2 \!=\! 90$, $n_3 \!=\! 4820$, $\obsprob \!=\! 0.7$, $N' \!=\! 500$.
Likewise, in Fig.~\ref{fig:FullVsSamp}(d), we show results for varying observation probability $\obsprob$, where $\nmin \!=\! n_1 \!=\! n_2 \!=\! 100$, $n_3 \!=\! 4800$, $p \!=\! 0.8$, $q \!=\! 0.1$, $N' \!=\! 1000$.

In all of the previous examples, we set $r\!=\!3$. By having two small clusters of equal size, we prevent the possibility that the clusters are identifiable purely by observing the degree (this may be possible, for example, if $r\!=\!2$ with a very large and a very small cluster).
In Fig.~\ref{fig:Varyingr}, we show results when $r$ is increased with clusters of varying size.
\begin{figure}
\centering
\includegraphics[scale=1]{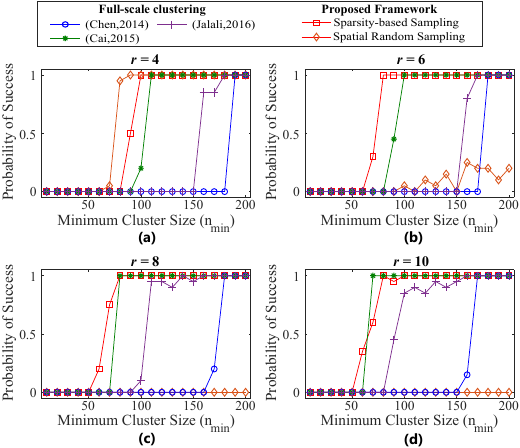}
\vspace{-.65cm}
\caption{Phase transitions for different numbers of clusters.}
\label{fig:Varyingr}
\end{figure}
In all cases, the following parameters are used: $N \!=\! 6000$, $N' \!=\! 1000$, $\obsprob \!=\! 0.4$, $p \!=\! 0.6$, $q \!=\! 0.01$.
For Fig.~\ref{fig:Varyingr}(a), there are four clusters with cluster sizes $n_1\!=\!n_2 \!=\! \nmin$, and $n_3 \!=\! n_4 \!=\! (N-2\nmin)/2$.
For Fig.~\ref{fig:Varyingr}(b), there are six clusters with cluster sizes $n_1 \!=\! n_2 \!=\! \nmin$, $n_3 \!=\! n_4  \!=\! 200$, and $n_5 \!=\! n_6 = (N-400-2\nmin)/2$.
For Fig.~\ref{fig:Varyingr}(c), there are eight clusters with cluster sizes $n_1 \!=\! n_2 \!=\! n_3 \!=\! \nmin$, $n_4 \!=\! n_5 \!=\! n_6 \!=\! 200$, and $n_7 \!=\! n_8 \!=\! (N-600-3\nmin)/2$.
Finally, for Fig.~\ref{fig:Varyingr}(d), there are 10 clusters with cluster sizes $n_1 \!=\! n_2 \!=\! n_3 \!=\! n_4 \!=\! \nmin$, $n_5 \!=\! n_6 \!=\! 300$, $n_7 \!=\! n_8 \!=\! 200$, and $n_9 \!=\! n_{10} \!=\! (N-1000-4\nmin)/2$.
These results show that SbS performs competitively, even against \cite{cai2015}, for all the presented graph structures, but the performance of SRS degrades in these scenarios as $r$ increases.

\subsubsection{Performance of SRS with many inter-cluster errors}

Now, we turn our focus to SRS.
Smaller values of $p$ and $\rho$ will increase the dispersion of the larger cluster, thus degrading the results of SRS. However, this degradation is offset when the value of $q$ is increased, thus causing dispersion of the smaller clusters as well.
To illustrate this, in Fig.~\ref{fig:SBSvsSRShighq}, we show the results when $q$ is larger.  As in Fig.~\ref{fig:RandPhasePlots}, we set $p=0.8$,  $\obsprob=0.7$, and $N=5000$, but here $q=0.23$.  
\begin{figure}
\centering
\includegraphics[scale=1]{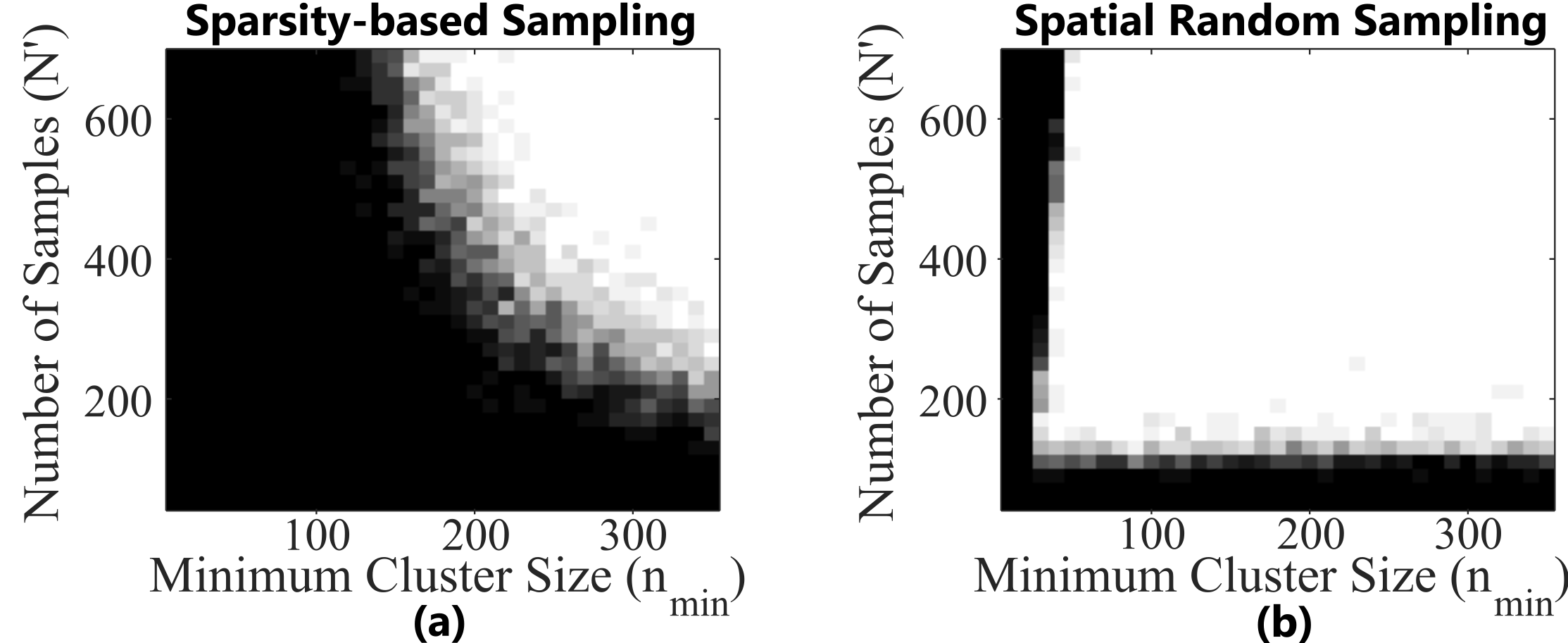}
\vspace{-.65cm}
\caption{Comparison of (a) SbS and (b) SRS for $q$ large.}
\label{fig:SBSvsSRShighq}
\end{figure}
In this regime, SbS performs worse with larger $q$, showing no successes for $n_{min} \le 100$, whereas SRS continues to perform well for small $n_{min}$.

\subsection{Analytic Guarantees}
In this section, we evaluate the sufficient conditions of the lemmas in Sections~\ref{sec:ProposedURS} and \ref{sec:ProposedSbSSRS} by comparing against numerical results.
We focus on two critical steps in the proposed algorithm: the sampling step and the retrieval step.

For the sampling process, we first demonstrate the improvement in sampling probability afforded by SbS.
In Fig.~\ref{fig:Analysis}(a), we calculate and plot the URS and SbS sampling probabilities for a graph with the following parameters: $r \!=\! 2$, $\nmin \!=\! n_1 \!=\! 1000$, $p \!=\! 0.8$, $\obsprob \!=\! 0.7$, $n_2 \!=\! N \!-\! \nmin$, $q \!=\! 0.1 \balRS^{-1}$.
This scaling of $q$ puts the graph into the \smallq/ regime.
Along the x-axis, the size of the graph increases while the smallest cluster size remains constant.
For comparison, we calculate and plot the lower bound on $\clustsampprob_{min}$ from Lemma~\ref{lem:SbS_nsamples}, i.e. the right hand side of \eqref{eqn:9024099}.
Both the bound and the actual $\clustsampprob_{min}$ remain almost constant, while the cluster sampling probability of URS decreases as the graph becomes more imbalanced.
Fig.~\ref{fig:Analysis}(b) uses the same graph as (a), but shows the number of samples required to obtain at least $\nminbound \!=\! 10$ sampled nodes from each cluster.
To obtain this plot, we sample one node at a time from the full graph until $n_i' \ge \nminbound$ for all $1 \!\le\! i \!\le\! r$.
Once this threshold is attained, the sampling stops and the number of samples is reported as the sufficient $N'$ in the plot.
We perform this test for both URS and SbS, and plot the corresponding sufficient conditions from the lemmas.
For URS, this bound is the right hand side of \eqref{eqn:Suff_Optimality_Partial_RS_uniform} from Lemma~\ref{lem:Suff_Optimality_Partial},
and for SbS, this bound is the right hand side of \eqref{eqn:Sbs_samples_suff} from Lemma~\ref{lem:Suff_Optimality_Partial_SbS}.
The bounds trend the same as the numerical results, although the plot suggests that the sufficient conditions could be tightened.

For the retrieval process, we first generate the characteristic vectors for a sketch having three clusters with sizes $n_1' \!=\! n_2' \!=\! \nminp$ and $n_3' \!=\! 10^5 \!-\! 2 \nminp$.
We then perform the retrieval step against a full graph generated from an SBM with cluster sizes $\nmin \!=\! n_1 \!=\! n_2 \!=\! 10^4$ and $n_3 \!=\! 10^6 \!-\! 2 \nmin$, and plot the number of successes in 20 trials.
In Fig.~\ref{fig:Analysis}(c), we set $\obsprob = 0.7$ and vary $\densdiff = p-q$.
\begin{figure}
\centering
\includegraphics[scale=1]{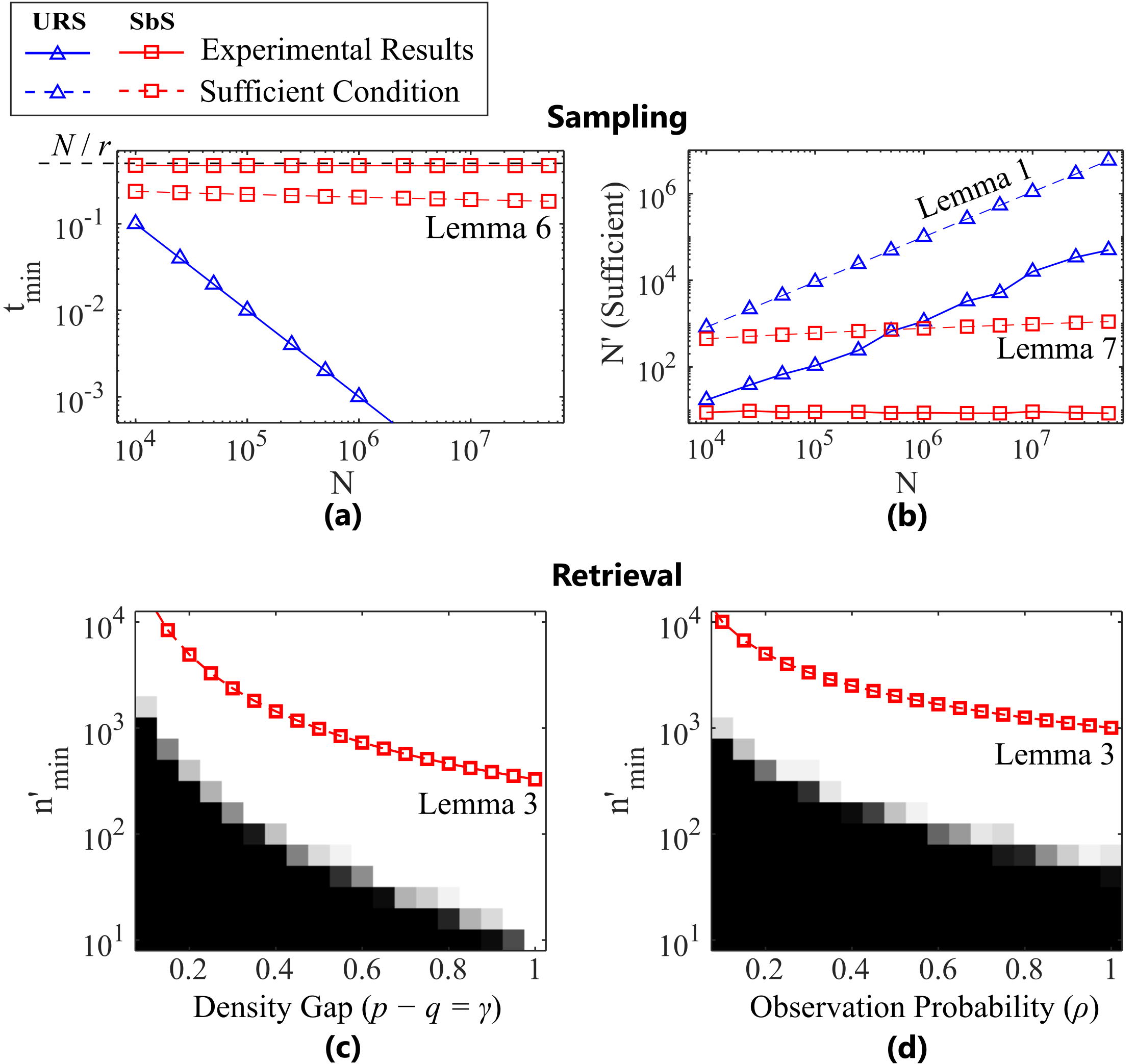}
\vspace{-.65cm}
\caption{Simulations supporting the lemmas for (a) sampling probability, (b) sufficient number of samples, and (c)-(d) sufficient conditions for retrieval.}
\label{fig:Analysis}
\end{figure}
In Fig.~\ref{fig:Analysis}(d), we set $\densdiff = p-q = 0.4$ and vary $\obsprob$.
The sufficient condition, found in the right hand side of \eqref{eqn:retrieval_suff} from Lemma~\ref{lem:Retrieve} is overlaid on top of the plot.
The region to the upper-right of the line is the region where success is guaranteed whp, and we can see that the success rate is indeed high in this region for both plots.

\subsection{Invoking Arbitrary Algorithms for Sketch Clustering}

As mentioned earlier, we can invoke any clustering algorithm in Step 2.1 of Algorithm~\ref{alg:mainalgorithm}.
In fact, the asymptotic results presented in Table~\ref{tab:AsymptoticComparison} indicate that the algorithms of \cite{cai2015,NIPS2016_6574} should work better than \cite{Chen:2014:CPO:2627435.2670322}, if they are used in the clustering step. 
Here, we repeat the simulations of Fig.~\ref{fig:FullVsSamp}(a) and (c), using both the SbS and SRS sampling methods, except here we use \cite{cai2015,NIPS2016_6574,Chen:2014:CPO:2627435.2670322} in step 2.1 of Algorithm~\ref{alg:mainalgorithm}.
\begin{figure}
\centering
\includegraphics[scale=1]{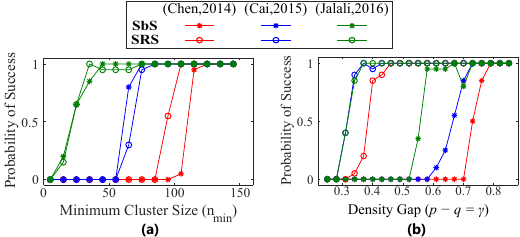}
\vspace{-.65cm}
\caption{Example of proposed framework invoking (Cai,2015) and (Jalali,2016)}
\label{fig:SamplieCaiJalali}
\end{figure}
Fig. \ref{fig:SamplieCaiJalali} shows that \cite{cai2015,NIPS2016_6574} indeed exhibit improved results for a given sampling technique over \cite{Chen:2014:CPO:2627435.2670322} in these regimes.

\subsection{Real-world Data and Graphs with Large Degree Variations}
\label{sec:experimentalresults}

The results presented up to this point have all used synthetic data.
Now we consider the application of the proposed framework to community detection in real data.
We must first address a challenging aspect of real-world data, which is the large variation and skew typically found in the degree distribution of their graphs.
This degree profile usually cannot be captured accurately by the SBM, thus motivating the creation of better fitting extensions such as the popular Degree Corrected SBM (DCSBM) \cite{PhysRevE.83.016107}.
Likewise, without modification, many algorithms which work well on the SBM fail to properly cluster graphs with skewed degree distributions.
Our framework is no exception.
To counteract the detrimental effects of the degree variations on SbS, we make a heuristic modification to this sampling method.
Specifically, rather than sampling node $j$ proportionally to $\frac{1}{\norm{\ba_j}_0}$, we sample proportionally to $\frac{\norm{\ba_j}_0}{\sum_{k = 1}^N a_{jk} \norm{\ba_k}_0}$, i.e. inversely proportional to the average degree of the node and its neighbors.

We first provide a result using synthetic data generated by the DCSBM.
Each node $i$ has a free parameter $\theta_i$ associated with it to influence the degree of the node. The probability of an edge between two nodes $i,j$ is $\theta_i \theta_j p$ if the nodes are in the same cluster, and $\theta_i \theta_j q$ otherwise. Following the examples presented in \cite{chen2018}, we draw each $\theta_i$ independently from a Pareto distribution. Specifically, this is the distribution having probability density function $f(x) \!=\! \alpha \beta^\alpha / x^{\alpha+1}$, where $\beta \!=\! (\alpha\!-\!1)/\alpha$.
Here, rather than invoking the low rank plus sparse decomposition algorithm \cite{Chen:2014:CPO:2627435.2670322}, we invoke the Convexified Modularity Maximization (CMM) algorithm which copes well with DCSBM graphs \cite{chen2018}.
Results are shown in Table~\ref{tab:RealWorldData}, with the average and standard deviation calculated over 20 trials.
\begin{table*}[t!]
  \centering
  \caption{\bf Results with graphs having skewed degree distributions (mean $\pm$ standard deviation) }
  \begin{tabular}{c c c c cc ccc c}
  \hline
  \multirow{2}{*}{Dataset} & \multirow{2}{*}{$N$} & \multirow{2}{*}{$N'$} & \multirow{2}{*}{$\frac{N}{\nmin}$} & \multicolumn{2}{c}{Full (CMM)} &  \multicolumn{3}{c}{Averaged SbS (CMM)} & URS (CMM) \\
  \cmidrule(lr){5-6} \cmidrule(lr){7-9} \cmidrule(lr){10-10}
  & & & & \# Errors & Time (s) & $\frac{N'}{\nminp}$ & \# Errors & Time (s) & \# Errors \\
  \hline
    DCSBM (Balanced, $\alpha \!=\! 2$)     & $7000$ & $1000$ & $3$          & $0.05 \pm 0.22$         & $451$ & $3.2 \pm 0.1$ & $0$                    & $8.0$ & $0$ \\
    DCSBM (Unbalanced, $\alpha \!=\! 2$)   & $7000$ & $1000$ & $7$ & $1662 \pm 400$ & $457$          & $4.5 \pm 0.2$ & $0.4 \pm 1.6$ & $8.2$          & $1519 \pm 607$ \\
    DCSBM (Unbalanced, $\alpha \!=\! 1.7$) & $7000$ & $500$  & $7$ & $1821 \pm 109$          & $476$          & $4.9 \pm 0.6$          & $594 \pm 863$          & $2.5$          & $2450 \pm 1167$ \\
    \hline
    Political Blogs (Full)        & $1222$       & $700$  & $2.1$                  & $61$                  & $9.4$ & $2.2 \pm 0.1$          & $72 \pm 11$           & $3.2$ & $102 \pm 16$ \\
    Political Blogs (Unbalanced)  & $747 \pm 12$ & $200$  & $4.6 \pm 0.2$ & $226 \pm 39$ & $3.7$          & $2.9 \pm 0.3$ & $152 \pm 62$ & $0.9$          & $232 \pm 66$ \\
    \hline
  \end{tabular}
  \label{tab:RealWorldData}
\end{table*}
In the first row, we generate a DCSBM graph with $\alpha\!=\!2$, $p=0.1$, $q=0.05$, and with three equal-sized clusters (within a rounding error).
All methods yield almost error-free estimates, and over a 56-fold speedup with both URS and our averaged SbS variant over full graph clustering.
In the next two rows, we show results for an unbalanced graph, having the same $p$ and $q$, but with one large cluster of size 5000, and two small clusters of size 1000 each.
When $\alpha\!=\!2$, SbS reduces the imbalance in the full graph from $\frac{N}{\nmin} \!=\! 7$ to $\frac{N'}{\nminp} \!\approx\! 4.5$ in the sketch, leading to a corresponding reduction in error. The error is still reduced significantly in the third row for the more difficult case of $\alpha\!=\!1.7$ with half as many samples.

In the fourth row of the table, we show experimental results for the popular Political Blogs dataset \cite{Adamic:2005:PBU:1134271.1134277}. We note that this dataset has two clusters of similar size.
We find that sampling 700 nodes with the SbS variant gives a comparable error rate to that of full graph clustering (though slightly higher), while having a nearly three-fold speedup.
To test the balancing effect of our sampling, for the fifth row we derive a new dataset from the Political Blogs graph: we retain all blogs from the liberal cluster, but sample 200 blogs from the conservative cluster uniformly at random, and then keep the largest connected component of the induced subgraph.
In addition to a four-fold speedup, the SbS variant improves the balance of the sketch graph, and we see a marked reduction in the errors.

\section{Conclusion} \label{sec:conclusion}

We proposed a sketch-based framework for clustering graphs.
The first goal was to improve computational efficiency over existing algorithms.
This was accomplished by sampling nodes to produce a small sketch, clustering the sketch using an existing clustering algorithm, and then inferring the clusters in the full graph based on the sketch clusters.
The speed of the framework was demonstrated in the case that nodes were sampled uniformly at random.
The second goal was to improve the handling of small clusters, while still maintaining low computational complexity.
This was accomplished by two sampling methods: one samples based on node degree (SbS), and the other performs spatial sampling (SRS).

\appendices

\section{Proofs for Random Sampling Approach} \label{sec:ProofSRS}

In this appendix, we provide proof of the lemmas and theorems for URS presented in Section~\ref{sec:ProposedURS}.
The set $\clusteridx_i$ contains the indices of the columns which belong to cluster $i$, and the set $\blockidx{i} = \clusteridx_i \cap \calI$ contains the indices of sampled columns which belong to cluster $i$.
\begin{proof}[Proof of Lemma~\ref{lem:Suff_Optimality_Partial}]

To simplify the analysis, we treat the diagonal elements of the adjacency matrix as regular intra-cluster connections, i.e. a self-loop occurs with probability $p$ and is observed with probability $\obsprob$. This modification can only hurt performance, and therefore provides a valid lower bound.
Additionally, as the graph size increases, the difference in performance will be negligible.

For simplicity, we will perform the analysis on a Bernoulli sampling model in place of the uniform sampling model.
In the Bernoulli model, a Bernoulli trial with success probability $N'/N$ is repeated for each node, and a node is included if its trial is successful.
Let $\Omega'$ be the index set of columns sampled using the Bernoulli model, and $\Omega$ be the index set of columns sampled using URS (without replacement).
Although the exact number of samples will vary around a mean of $N'$ in the Bernoulli model, from \cite{1580791}, we can conclude that
\begin{align}
\bbP \left( \nminp < \nminbound \cond \calI = \Omega' \right)
\ge \frac{1}{2} \bbP \left( \nminp < \nminbound \cond \calI = \Omega \right).
\end{align}
In words, the probability of failing to sample sufficient columns using URS is no more than twice the probability of failing using Bernoulli random sampling.

Now, in the Bernoulli model the number of samples $n_i'$ from cluster $i$ is a Binomial random variable with $n_i$ independent experiments, each with success probability $N'/N$.

Let $\xi_i, i = 1, \ldots, r$ be such that the number of samples $N' = \xi_i \nminbound N / n_i$.
Using the Chernoff bound for Binomial distributions \cite{McDiarmid1998} and the union bound, 
\begin{align} \label{eqn:95145}
\bbP\left(\nminp \ge \nminbound \cond \calI = \Omega \right)
&\ge 1 - 2 \sum_{i=1}^r \bbP \left( n_i' < \nminbound \cond \calI = \Omega' \right)
\nonumber \\
&\ge 1 - 2 \sum_{i=1}^r \exp \left( -\frac{\nminbound (\xi_i-1)^2}{2 \xi_i} \right).
\end{align}
Thus, if \eqref{eqn:Suff_Optimality_Partial_RS_uniform} holds, then $\xi_i \ge 2 + \frac{2}{\nminbound} \log \left(2rN \right)$ for $1 \le i \le r$ and the RHS of \eqref{eqn:95145} is lower-bounded by $1-N^{-1}$.

The upper bound in \eqref{eqn:Suff_Optimality_Partial_RS_uniform} is required since we are sampling without replacement, and the sketch matrix cannot be larger than the full matrix.

\end{proof}

\begin{proof}[Proof of Lemma~\ref{lm:completion}]

First we consider the lower bound in \eqref{eq:lemm2_suffRS}.
We need to sample a sufficient number of columns such that the decomposition of the sketch will be successful whp.
Let $\nminbound = \frac{\sqrt{C N'} \log N'}{\sqrt{\obsprob} \densdiff}$.
Successful decomposition is guaranteed by Theorem 4 of \cite{Chen:2014:CPO:2627435.2670322} with probability $1 - c N'^{-10}$
if $\nminp \ge \nminbound$.
If sufficient conditions \eqref{eqn:21973} and \eqref{eq:lemm2_suffRS} hold, then
\begin{align} \label{eqn:completion2c}
N \ge N' \ge 2 \balRS \left[ \frac{\sqrt{C N'} \log N'}{\sqrt{\obsprob} \densdiff} + \log \left( 2rN \right) \right],
\end{align}
and therefore Lemma~\ref{lem:Suff_Optimality_Partial} guarantees that $\nminp \ge \nminbound$ with probability at least $1-N^{-1}$.

The cluster sizes in the sketch cannot exceed those in the full graph, so we need $\nmin \ge \nminbound$. This is satisfied by \eqref{eqn:2348798}, which ensures that $\nmin^2 \ge \zeta N'$.

\noindent

\end{proof}

\begin{proof}[Proof of Lemma~\ref{lem:Retrieve}]

Define the inner product between the $j\textsuperscript{th}$ column and the $i\textsuperscript{th}$ characteristic vector of $\bL'$ (i.e. the eigenvector of $\bL'$ representing cluster $i$) as $\colinner_i(j) = (\ba_j')^T \bv_i$.
Suppose that $\ba_j'$ belongs to the $\ell \textsuperscript{th}$ cluster.
Then, $\colinner_\ell(j) \sim \Bin(n'_\ell, \obsprob p)$ whereas $\colinner_i(j) \sim \Bin(n'_i, \obsprob q)$ for any $i \neq \ell$.

Now, we define $\colinnernorm_i(j) = \colinner_i(j)/n_i'$, which is the normalized inner product used in step 3 of Algorithm~\ref{alg:mainalgorithm}.
Let $\densavg=\frac{\obsprob (p+q)}{2}$ and note that $\bbE[\colinnernorm_\ell(j)] = p$ and $\bbE[\colinnernorm_i(j)] = q$ for $i \neq \ell$.
Then, from the upper and lower Chernoff bounds \cite{McDiarmid1998}, it follows that
\begin{align}
\bbP\left(\colinnernorm_\ell(j) \leq \tau \right)
&\le \exp\left(-\frac{\obsprob (p-q)^2}{8p}n_\ell' \right) ,\label{eqn:Supperbound}
\\
\bbP\left(\colinnernorm_i(j) \geq \tau \right)
&\le \exp\left(-\frac{3 \obsprob (p-q)^2}{24q + 4(p-q)}n_i' \right). \label{eqn:Slowerbound}
\end{align}
If \eqref{eqn:retrieval_suff} holds, then
\begin{align} \label{eqn:retrieval_suff2}
\nminp &\ge \frac{8p}{\obsprob (p-q)^2} \log\left(rN^2\right)
> \frac{4 (p+5q)}{3 \obsprob (p-q)^2} \log\left(rN^2\right)
\end{align}
(since $p>q$) and the right hand sides of \eqref{eqn:Supperbound} and \eqref{eqn:Slowerbound} will be upper bounded by $(rN^2)^{-1}$.

Then, the probability that the algorithm will fail to classify column $j$ is the probability that the normalized inner product is larger for an incorrect cluster than for the correct cluster,
\begin{align}
&\bbP\left(\bigcup_{\substack{i=1 \\ i \ne \ell}}^r \colinnernorm_\ell(j) \le \colinnernorm_i(j) \right)
\nonumber \\
&\le 
\bbP\left( \left[ \colinnernorm_\ell(j) \le \densavg \right] \bigcup \left[ \bigcup_{\substack{i=1 \\ i \ne \ell}}^r  \colinnernorm_i(j) \ge \densavg \right] \right)
\nonumber \\
&\le
\bbP\left( \colinnernorm_\ell(j) \le \densavg \right)
+ \sum_{\substack{i=1 \\ i \ne \ell}}^r \bbP\left( \colinnernorm_i(j) \ge \densavg \right)
= N^{-2}.
\end{align}
Thus, the probability of correctly classifying all columns is
\begin{align}
&\bbP\left(\bigcap_{j=1}^N \left[ \bigcap_{\substack{i=1 \\ i \ne \ell}}^r \colinnernorm_\ell(j) > \colinnernorm_i(j) \right] \right)
\nonumber \\
&\ge 1 - \sum_{j=1}^N \bbP\left(\bigcup_{\substack{i=1 \\ i \ne d}}^r \colinnernorm_\ell(j) \le \colinnernorm_i(j) \right)
> 1 - N^{-1}.
\end{align}

\end{proof}

\begin{proof}[Proof of Theorem~\ref{thm:maintheoremRS}]

First, let $\nminbound = \frac{ 8 p }{\obsprob \densdiff^2} \log\left(rN^2\right)$.
Conditions \eqref{eqn:2348798} and \eqref{eqn:8923478789} imply that
\begin{align} \label{eqn:9234689}
N \ge N' &\ge 2 \balRS \left[\frac{8p}{\obsprob \densdiff^2} \log\left( r N^2 \right) + \log \left(2 r N \right)\right],
\end{align}
and so Lemma~\ref{lem:Suff_Optimality_Partial} guarantees that $\nminp \ge \nminbound$ with probability at least $1-N^{-1}$.
If the sufficient condition of Lemma~\ref{lem:Suff_Optimality_Partial} is met, then retrieval is also guaranteed with probability $(1-N^{-1})$ by Lemma~\ref{lem:Retrieve}.
Additionally, for retrieval to be possible, we need $\nmin \ge \nminbound$ (i.e., the lower bound on cluster size cannot exceed the smallest cluster size in the full graph), which is satisfied by \eqref{eqn:8987239}.

Next, conditions \eqref{eqn:2348798} and \eqref{eqn:8923478789} also satisfy the conditions of Lemma~\ref{lm:completion}, thus guaranteeing successful decomposition with probability $1 \!-\! c {N}^{-10} \!-\! N^{-1}$.

Therefore, the complete clustering algorithm will be successful with probability at least $1 \!-\! c N^{-10} \!-\! 3N^{-1}$.

\end{proof}

\section{Proofs for Sparsity-based Sampling Approach} \label{sec:ProofsSbS}

In this appendix, we provide proofs related to the SbS sampling approach.
First, we present some technical lemmas which will be used in the remainder of this section.  We use the expected degree for a column in cluster $i$, which is $\mu_i = \obsprob [(p - q) n_i + q N]$.

We will often find it useful to work with a ``typical'' graph,
i.e., one whose adjacency matrix has all node degrees close to their respective mean values.
Specifically, define the set of typical graphs as
\begin{align}
\Atyp{\epsilon} = \left\{ \bA \cond \abs{ \norm{ \ba_j }_0 - \mu_i } \le \epsilon \, \mu_i \:,\: 1 \le i \le r, \, j \in \clusteridx_i \right\},
\end{align}
for an arbitrary $\epsilon>0$.
The following lemma bounds the probability with which the SBM will generate a typical graph.
\begin{lemma}[Typical graph] \label{lem:typgraph}
\label{lem:SbS_typical}
Given an arbitrary $\delta>0$, let
\begin{align}
1
> \epsilon
&\ge \sqrt{ \frac{3}{\mumin}\log\left(\frac{2 N}{\delta}\right) },
\end{align}
then $\prob\left( \bA \in \calA_\epsilon \right) \ge 1-\delta$.

\end{lemma}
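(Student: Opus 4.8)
The plan is to control each node's observed degree by a multiplicative Chernoff bound and then union-bound over the $N$ nodes. For a node $j \in \clusteridx_i$, I adopt the same convention used in the proof of Lemma~\ref{lem:Suff_Optimality_Partial} (treat the always-observed diagonal entry as an ordinary intra-cluster edge, which only strengthens concentration): then $\norm{\ba_j}_0$ is a sum of $n_i$ independent $\Ber(\obsprob p)$ variables (edges to the nodes in the same cluster, counting the self-loop) and $N - n_i$ independent $\Ber(\obsprob q)$ variables (edges to the remaining nodes), all mutually independent, with mean exactly $\obsprob[(p-q)n_i + qN] = \mu_i$.

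First I would apply the standard multiplicative Chernoff bound: for a sum $X$ of independent $\{0,1\}$ random variables with mean $\mu$ and any $0 < \epsilon < 1$,
\begin{align}
\prob\bigl( \abs{X - \mu} \ge \epsilon \mu \bigr) \le 2 \exp\!\left( -\tfrac{1}{3}\epsilon^2 \mu \right).
\end{align}
Taking $X = \norm{\ba_j}_0$ and $\mu = \mu_i$, and using that $\mu_i \ge \mumin$ for all $i$ (since $p > q$ makes $\mu_i$ increasing in the cluster size $n_i$), this yields $\prob\bigl( \abs{\norm{\ba_j}_0 - \mu_i} \ge \epsilon\mu_i \bigr) \le 2\exp(-\tfrac13 \epsilon^2 \mumin)$ for every node $j$, uniformly in its cluster index; the hypothesis $\epsilon < 1$ is precisely what legitimizes this form of the bound.

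Then a union bound over the $N$ nodes gives
\begin{align}
\prob\bigl( \bA \notin \calA_\epsilon \bigr) \le 2 N \exp\!\left( -\tfrac13 \epsilon^2 \mumin \right),
\end{align}
and substituting the hypothesized lower bound $\epsilon \ge \sqrt{ (3/\mumin)\log(2N/\delta) }$ — equivalently $\tfrac13\epsilon^2\mumin \ge \log(2N/\delta)$ — makes the right-hand side at most $\delta$, so $\prob(\bA \in \calA_\epsilon) \ge 1-\delta$.

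I do not anticipate a genuine obstacle here; the proof is essentially a one-shot Chernoff-plus-union-bound argument. The only points needing a little care are the bookkeeping for the always-observed diagonal entry (handled by the self-loop-as-intra-cluster-edge convention, which only helps), the monotonicity $\mu_i \ge \mumin$ so that a single tail estimate serves all clusters simultaneously, and the $\epsilon < 1$ restriction required by the quoted multiplicative Chernoff form — which is exactly why that restriction appears in the statement.
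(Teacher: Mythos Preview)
Your proposal is correct and follows essentially the same approach as the paper: apply the multiplicative Chernoff bound to each node degree (yielding $\bbP(|\norm{\ba_j}_0-\mu_i|>\epsilon\mu_i)\le 2\exp(-\epsilon^2\mu_i/3)$), use $\mu_i\ge\mumin$, and union-bound over the $N$ nodes. The paper's proof is terser but identical in substance; your added remarks on the diagonal convention, the monotonicity of $\mu_i$ in $n_i$, and the role of the $\epsilon<1$ restriction are all appropriate elaborations of the same argument.
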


\begin{proof}[Proof]

From the Chernoff bound \cite{McDiarmid1998}, if $\mumin \ge \frac{3}{\epsilon^2}\log\left(\frac{2N}{\delta}\right)$, then for a given column $j$ belonging to cluster $i$,
\begin{align}
\bbP\left( \abs{ \norm{ \ba_j }_0 - \mu_i } \le \epsilon \, \mu_i \right)
\ge 1 - \frac{\delta}{N}.
\end{align}
Finally, we take the union bound over all $N$ columns.

\end{proof}

Next, we will place bounds on the sampling probabilities obtained using SbS for a typical graph.
\begin{lemma}[SbS column sampling probabilities]
\label{lem:SbS_nsamples_col}

Let $\colsampprob(j)$ be the probability of sampling a single column $j$, assuming that we are sampling using SbS with replacement.
Furthermore, suppose that $\InAtyp{\suffepsA}$ with $\suffepsA$ as defined in \eqref{eqn:95422}. Then for an arbitrary column $j$ belonging to cluster $i$, $\colsampprob(j) \ge \colsampprob_i^-$, where $\colsampprob_i^{-} = \frac{1-\suffepsA}{1+\suffepsA} \frac{1}{ r n_i \eta_i}$ and $\eta_i = \left[ 1 + \frac{q}{p} \left(\frac{N}{n_i}-1\right) \right]$.

\end{lemma}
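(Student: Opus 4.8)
The plan is to compute the SbS sampling law in closed form and then bound it below using only the typicality hypothesis $\InAtyp{\suffepsA}$ together with the elementary estimate $\mu_l \ge \obsprob p\, n_l$ for the expected column degrees. First I would record that, under SbS with replacement, a single column $j$ is drawn with probability proportional to the reciprocal sparsity $\norm{\ba_j}_0^{-1}$, i.e.
\begin{align}
\colsampprob(j) = \frac{\norm{\ba_j}_0^{-1}}{\sum_{k=1}^{N}\norm{\ba_k}_0^{-1}} .
\end{align}
Because $\InAtyp{\suffepsA}$, every column $k$ lying in cluster $l$ satisfies $(1-\suffepsA)\mu_l \le \norm{\ba_k}_0 \le (1+\suffepsA)\mu_l$. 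I would then insert the \emph{upper} degree bound in the numerator and the \emph{lower} degree bound in each of the $n_l$ terms contributed by cluster $l$ to the denominator sum, which for $j\in\clusteridx_i$ gives
\begin{align}
\colsampprob(j) \ge \frac{1}{(1+\suffepsA)\,\mu_i}\cdot\frac{1}{\sum_{l=1}^{r}\frac{n_l}{(1-\suffepsA)\,\mu_l}} = \frac{1-\suffepsA}{1+\suffepsA}\cdot\frac{1}{\mu_i\sum_{l=1}^{r} n_l/\mu_l} .
\end{align}

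Two routine simplifications then close the gap to $\colsampprob_i^-$. For the denominator sum I would use $\mu_l = \obsprob[(p-q)n_l+qN] = \obsprob[p\,n_l + q(N-n_l)] \ge \obsprob p\, n_l$, valid since $q\ge 0$ and $N\ge n_l$; hence $n_l/\mu_l \le 1/(\obsprob p)$ for each $l$ and therefore $\sum_{l=1}^{r} n_l/\mu_l \le r/(\obsprob p)$. For the target expression I would simplify the definition of $\eta$ in \eqref{eqn:23439878}, with $N/n_i$ in place of $\balRS$, to $\eta_i = \frac{p\,n_i + q(N-n_i)}{p\,n_i} = \frac{\mu_i}{\obsprob p\, n_i}$, so that $r\, n_i\,\eta_i = r\mu_i/(\obsprob p)$. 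Plugging both facts into the previous display yields $\colsampprob(j) \ge \frac{1-\suffepsA}{1+\suffepsA}\cdot\frac{\obsprob p}{r\mu_i} = \frac{1-\suffepsA}{1+\suffepsA}\cdot\frac{1}{r\,n_i\,\eta_i} = \colsampprob_i^-$, which is the claim.

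I do not expect a genuine obstacle here: once $\bA\in\calA_{\suffepsA}$ is assumed, the entire argument is a short deterministic chain of inequalities, and the probabilistic content has already been absorbed into Lemma~\ref{lem:typgraph}. The only points that need a little care are getting the directions right when bounding the ratio (shrink the numerator, inflate the denominator) and remembering that $\mu_l \ge \obsprob p\, n_l$ must be applied to every cluster, not merely the smallest one. If $\suffepsA\ge 1$ the asserted bound is vacuous, since then $\colsampprob_i^- \le 0$, so one may freely assume $\suffepsA<1$, consistent with the hypothesis invoked in Lemma~\ref{lem:SbS_nsamples}.
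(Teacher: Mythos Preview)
Your proposal is correct and follows essentially the same argument as the paper: both write $\colsampprob(j)$ in terms of the ratios $\norm{\ba_j}_0/\norm{\ba_k}_0$, apply the typicality bounds $(1\pm\suffepsA)\mu_l$ to obtain $\colsampprob(j)\ge \frac{1-\suffepsA}{1+\suffepsA}\bigl(\mu_i\sum_{l}n_l/\mu_l\bigr)^{-1}$, and then use $\mu_l\ge \obsprob p\,n_l$ (equivalently $n_l/[n_l+\tfrac{q}{p}(N-n_l)]\le 1$) to collapse the sum to $r/(\obsprob p)$ and recognize $r n_i\eta_i=r\mu_i/(\obsprob p)$. The only difference is cosmetic: the paper keeps everything inside a single inverted sum, whereas you separate numerator and denominator before bounding.
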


\begin{proof}[Proof]

Since $\bA \in \Atyp{\suffepsA}$, then from Lemma~\ref{lem:typgraph},
\begin{align} \label{eqn:98246987}
\colsampprob(j)
& = \left( \sum_{k=1}^N \frac{\norm{\ba_j}_0}{\norm{\ba_k}_0} \right)^{-1}
\ge \left( \sum_{k=1}^r \frac{(1+\suffepsA) \mu_i n_k}{(1-\suffepsA) \mu_k} \right)^{-1}
\nonumber \\
& = \frac{1-\suffepsA}{1+\suffepsA} \left( \sum_{k=1}^r \frac{[n_i + \frac{q}{p} (N-n_i)] n_k}{[n_k + \frac{q}{p} (N-n_k)]} \right)^{-1}
\ge \colsampprob_i^{-}.
\end{align}

\end{proof}

Next we provide proofs for the SbS lemmas and theorems found in Section~\ref{sec:Sbs}.

\begin{proof}[Proof of Lemma~\ref{lm:sbs_1}]

Because $\bA=\bL$, the degree of a node belonging to cluster $i$ is exactly $n_i$.
Then the probability of sampling from cluster $i$ is $\frac{ n_i \frac{1}{ n_i} }{ \sum_{j=1}^r n_j \frac{1}{ n_j}} = \frac{1}{r}$.

\end{proof}

\begin{proof}[Proof of Lemma~\ref{lem:SbS_nsamples}]

Invoking Lemma~\ref{lem:SbS_typical}, the probability that $\InAtyp{\suffepsA}$ is at least $1-N^{-1}$.
In this event, from Lemma~\ref{lem:SbS_nsamples_col} we have 
\begin{align}
\clustsampprob_{min}
&=  \min_{1 \le i \le r} \sum_{j \in \clusteridx_i} \colsampprob(j)
\ge \min_{1 \le i \le r} n_i \colsampprob_i^{-},
\end{align}
which yields \eqref{eqn:9024099}.

\end{proof}

\begin{remark} \label{rem:SbsSamp3}
In the limit as $p-q \rightarrow 0$, we would expect \eqref{eqn:9024099} to become $\clustsampprob_{min} \ge \frac{1-\suffepsA}{1+\suffepsA} \balRS^{-1}$ such that it is consistent with the column sample probabilities for URS. Indeed, in this case $q/p \rightarrow 1$, and \eqref{eqn:98246987} becomes $\colsampprob_i \ge \frac{1-\suffepsA}{1+\suffepsA} N^{-1}$, which is the desired result.
\end{remark}

\begin{proof}[Proof of Lemma~\ref{lem:Suff_Optimality_Partial_SbS}]

For ease of analysis, we calculate the sufficient number of samples \emph{with} replacement to obtain $\nminbound$ \emph{distinct} columns from each cluster.
This condition will also be sufficient for sampling without replacement, since fewer samples will be required than sampling with replacement.

First, we will bound the probability that at least $n_i-b$ columns are \emph{not} sampled from cluster $i$ in a particular sketch of $N'$ samples.
In the following, let $\colsampprob(k)$ be the probability of sampling column $k$, $m(k)$ be the number of times that column $k$ is sampled, and $\Atyp{\suffepsB}$ be the typical graph as defined in Lemma~\ref{lem:SbS_typical} with $\epsilon=\suffepsB$.
From \eqref{eqn:43453} and Lemma~\ref{lem:SbS_typical}, we have that $\prob(\NotInAtyp{\suffepsB}) < (2rN)^{-1}$.

Let $\calS_i$ be a set containing exactly $n_i-b$ distinct column indices from cluster $i$.
The probability that the columns in $\calS_i$ are \emph{not} sampled by SbS is $\prob \left( \sum_{k \in \calS_i } m(k) = 0 \right) = \left( 1-\sum_{k \in \calS_i } s(k) \right)^{N'}$.
Note that this probability includes the event that other columns not in $\calS_i$ are also absent from the sample.
If the graph belongs to a typical set, then
\begin{align}
    \prob \left( \sum_{k \in \calS_i } m(k) = 0 \cond \InAtyp{\suffepsB} \right)
    &\le (1-(n_i - \nminbound) s_i^-)^{N'}
    \nonumber \\
    &\le \exp \curly{ -(n_i - \nminbound) N' s_i^- }.
\end{align}
Since there are $\binom{n_i}{b}$ ways to choose the elements in set $\calS_i$, the probability that \emph{fewer} than $b$ distinct columns are sampled from cluster $i$ is
\begin{align} \label{eqn:876832}
    \prob \left( n_i' < b \cond \InAtyp{\suffepsB} \right)
    &\le \binom{n_i}{b} \prob \left( \sum_{k \in \calS_i} m(k) = 0 \cond \InAtyp{\suffepsB} \right)
    \nonumber \\
    &\qquad \le n_i^b \left\{ \exp \bracket{ -(n_i-b) N' s_i^- } \right\}.
\end{align}
If \eqref{eqn:Sbs_samples_suff} holds, then the RHS of \eqref{eqn:876832} is less than $(r N)^{-1}$.
Then, the probability of failure for any graph is
\begin{align}
    \prob \left( n_i' < b \right)
    &\le \prob \left( n_i' < b \cond \InAtyp{\suffepsB} \right)
         + \prob \left( \NotInAtyp{\suffepsB} \right)
    \le (rN)^{-1}.
\end{align}

Finally, applying the union bound over all clusters, we have $\bbP\left(\nminp \ge \nminbound \right) = 1 - \bbP\left\{ (n_1' < \nminbound) \cup  \cdots \cup (n_r' < \nminbound) \right\}$,
which is greater than or equal to $1 - \sum_{i=1}^r \bbP\left(n_i' < \nminbound \right) = 1 - N^{-1}$.
\end{proof}

\begin{proof}[Proof of Theorem~\ref{thm:sketchnmin}]

We will invoke Lemma~\ref{lem:Suff_Optimality_Partial_SbS} with $b = \frac{\sqrt{C N'} \log N'}{\sqrt{\obsprob'} \densdiff'}$.
From \eqref{eqn:786867123}, it follows that $\nminbound \le \nmin/2$. Critically, this ensures that $\left( 1-\frac{\nminbound}{\nmin} \right)$ will be bounded away from zero, and will impose the bound $g' \ge g$.

Then, from conditions \eqref{eqn:786867123} and \eqref{eq:lemm2_suffSbSA}, and since $g' \ge g$, condition \eqref{eqn:Sbs_samples_suff} is satisfied.  Therefore, Lemma~\ref{lem:Suff_Optimality_Partial_SbS} guarantees \eqref{eqn:9893784} to hold with probability at least $1-N^{-1}$.

\end{proof}

\begin{proof}[Proof of Theorem~\ref{thm:sketchpqrho}]
Because we need to calculate separately the edge probabilities $p', q'$ and observation probability $\rho'$ probability, we will need to separately consider the fully observed adjacency matrix, denoted $\fullobsmat$, and the indicator matrix for the observed entries, denoted $\obsmat$. The matrix $\obsmat$ contains zeros for unobserved entries and ones for observed entries. Both $\fullobsmat$ and $\obsmat$ have zeros along the diagonal so that the partially observed adjacency matrix $\bA$ is the sum of the identity matrix with the element-wise product of $\fullobsmat$ and $\obsmat$.
Define
$\AOneFull = \left\{ \fullobsmat \cond \norm{\fullobsvec_j}_0 \ge 1 \:,\: 1 \le j \le N \right\}$ and 
$\AOneObs = \left\{ \obsmat \cond \norm{\obsvec_j}_0 \ge 1 \:,\: 1 \le j \le N \right\}$.
For brevity, we use $\InAOneFull$ as shorthand for the event $\fullobsmat \in \AOneFull$, and $\NotInAOneFull$ as shorthand for the event $\fullobsmat \notin \AOneFull$, and likewise for $\obsmat$ and $\AOneObs$.
We define $\fullobsel_{ij}$ as the element of matrix $\fullobsmat$ at the $i\textsuperscript{th}$ row and $j\textsuperscript{th}$ column, and similarly for matrix $\obsmat$.

We define $\pqprimeeps$ as the probability that $\AOneFull$ is atypical.
Then,
\begin{align}
\pqprimeeps
&= \bbP\left(\fullobsmat \notin \AOneFull \right)
= \bbP\left\{ \cup_{j=1}^{N} \norm{\fullobsvec_j}_0 = 0 \right\}
\nonumber \\
&\le \sum_{i=1}^N \bbP\left(\norm{\fullobsvec_j}_0 = 0 \right)
= \sum_{i=1}^r n_i (1 - p)^{n_i} (1 - q)^{N-n_i}
\nonumber \\
&\le N (1-p)^{\nmin} (1-q)^{\nmin}.
\end{align}
Likewise,
\begin{align}
\rhoprimeeps
&= \bbP\left(\obsmat \notin \AOneObs \right)
= \bbP\left\{ \cup_{j=1}^{N} \norm{\obsvec_j}_0 = 0 \right\}
\nonumber \\
&\le \sum_{i=1}^N \bbP\left(\norm{\obsvec_j}_0 = 0 \right)
= N (1-\obsprob)^N.
\end{align}

We will first bound $p'$. Let the set of intra-cluster edges contained in the sketch matrix be $\calP = \left\{(\samprand_i,\samprand_j) \in \bigcup_{k=1}^r (\calI_k \crossprod \calI_k) \cond \samprand_i \ne \samprand_j \right\}$. We denote the $i\textsuperscript{th}$ and $j\textsuperscript{th}$ samples as $S_i$ and $S_j$, respectively. 
Then, the probability that an arbitrary intra-cluster element of the fully observed sketch matrix $\fullobsel_{ij}'$ contains a one is as follows.

For the upper bound, we have
\begin{align} \label{eqn:4321213}
p'
&= \bbP\left( \fullobsel_{\samprand_i \samprand_j} = 1 | (\samprand_i,\samprand_j) \in \calP \right)
\nonumber \\
&= \frac{ \sum_{(\samp_i,\samp_j) \in \calP} \bbP\left( \samp_i,\samp_j \cond \fullobsel_{\samp_i \samp_j} = 1 \right) \bbP\left( \fullobsel_{\samp_i \samp_j} = 1 \right) }
{ \sum_{(\samp_i,\samp_j) \in \calP} \bbP\left( \samp_i,\samp_j \right)}
\nonumber \\
&= p \: \frac{ \bbP\left( \samp_i,\samp_j \cond \fullobsel_{\samp_i \samp_j} = 1 \right) }
{ \bbP\left( \samp_i,\samp_j \right)}
\le p.
\end{align}
In the last line, we used the fact that the presence of an edge slightly reduces the probability of sampling the respective columns.

For the lower bound, first observe that
\begin{align}
\bbP\left( \fullobsel_{\samp_i\samp_j}' = 1 | \InAOneFull \right)
&= \frac{p - \bbP\left( \fullobsel_{\samp_i,\samp_j} = 1 \cond \NotInAOneFull \right) \bbP\left( \NotInAOneFull \right)}{\bbP\left( \InAOneFull \right)}
\nn\\
&\ge \frac{p - p \: \bbP\left( \NotInAOneFull \right)}{\bbP\left( \InAOneFull \right)}
\ge p(1-\pqprimeeps).
\end{align}
In the second line, we have used the fact that if the graph $\fullobsmat$ has some nodes with degree zero, then this slightly lowers the probability that a given element has a one, and therefore $\bbP\left( \fullobsel_{\samp_i,\samp_j} = 1 \cond \NotInAOneFull \right) < p$.
Then, we have
\begin{align} \label{eqn:983789243}
p'
&= \bbP\left( \fullobsel_{ij}' = 1 \cond (\samprand_i,\samprand_j) \in \calP \right)
\nonumber \\
&=\bbP\left( \fullobsel_{ij}' = 1 \cond \InAOneFull,(\samprand_i,\samprand_j) \in \calP \right) \bbP\left( \calA_1 \right)
\nonumber \\
&\qquad + \bbP\left( \fullobsel_{ij}' = 1 \cond \NotInAOneFull,(\samprand_i,\samprand_j) \in \calP \right) \bbP\left( \NotInAOneFull \right)
\nonumber \\
&\ge \bbP\left( \fullobsel_{ij}' = 1 \cond \InAOneFull,(\samprand_i,\samprand_j) \in \calP \right) (1 - \pqprimeeps).
\end{align}
Finally, 
\begin{align} \label{eqn:983789243_2}
&\bbP\left( \fullobsel_{ij}' = 1 \cond \InAOneFull,(\samprand_i,\samprand_j) \in \calP \right)
\nn\\
&= \frac{\sum_{(\samp_i,\samp_j) \in \calP} \bbP\left( \samp_i,\samp_j \cond \fullobsel_{\samp_i,\samp_j} = 1, \InAOneFull  \right) \bbP\left( \fullobsel_{\samp_i,\samp_j} = 1 \cond \InAOneFull \right)}
{\sum_{(s_i,s_j) \in \calP} \bbP\left( \samp_i, \samp_j | \InAOneFull \right)}
\nonumber \\
&\ge p (1 - \pqprimeeps).
\end{align}
Combining \eqref{eqn:983789243} and \eqref{eqn:983789243_2}, we arrive at the lower bound.

For bounding $q'$, we follow a similar line of reasoning as for $p'$, but substituting $\calQ = \{(\samprand_i,\samprand_j) \in \bigcup_{\stackrel{k,\ell=1}{k \ne \ell}}^r (\calI_k \crossprod \calI_\ell) \cond \samprand_i \ne \samprand_j \}$ in place of $\calP$, which yields \eqref{eqn:3242334}.
Likewise, to bound $\obsprob'$, we replace $\obsset = \left\{(\samprand_i,\samprand_j) \in (\calI \crossprod \calI) \cond \samprand_i \ne \samprand_j \right\}$ in place of $\calP$, and matrix elements $\obsel_{ij}$ in place of elements $\fullobsel_{ij}$, which yields \eqref{eqn:423423}.
\end{proof}

\begin{proof}[Proof of Theorem~\ref{thm:maintheoremSbS}]

Successful decomposition is guaranteed by Theorem 4 of \cite{Chen:2014:CPO:2627435.2670322} with probability $1 - c N'^{-10}$
if $\nmin'^2 \ge \frac{C N' \log^2 N' }{\obsprob' {\densdiff'}^2}$, which is satisfied by condition \eqref{eqn:34879800}.

Furthermore, we need to sample enough columns such that the retrieval process is successful whp.
We will invoke Lemma~\ref{lem:Suff_Optimality_Partial_SbS} with $b = \frac{ 8 p }{\obsprob \densdiff^2} \log\left(r N^2\right)$.  If conditions \eqref{eqn:421159} and \eqref{eq:lemm2_suffSbSB} hold, then $\nminbound \le \nmin/2$, $\balSbS' \ge \balSbS$, and
condition \eqref{eqn:Sbs_samples_suff} is satisfied.
Then, Lemma~\ref{lem:Suff_Optimality_Partial_SbS} guarantees that $\nminp \ge \nminbound$ with probability at least $1 \!-\! N^{-1}$.
If the sufficient conditions of Lemma~\ref{lem:Suff_Optimality_Partial_SbS} are met, then retrieval is guaranteed with probability $1 \!-\! N^{-1}$ by Lemma~\ref{lem:Retrieve}.

Using the union bound, all conditions hold with probability at least $1 \!-\! c N^{-10} \!-\! 2N^{-1}$.

\end{proof}

{\bibliography{bibfile}}

\begin{thebibliography}{10}
\providecommand{\url}[1]{#1}
\csname url@samestyle\endcsname
\providecommand{\newblock}{\relax}
\providecommand{\bibinfo}[2]{#2}
\providecommand{\BIBentrySTDinterwordspacing}{\spaceskip=0pt\relax}
\providecommand{\BIBentryALTinterwordstretchfactor}{4}
\providecommand{\BIBentryALTinterwordspacing}{\spaceskip=\fontdimen2\font plus
\BIBentryALTinterwordstretchfactor\fontdimen3\font minus
  \fontdimen4\font\relax}
\providecommand{\BIBforeignlanguage}[2]{{%
\expandafter\ifx\csname l@#1\endcsname\relax
\typeout{** WARNING: IEEEtran.bst: No hyphenation pattern has been}%
\typeout{** loaded for the language `#1'. Using the pattern for}%
\typeout{** the default language instead.}%
\else
\language=\csname l@#1\endcsname
\fi
#2}}
\providecommand{\BIBdecl}{\relax}
\BIBdecl

\bibitem{Mishra:2007:CSN:1777879.1777884}
N.~Mishra, R.~Schreiber, I.~Stanton, and R.~E. Tarjan, ``Clustering social
  networks,'' in \emph{Proc. 5th Int'l Conf. Algor. Models Web-graph}.\hskip
  1em plus 0.5em minus 0.4em\relax Berlin, Heidelberg: Springer-Verlag, 2007,
  pp. 56--67.

\bibitem{2016arXiv160904316N}
\BIBentryALTinterwordspacing
C.~Nicolini, C.~Bordier, and A.~Bifone, ``Community detection in weighted brain
  connectivity networks beyond the resolution limit,'' \emph{NeuroImage}, vol.
  146, pp. 28 -- 39, 2017. [Online]. Available:
  \url{http://www.sciencedirect.com/science/article/pii/S1053811916306449}
\BIBentrySTDinterwordspacing

\bibitem{FORTUNATO20161}
S.~Fortunato and D.~Hric, ``Community detection in networks: A user guide,''
  \emph{Phys. Rep.}, vol. 659, pp. 1 -- 44, 2016.

\bibitem{von2007tutorial}
U.~Von~Luxburg, ``A tutorial on spectral clustering,'' \emph{Stat. Comput.},
  vol.~17, no.~4, pp. 395--416, 2007.

\bibitem{pons2005computing}
P.~Pons and M.~Latapy, ``Computing communities in large networks using random
  walks,'' in \emph{Proc. 20th Int. Symp. Comput. Inform. Sci.}\hskip 1em plus
  0.5em minus 0.4em\relax Springer, 2005, pp. 284--293.

\bibitem{hajek2016semidefinite}
B.~Hajek, Y.~Wu, and J.~Xu, ``Semidefinite programs for exact recovery of a
  hidden community,'' in \emph{J. Mach. Learn. Res.}, vol.~49, 2016.

\bibitem{PhysRevE.70.066111}
A.~Clauset, M.~E.~J. Newman, and C.~Moore, ``Finding community structure in
  very large networks,'' \emph{Phys. Rev. E}, vol.~70, Dec. 2004.

\bibitem{pmlr-v37-yange15}
W.~Yang and H.~Xu, ``A divide and conquer framework for distributed graph
  clustering,'' in \emph{Proc. 32nd ICML}, ser. Proceedings of Machine Learning
  Research, F.~Bach and D.~Blei, Eds., vol.~37.\hskip 1em plus 0.5em minus
  0.4em\relax Lille, France: PMLR, 07--09 Jul 2015, pp. 504--513.

\bibitem{HOLLAND1983109}
P.~W. Holland, K.~B. Laskey, and S.~Leinhardt, ``Stochastic blockmodels: First
  steps,'' \emph{Soc. Netw.}, vol.~5, no.~2, pp. 109 -- 137, 1983.

\bibitem{CondonKarp:01}
A.~Condon and R.~M. Karp, ``Algorithms for graph partitioning on the planted
  partition model,'' \emph{Random Struct. Algor.}, vol.~18, no.~2, 2001.

\bibitem{Chen:2014:CPO:2627435.2670322}
Y.~Chen, A.~Jalali, S.~Sanghavi, and H.~Xu, ``Clustering partially observed
  graphs via convex optimization,'' \emph{J. Mach. Learn. Res.}, vol.~15,
  no.~1, pp. 2213--2238, Jan. 2014.

\bibitem{NIPS2014_5309}
R.~Korlakai~Vinayak, S.~Oymak, and B.~Hassibi, ``Graph clustering with missing
  data: Convex algorithms and analysis,'' in \emph{Proc. Adv. Neural Inf.
  Process. Syst.}, 2014, pp. 2996--3004.

\bibitem{Chen:2016:STP:2946645.2946672}
Y.~Chen and J.~Xu, ``Statistical-computational tradeoffs in planted problems
  and submatrix localization with a growing number of clusters and
  submatrices,'' \emph{J. Mach. Learn. Res.}, vol.~17, no.~1, Jan. 2016.

\bibitem{cai2015}
T.~T. Cai and X.~Li, ``Robust and computationally feasible community detection
  in the presence of arbitrary outlier nodes,'' \emph{Ann. Statist.}, vol.~43,
  no.~3, pp. 1027--1059, 2015.

\bibitem{7968311}
M.~Rahmani and G.~K. Atia, ``Spatial random sampling: A structure-preserving
  data sketching tool,'' \emph{IEEE Signal Process. Lett.}, vol.~24, no.~9, pp.
  1398--1402, Sep. 2017.

\bibitem{rahmani2017high}
------, ``High dimensional low rank plus sparse matrix decomposition,''
  \emph{IEEE Trans. Signal Process.}, vol.~65, 2017.

\bibitem{mackey2011divide}
L.~W. Mackey, M.~I. Jordan, and A.~Talwalkar, ``Divide-and-conquer matrix
  factorization,'' in \emph{Proc. Adv. Neural Inf. Process. Syst.}, 2011.

\bibitem{Candes:2011:RPC:1970392.1970395}
E.~J. Cand\`{e}s, X.~Li, Y.~Ma, and J.~Wright, ``Robust principal component
  analysis?'' \emph{J. ACM}, vol.~58, no.~3, pp. 11:1--11:37, Jun. 2011.

\bibitem{rahmani2015randomized}
M.~Rahmani and G.~Atia, ``Randomized robust subspace recovery and outlier
  detection for high dimensional data matrices,'' \emph{IEEE Trans. Signal
  Process.}, vol.~65, no.~6, Mar. 2017.

\bibitem{lamport8}
X.~Li and J.~Haupt, ``Identifying outliers in large matrices via randomized
  adaptive compressive sampling,'' \emph{IEEE Trans. Signal Process.}, vol.~63,
  no.~7, pp. 1792--1807, 2015.

\bibitem{lamport14}
N.~Halko, P.-G. Martinsson, and J.~A. Tropp, ``Finding structure with
  randomness: Probabilistic algorithms for constructing approximate matrix
  decompositions,'' \emph{SIAM review}, vol.~53, no.~2, pp. 217--288, 2011.

\bibitem{gu1996efficient}
M.~Gu and S.~C. Eisenstat, ``Efficient algorithms for computing a strong
  rank-revealing {QR} factorization,'' \emph{SIAM J. Sci. Comput.}, vol.~17,
  1996.

\bibitem{nguyen2009fast}
N.~H. Nguyen, T.~T. Do, and T.~D. Tran, ``A fast and efficient algorithm for
  low-rank approximation of a matrix,'' in \emph{Proc. 41st ACM Symp. Theory
  Comput.}, 2009, pp. 215--224.

\bibitem{rahmani2017spatial}
M.~Rahmani and G.~K. Atia, ``Spatial random sampling: A structure-preserving
  data sketching tool,'' \emph{IEEE Signal Process. Lett.}, vol.~24, no.~9, pp.
  1398--1402, 2017.

\bibitem{ailon2006approximate}
N.~Ailon and B.~Chazelle, ``Approximate nearest neighbors and the fast
  {Johnson-Lindenstrauss} transform,'' in \emph{Proc. 38th ACM Symp. Theory
  Comput.}, 2006, pp. 557--563.

\bibitem{ben2007framework}
S.~Ben-David, ``A framework for statistical clustering with constant time
  approximation algorithms for k-median and k-means clustering,'' \emph{Mach.
  Learn.}, vol.~66, no. 2-3, pp. 243--257, 2007.

\bibitem{czumaj2004sublinear}
A.~Czumaj and C.~Sohler, ``Sublinear-time approximation for clustering via
  random sampling,'' in \emph{Proc. 31st ICALP}, 2004.

\bibitem{Maiya:2010:SCS:1772690.1772762}
A.~S. Maiya and T.~Y. Berger-Wolf, ``Sampling community structure,'' in
  \emph{Proc. 19th Int'l Conf. World Wide Web}, 2010, pp. 701--710.

\bibitem{voevodski2010efficient}
K.~Voevodski, M.-F. Balcan, H.~Roglin, S.-H. Teng, and Y.~Xia, ``Efficient
  clustering with limited distance information,'' \emph{arXiv preprint}, 2010.

\bibitem{4781123}
C.~Hübler, H.~Kriegel, K.~Borgwardt, and Z.~Ghahramani, ``Metropolis
  algorithms for representative subgraph sampling,'' in \emph{Proc. 8th IEEE
  Int'l Conf. Data Mining}, Dec. 2008, pp. 283--292.

\bibitem{Leskovec:2006:SLG:1150402.1150479}
J.~Leskovec and C.~Faloutsos, ``Sampling from large graphs,'' in \emph{Proc.
  12th ACM SIGKDD}, 2006, pp. 631--636.

\bibitem{7511156}
R.~Gao, H.~Xu, P.~Hu, and W.~C. Lau, ``Accelerating graph mining algorithms via
  uniform random edge sampling,'' in \emph{IEEE ICC}, May 2016.

\bibitem{yun2014communityhu}
S.-Y. Yun and A.~Proutiere, ``Community detection via random and adaptive
  sampling,'' in \emph{Proc. Conf. Learn. Theory}, 2014, pp. 138--175.

\bibitem{doi:10.1063/1.4712602}
M.~Salehi, H.~R. Rabiee, and A.~Rajabi, ``Sampling from complex networks with
  high community structures,'' \emph{Chaos: An Interdisciplinary Journal of
  Nonlinear Science}, vol.~22, no.~2, 2012.

\bibitem{JMLR:v16:ailon15a}
N.~Ailon, Y.~Chen, and H.~Xu, ``Iterative and active graph clustering using
  trace norm minimization without cluster size constraints,'' \emph{J. Mach.
  Learn. Res.}, vol.~16, pp. 455--490, 2015.

\bibitem{959929}
F.~{McSherry}, ``Spectral partitioning of random graphs,'' in \emph{Proc. 42nd
  IEEE Symp. Found. Comput. Sci.}, Oct. 2001, pp. 529--537.

\bibitem{PhysRevE.84.066106}
A.~Decelle, F.~Krzakala, C.~Moore, and L.~Zdeborov\'a, ``Asymptotic analysis of
  the stochastic block model for modular networks and its algorithmic
  applications,'' \emph{Phys. Rev. E}, vol.~84, Dec. 2011.

\bibitem{Mossel2017}
E.~Mossel, J.~Neeman, and A.~Sly, ``A proof of the block model threshold
  conjecture,'' \emph{Combinatorica}, vol.~38, pp. 665--–708, Aug. 2017.

\bibitem{Massoulie:2014:CDT:2591796.2591857}
L.~Massouli{\'e}, ``Community detection thresholds and the weak ramanujan
  property,'' in \emph{Proc. 46th ACM Symp. Theory Comput.}, 2014, pp.
  694--703.

\bibitem{Mossel2015}
E.~Mossel, J.~Neeman, and A.~Sly, ``Reconstruction and estimation in the
  planted partition model,'' \emph{Probability Theory and Related Fields}, vol.
  162, no.~3, pp. 431--461, Aug. 2015.

\bibitem{Bansal2004}
N.~Bansal, A.~Blum, and S.~Chawla, ``Correlation clustering,'' \emph{Mach.
  Learn.}, vol.~56, no.~1, pp. 89--113, Jul. 2004.

\bibitem{NIPS2016_6574}
A.~Jalali, Q.~Han, I.~Dumitriu, and M.~Fazel, ``Exploiting tradeoffs for exact
  recovery in heterogeneous stochastic block models,'' in \emph{Proc. Adv.
  Neural Inf. Process. Syst.}, D.~D. Lee, M.~Sugiyama, U.~V. Luxburg, I.~Guyon,
  and R.~Garnett, Eds., 2016, pp. 4871--4879.

\bibitem{2018arXiv180908353I}
V.~N. {Ioannidis}, A.~S. {Zamzam}, G.~B. {Giannakis}, and N.~D. {Sidiropoulos},
  ``{Coupled Graphs and Tensor Factorization for Recommender Systems and
  Community Detection},'' \emph{arXiv e-prints}, Sep. 2018.

\bibitem{6873307}
Y.~Chen, S.~Sanghavi, and H.~Xu, ``Improved graph clustering,'' \emph{IEEE
  Trans. Inf. Theory}, vol.~60, no.~10, pp. 6440--6455, Oct. 2014.

\bibitem{6482137}
E.~Elhamifar and R.~Vidal, ``Sparse subspace clustering: Algorithm, theory, and
  applications,'' \emph{IEEE Trans. Pattern Anal. Mach. Intell.}, vol.~35,
  no.~11, pp. 2765--2781, Nov. 2013.

\bibitem{Cover:2006:EIT:1146355}
T.~M. Cover and J.~A. Thomas, \emph{Elements of Information Theory}.\hskip 1em
  plus 0.5em minus 0.4em\relax Wiley-Interscience, 2006.

\bibitem{PhysRevE.83.016107}
B.~Karrer and M.~E.~J. Newman, ``Stochastic blockmodels and community structure
  in networks,'' \emph{Phys. Rev. E}, vol.~83, 2011, {Art. no. 016107}.

\bibitem{chen2018}
Y.~Chen, X.~Li, and J.~Xu, ``Convexified modularity maximization for
  degree-corrected stochastic block models,'' \emph{Ann. Statist.}, vol.~46,
  no.~4, pp. 1573--1602, Aug. 2018.

\bibitem{Adamic:2005:PBU:1134271.1134277}
L.~A. Adamic and N.~Glance, ``The political blogosphere and the 2004 {U.S.}
  election: Divided they blog,'' in \emph{Proc. 3rd Int. Workshop on Link
  Discovery}, 2005, pp. 36--43.

\bibitem{1580791}
E.~J. Cand\`es, J.~Romberg, and T.~Tao, ``Robust uncertainty principles: exact
  signal reconstruction from highly incomplete frequency information,''
  \emph{IEEE Trans. Inf. Theory}, vol.~52, no.~2, pp. 489--509, Feb. 2006.

\bibitem{McDiarmid1998}
C.~McDiarmid, ``Concentration,'' in \emph{Probabilistic Methods for Algorithmic
  Discrete Mathematics}.\hskip 1em plus 0.5em minus 0.4em\relax Springer Berlin
  Heidelberg, 1998.

\end{thebibliography}
\bibliographystyle{IEEEtran}

\end{document}